\DeclareFontFamily{U}{dutchcal}{\skewchar\font=45 }
\DeclareFontShape{U}{dutchcal}{m}{n}{<-> s*[1.0] dutchcal-r}{}
\DeclareFontShape{U}{dutchcal}{b}{n}{<-> s*[1.0] dutchcal-b}{}
\DeclareMathAlphabet{\mathlcal}{U}{dutchcal}{m}{n}
\SetMathAlphabet{\mathlcal}{bold}{U}{dutchcal}{b}{n}
\theoremstyle{plain}
\newtheorem{theorem}{Theorem}[section]
\newtheorem{lemma}[theorem]{Lemma}
\newtheorem{proposition}[theorem]{Proposition}
\newtheorem{proposition*}{Proposition}
\newtheorem{corollary}[theorem]{Corollary}
\theoremstyle{definition}
\newtheorem{example}{Example}
\theoremstyle{remark}
\newtheorem*{remark*}{Remark}
\numberwithin{equation}{section}
\newcommand\thickbar[1]{\accentset{\rule{.4em}{.5pt}}{#1}}
\title[Infinite-dimensional Gibbs algebras]{Infinite-dimensional genetic and evolution algebras generated by Gibbs measures}
\author[C.F. Coletti \and L.R. de Lima \and D.A. Luiz]{Cristian F. Coletti \and Lucas R. de Lima \and Denis A. Luiz}
\address{Centro de Matem\'atica, Computa\c{c}\~ao e Cogni\c{c}\~ao, Universidade Federal do ABC\\
Av. dos Estados, 5001\\
09210-580 Santo Andr\'e, S\~ao Paulo\\
Brazil.}
\email{cristian.coletti@ufabc.edu.br}
\email{denis.luiz@ufabc.edu.br, denis.aluiz@gmail.com}
\address{Institut of Mathematics and Statitics, University of S\~ao Paulo, Rua do Mat\~ao, 1010, 05508-090 S \~ao Paulo -SP, Brazil.}
\email{lrdelima@ime.usp.br, lrdelimath@gmail.com}
\keywords{Genetic Algebra, Evolution Algebra, Gibbs Measures, Infinite Dimension}
\subjclass[2020]{17D92, 82B20}
\thanks{{\bf Funding:} This study was financed in part by the Coordenação de Aperfeiçoamento de Pessoal de Nível Superior - Brasil (CAPES) - Finance Code 001. It was also supported by grants \#2017/10555-0 and \#2019/19056-2 S\~ao Paulo Research Foundation (FAPESP)}
\begin{document}

\nocite{*}

\begin{abstract}
    Genetic and evolution algebras arise naturally from applied probability and stochastic processes. Gibbs measures describe interacting systems commonly studied in thermodynamics and statistical mechanics with applications in several fields. Here, we consider that the algebras are determined by configurations of finite spins on a countable set with their associated Gibbs distributions. The model preserves properties of the finite-dimensional Gibbs algebras found in the literature and extend their results. We introduce infertility in the genetic dynamics when the configurations differ macroscopically. It induces a decomposition of the algebra into a direct sum of fertile ideals with genetic realization.
    
    The proposed infinite-dimensional algebras are commutative, non-associative, with uncountable basis and zero divisors. The properties of Gibbs measures allow us to deal with the difficulties arising from the algebraic structure and obtain the results presented in this article.
\end{abstract}

\maketitle


\section{Introduction}

This work has one main purpose consisting in developing a theory of infinite-dimensional genetic (and evolution) algebras which derives from particle configurations corresponding to infinite volume Gibbs measures. We also illustrate the connection between probability measures and algebras. 

On one hand, genetic algebras and evolution algebras are commonly non-associative algebras that have received much attention in the recent past, and it is still a rich and fruitful area of research nowadays. These kind of algebras appeared as a tool developed to understand the evolution laws of genetics. The study of population genetics and its connection with algebras began in 1924 with the work of Bernstein \cite{bernstein1924} and his study about evolution operators. Indeed, in the case when the inheritance can be described by an evolution operator we can associate an algebra to each population with a finite number of genotypes. Etherington \cite{etherington1939} made use of finite-dimensional non-associative algebras to study Mendelian population genetics. In 1981, Holgate \cite{holgate1981} addressed the problem of studying the stochastic population process by relating a non-associative algebra capable of describing the composition of an infinitely large population. Infinite-dimensional genetic  algebras were also further investigated by Holgate \cite{holgate1989} and W\"ortz-Busekros \cite{woertz-busekros1980}. Recently, in 2022, Vidal et al. \cite{vidal2022hilbert,vidal2022hilbert-Hamel} proposed models of Hilbert evolution algebras in a infinite-dimensional setting.

On the other hand,  Gibbs distribution is a central concept in statistical mechanics used to relate microscopic and macroscopic large systems of interacting particle processes. Indeed, infinite-volume Gibbs measures arise naturally, but not trivially, in the study of infinite systems at equilibrium.

Ganikhodzhaev and Rozikov \cite{ganikhodzhaev2006} introduced an evolutionary operator generated by finite-volume Gibbs measures that has an underlying associated genetic algebra. Rozikov and Tian \cite{rozikov2011} proceeded similarly to define finite-dimensional evolution Gibbs algebras.  They also gave guidelines to define what it would possibly be an infinite dimensional algebra determined by Gibbs measures. However, their proposal entails three main issues:
\begin{enumerate}[(i)]
    \item There is no compatibility between the previously defined finite-dimensional algebras and the proposed infinite-dimensional algebras;
    \item There is a boundary effect acting on the $\Lambda_n \uparrow \mathbb{L}$ that affects the existence and/or uniqueness of the limits that determine the structure coefficients of the algebra;
    \item The product is not well-defined even when the limit exists due to the resulting infinite sums.
\end{enumerate}

To address the concerns highlighted above, we preserve the framework they used for finite-volume Gibbs measures introducing infertility into the reproduction dynamics. We allow only configurations that differ by a finite number of sites to produce offspring. This approach is similar to that employed in a model of sex-linked inheritance presented in W\"ortz-Busekros \cite[Sec.~8.B]{woertz-busekros1980} and it reflects the nature of Gibbs measures, which are commonly described by their microscopic components. 

We emphasise that some of the issues presented above are not necessarily a problem for the operators in \cite{ganikhodzhaev2006}, because their main goal is not to work with the associated algebras. In fact, the definition of the algebra that we propose here is not suitable to study their stochastic operators on the whole algebra, but it still possible to operate them when they are defined on the Markov ideals or on other subalgebras.


\subsection{On Gibbs measures} \label{sec:Gibbs}

The theory of Gibbs measures plays a central role in equilibrium statistical mechanics. They are probability measures that describe the collective macroscopic behavior of a system based on information from its interacting microscopic components.
In this section we provide a short introduction to Gibbs measures. For a deeper discussion we refer the reader to Georgii \cite{georgii2011}.

Let $(S,\mathscr{S})$ be a measurable space and regard $S$ as the spin space of a countable set $\mathbb{L}$. From now on we assume $S$ to be finite and $\mathscr{S} = 2^S$, the power set of $S$. The following definitions and results rely on the finiteness of $S$ and the counting measure $\lambda$ on $(S,\mathscr{S})$. A configuration $\sigma$ is an element of $\Omega = {S}^\mathbb{L}$. Write $(\Omega, \mathscr{F})$ for the product measurable space with $\mathscr{F}= \mathscr{S}^{\otimes \mathbb{L}}$. Set $\mathscr{F}_L = \mathscr{S}^{\otimes L}$ for $L \subseteq \mathbb{L}$ and let $\mathcal{L}$ be the set of all finite subsets of $\mathbb{L}$.

The \emph{interaction potential} $\Phi$ (or \emph{potential} for short) is a family of functions $\Phi=(\Phi_A)_{A\in \mathcal{L}}$ such that $\Phi_A: \Omega \to \mathbb{R}$ is $\mathscr{F}_A$-measurable and, for all $\Lambda \in \mathcal{L}$ and all $\sigma\in \Omega$, the \emph{total energy} of $\sigma$ in $\Lambda$ for $\Phi$
\begin{equation*}
    H_\Lambda^\Phi(\sigma) := \sum_{\substack{A \in \mathcal{L}, ~A \cap \Lambda \neq \emptyset}}\Phi_A(\sigma)
\end{equation*}
exists. We call $H_\Lambda^\Phi$ the \emph{Hamiltonian} in $\Lambda$ for $\Phi$. The potential is \textit{admissible} when, for all $\sigma \in \Omega$ and all $\Lambda \in \mathcal{L}$, $H_\Lambda^\Phi(\sigma)$ is finite and converges unconditionally, meaning it does not depend on the ordering of $\mathcal{L}$. The Boltzmann-Gibbs distributions are usually written in terms of $\exp(-\beta H)/Z$ on the phase space. In that case, $\beta$ is the inverse temperature $1/(kT)$ and $Z$ is a partition function. Here we let the Hamiltonian absorb $\beta$ by setting $\beta=1$.

Let $\sigma_L \in S^{L}$ denote the restriction of $\sigma$ to $L \subseteq \mathbb{L}$. Consider $\zeta \in S^L$ and $\xi \in S^{L'}$ with $L \cap L' = \emptyset$, then we write $\zeta \xi$ for the configuration in $S^{L \dot\cup L'}$  such that $(\zeta\xi)(x) = \zeta(x)$ when $x \in L$ and $(\zeta\xi)(x) = \xi(x)$ if $x \in L'$.

Denote by $h_\Lambda^\Phi(\sigma) := \exp(-H_\Lambda^\Phi(\sigma))$ the Boltzmann factor. Consider $\Lambda^c$ as $\mathbb{L} \setminus \Lambda$ for all $\Lambda\in\mathcal{L}$. Define the partition function $Z_\Lambda^\Phi$ for $\Lambda$ and $\Phi$ on $\Omega$ to be given by
\begin{equation*}
    Z_\Lambda^\Phi (\sigma):= \sum_{\zeta \in S^\Lambda} h_\Lambda^\Phi(\zeta\sigma_{\Lambda^c}).
\end{equation*}

Let $\eta \in \Omega$ and  $\Lambda \in \mathcal{L}$. Consider $\Phi$ an admissible potential. Denote by $\mathds{1}$ the indicator function. Then the measure $\gamma_\Lambda^\Phi(\cdot\mid\eta)$ on $(\Omega,\mathscr{F})$ defined by
\begin{equation}\label{eq:DLR}
    \gamma_\Lambda^\Phi (\mathsf{E} \mid \eta) :=\frac{1}{Z_\Lambda^\Phi(\eta)}\sum_{\zeta \in S^\Lambda} h_\Lambda^\Phi(\zeta\eta_{\Lambda^c}) \mathds{1}_\mathsf{E}(\zeta\eta_{\Lambda^c})  \quad \text{for all } \mathsf{E} \in \mathscr{F} 
\end{equation}
is called the \emph{Gibbs distribution} with boundary condition $\eta_{\Lambda^c}$, and an admissible $\Phi$.

Set $\mathscr{T}_\Lambda := \mathscr{F}_{\mathbb{L}\setminus\Lambda}$to be the external $\sigma$-algebra of $\Lambda$. The \emph{tail $\sigma$-algebra} is given by 
\[\mathscr{T} := \bigcap_{\Lambda \in \mathcal{L}} \mathscr{T}_{\Lambda}.\]

Recall that if $\nu$ is a probability measure on $(\Omega,\mathscr{F})$ and $\mathscr{H} \subseteq \mathscr{F}$ is a sub-$\sigma$-algebra, then $\nu(\mathsf{E}\mid\mathscr{H}) := \mathbbm{E}_\nu[\mathds{1}_\mathsf{E}\mid\mathscr{H}]$.

A \emph{Gibbs measure} $\mu$ on $(\Omega,\mathscr{F})$ for an admissible $\Phi$ is a random field such that
\begin{equation*}
\mu(\mathsf{E}\mid\mathscr{T}_\Lambda) = \gamma_\Lambda^\Phi(\mathsf{E}\mid\cdot) \quad\mu\text{-a.s.} \quad\text{for all }\mathsf{E} \in \mathscr{F}\text{ and }\Lambda \in \mathcal{L}.
\end{equation*}

The set of Gibbs measures for $\Phi$ is denoted by $\mathscr{G}(\Phi)$. The set $\mathscr{G}(\Phi)$ is non-empty  and compact for any admissible $\Phi$ (see Theorem (4.23) of \cite{georgii2011} with $\lambda$ the counting measure on $(S, 2^S)$). An interaction potential $\Phi$ is said to exhibit a \emph{phase transition} when $\vert\mathscr{G}(\Phi)\vert>1$. Therefore a potential $\Phi$ may be associated with multiple Gibbs measures that locally share the same Gibbs distribution. It is also possible that a Gibbs measure $\mu$ on $(\Omega,\mathscr{F})$ is such that $\mu \in \mathscr{G}(\Phi) \cap \mathscr{G}(\Psi)$ for two distinct potentials $\Phi$ and $\Psi$. These properties highlight the need to explicitly declare both the measure and the associated potential in our further constructions.

\begin{example}[The Potts' model without external field] \label{ex:potts}
    The Potts model is a classic and relatively simple model which has several applications in different areas (see \cite{rozikov2022}, for instance). Let $\mathcal{G}= (\mathbb{L},E)$ be a countable graph and consider $S=\{1,2, \dots, q\}$ with $q \ge 2$. The interaction potential $\Phi$ for the \textit{Potts' model} with inverse temperature $\beta \geq 0$ is given by
    \[\Phi_A(\sigma)= \left\{\begin{array}{rl}
      -\beta ~\mathds{1}_{\{0\}}\big(\sigma(x)-\sigma(y)\big), & A= \{x,y\} \in E, x \neq y;   \\
      0,   & \text{otherwise.} 
    \end{array}\right.\]
    This is a generalization of the Ising ferromagnetic model which corresponds to the case where $q=2$ and $S=\{1,2\}$ is identified with $\{-,+\}$.
    
    Consider $\mathcal{G}$ to be the square lattice by setting $\mathbb{L}= \mathbb{Z}^2$ and $E=\big\{\{x,y\} \in \mathbb{Z}^2: \|x-y\|=1\big\}$. It is a well known fact that there exists $\beta_c>0$ such that $\vert\mathscr{G}(\Phi)\vert=1$ when $\beta<\beta_c$ and $\vert \mathscr{G}(\Phi)\vert >1$ when $\beta\geq\beta_c$ on $\mathcal{G}$ (see \cite[Sec. ~18.3.5; p.485]{georgii2011} and references therein). \hfill $\blacktriangleleft$ 
\end{example}

To avoid a cumbersome notation, let us denote, for all $\eta,\xi \in \Omega$ and all $\Lambda \in \mathcal{L}$, 
\[\mu_\Lambda(\mathsf{E}\mid \xi):= \mu(\mathsf{E}\mid \xi_{\Lambda^c})= \gamma_\Lambda^\Phi(\mathsf{E}\mid \xi) \quad \text{and} \quad \mu_\Lambda(\eta\mid \xi):=\mu_\Lambda(\{\eta\}\mid \xi).\]

An immediate consequence of the finiteness of $S$ is that a Gibbs measure $\mu$ with boundary conditions assigns strictly positive probability to each configuration in the cylinder set. In other words, if $\eta \in \mathcal{C}_\Lambda[\xi]:= \{\sigma \in \Omega : \sigma_{\Lambda^c} = \xi_{\Lambda^c}\}$, then $\mu_{\Lambda}(\eta\mid \xi) >0$.  The correspondence established by the Hammersley–Clifford, Sullivan, and Kozlov theorems between Gibbs measures and Markov random fields has even stronger consequences, ensuring that only Gibbs measures possess this property (see \cite{kozlov1974,kozlov1977,sullivan1973}). This is a central fact that we use to define the genetic and evolution algebras.

In the case that $\mathbb{L}$ is finite it follows that $\mu = \mu_{\mathbb{L}}$ since $\gamma_{\mathbb{L}}^\Phi(\mathsf{E}\mid \cdot)$ is constant for each fixed $\mathsf{E} \in \mathscr{F}$. Therefore,\[\mu(\mathsf{E}) = \frac{1}{Z_\mathbb{L}^\Phi}\sum_{\zeta \in \mathsf{E}} h_{\mathbb{L}}^\Phi(\zeta)  \quad \text{for all } \mathsf{E} \in \mathscr{F} \text{ when }\vert\mathbb{L}\vert <+\infty. \]
One can easily see that a simple consequence of $\mathbb{L}$ being finite is that $\vert\mathscr{G}(\Phi)\vert =1$.

Two potentials $\Phi$ and $\Psi$ are \emph{equivalent} if, for all $\Lambda \in \mathcal{L}$, the Hamiltonian $H_\Lambda^{\Phi - \Psi}$ is $\mathscr{T}_\Lambda$-measurable on $\Omega$. Let $\Phi \sim \Psi$ stand for the equivalence between the potentials $\Phi$ and $\Psi$. In particular, if $\Phi \sim \Psi$, then, for all $\Lambda \in \mathcal{L}$
\begin{equation} \label{eq:equiv.potentials}
    h_\Lambda^\Phi = h_\Lambda^{\Phi-\Psi}h_\Lambda^{\Psi}\quad \text{and} \quad Z_\Lambda^\Phi = h_\Lambda^{\Phi-\Psi}Z_\Lambda^{\Psi}.
\end{equation}
We will later explore the effect of this equivalence relation on genetic and evolution algebras generated by measures in $\mathscr{G}(\Phi)$ and $\mathscr{G}(\Psi)$.

\begin{example}[Equivalent interaction potentials]
    Consider $\{c_A\}_{A\in\mathcal{L}}$ a sequence of constants such that $\sum_{A \in \mathcal{L}} \vert c_A\vert < +\infty$ unconditionally. Let $\Phi$ be an admissible interaction potential on $\Omega=S^\mathbb{L}$. Define $\Psi$ to be given by
    \[\Psi_A:= \Phi_A + c_A.\]

    Then $\Psi$ is admissible since $\vert H_\Lambda^\Psi\vert  \leq \vert H_\Lambda^\Phi\vert  + \sum_{A \in \mathcal{L}} \vert c_A\vert  <+\infty$ for all $\Lambda \in \mathcal{L}$. Note that $H_\Lambda^{\Phi-\Psi} \equiv \sum_{A \cap \Lambda \neq \emptyset} c_A$ is constant, which implies that $H_\Lambda^{\Phi-\Psi}$ is $\mathscr{T}_\Lambda$-measurable. Therefore, $\Phi \sim \Psi$. \hfill $\blacktriangleleft$ 
\end{example}

Most of models that define a Gibbs measure are associated with graphs. In fact, it is common to consider that the interaction potential is intrinsically related to a graph structure. Here, we define a notation that resembles the structure of a graph while still keeps the potential generic. Let the \textit{interaction potential boundary} of a set $L \subseteq \mathbb{L}$ be given by
\[\partial_\Phi L := \bigcup_{\substack{A \in \mathcal{L}, ~A \cap L \neq \emptyset\\\Phi_A \not \equiv 0}} A \setminus L,\]
and the \textit{interaction potential closure} to be the set $cl_\Phi (L) := L ~\dot\cup~ \partial_\Phi L$. Let the $\Phi$\textit{-neighbourhood} of $x$ be defined by \[\mathbf{n}_x^\Phi := \big\{ \{x,y\}: y\in\partial_\Phi \{x\} \big\}.\] A potential $\Phi$ has \emph{finite range} if, for all $x \in \mathbb{L}$, $\mathbf{n}_x^\Phi \in \mathcal{L}$. Note that now the potential actually determines a graph $\mathcal{G}_\Phi =(\mathbb{L},E_\Phi)$ with $E_\Phi= \big\{\{x,y\}: x \in \mathbb{L}, y \in \mathbf{n}_x^\Phi\big\}$. Therefore $x\sim y$ in $\mathcal{G}_\Phi$ indicates that it is possible for the spin at $x$ interact with the spin at $y$ to assign a Gibbs measure $\mu \in \mathscr{G}(\Phi)$ to the configurations of $\Omega$. 

One of the central elements to define the algebras in the next sections will be $\mathfrak{D}_{\sigma\eta}$ that stands for the set of discrepancies of a pair $\sigma, \eta \in \Omega$, which is given by \[\mathfrak{D}_{\sigma\eta} := \{x \in \mathbb{L}:\sigma(x) \neq \eta(x)\}.\]
We say that $\sigma$ and $\eta$ differ microscopically when $\mathfrak{D}_{\sigma\eta} \in \mathcal{L}$; otherwise they differ macroscopically.

The basic algebraic concepts that determines the structures studied in this article are presented in the following subsections.


\subsection{On genetic algebras} \label{sec:genetic.algebras}

There is a wide variety of algebras in mathematical genetics that characterize reproduction
laws and genetic changes in a population. The genetic algebras define a large class of algebraic structures that describes genetic inheritance in successive generations. For a deeper discussion on algebras in genetics, we refer the reader to W\"orz-Busekros \cite{woertz-busekros1980} or Rozikov \cite{rozikov2020}.  

Here we focus on \textit{gametic algebras} which will be simply referred as \textit{genetic algebras}. We will also allow some abstract generalizations preserving the basic biological motivation. Let the genetic algebra $\mathlcal{A}$ be defined as a $\mathbb{K}$-module generated by the Hamel basis $\{e_i\}_{i\in I}$ with product given by bilinear extension of
\[e_i \cdot e_j = \sum_{k \in I} a_{ij,k}e_k.\]
We consider in this article $\mathbb{K}=\mathbb{R}$ or $\mathbb{K}=\mathbb{C}$ with $a_{ij,k} \in \mathbb{R}$ for all $i,j \in I$. Furthermore, the fertility rates are given by $\sum_{k \in I} a_{ij,k} \in [0,1]$. In case of symmetric inheritance, $a_{ij,k}=a_{ji,k}$ for all $i,j \in I$, which implies $\mathlcal{A}$ to be commutative but not necessarily associative. An algebra is called \textit{Markov} (or with \textit{genetic realization}) when the fertility rates $\sum_{k \in I} a_{ij,k} =1$ for all $i,j \in I$. Therefore, the structure coefficients $a_{ij,k}$ serve as \textit{transition probabilities} of a Markov operator when $\mathlcal{A}$ is Markov, as follows. Consider a Markov chain associated with each $e_i$. The probability of $e_i$ producing $e_k$ in the next time step (or generation) when it couples with $e_j$ is $P_i(k\mid j)= a_{ij,k}$. This is the motivation for calling them Markov algebras and setting the structure coefficients to be real numbers, extending the concept to more general genetic algebras.

The biological interpretation may regard $\{e_i\}_{i \in I}$ as a set of genetically distinct gametes. A union of two gametes $e_i$ and $e_j$ forms a zygote $e_ie_j$ that produces $\widetilde{a}_{ij,k}$ gametes $e_k$ which survive the next generation. Let $m_{ij}:= \sum_{k \in I}\widetilde{a}_{ij,k}$ and fix $a_{ij,k} = \widetilde{a}_{ij,k}/m_{ij}$ when $m_{ij} \in (0,+\infty)$. Define $a_{ij,k}=0$ for all $k \in I$ in the case that $m_{ij}=0$. Then $a_{ij,k}$ stands for the probability that a gamete $e_l$ is equal to $e_k$ given that it was produced by the zygote $e_ie_j$. 

Considering this biological interpretation above, the fertility rates are either $0$ or $1$. Hence, we may allow infertility of a $e_ie_j$. It is worth to point out that this interpretation could be changed (see \cite[Sec. 1.A.]{woertz-busekros1980}) to exhibit fertility rates in the interval $[0,1]$. However, later in the text the structure coefficients will be defined by probability measures and not by the number of descendants. This brings us to the interpretation presented above.

Let $\operatorname{hom}(\mathlcal{A}',\mathlcal{A}'')$ denote the set of \textit{algebra homomorphisms} from $\mathlcal{A}'$ to $\mathlcal{A}''$, \textit{i.e.}, the set of linear maps $\varphi$ such that $\varphi(u\cdot v)=\varphi(u)\cdot \varphi(v)$. If the $\mathbb{K}$-algebra $\mathlcal{A}$ admits a non-zero $w \in \operatorname{hom}(\mathlcal{A},\mathbb{K})$, the ordered pair $(\mathlcal{A},w)$ is called a \textit{weighted algebra} and $w$ a \textit{weight homomorphism}. Weighted algebras are essential to define other types of genetic algebras (e.g., baric, Bernstein, and train algebras) and explore their algebraic properties.

\subsection*{Finite-dimensional Gibbs genetic algebras} \label{sec:fin.dim.Gibbs.gen}

The following construction is adapted from Ganikhodzhaev and Rozikov \cite{ganikhodzhaev2006}. They studied evolutionary stochastic operators determined by Gibbs measures on finite lattices. The underlining algebra is described below and it corresponds to the evolutionary state of free population through generations.

Consider $\mathcal{G}= (\mathbb{L},E)$ a graph with $\mathbb{L}$ finite and $S$ the finite spin set. We proceed with the construction of the finite Gibbs measures given in \cref{sec:Gibbs}. Let $C(x) \subseteq \mathbb{L}$ denote the connected component of $x$ in $\mathcal{G}$. Define $\mathscr{C}= \{C(x): x\in\mathbb{L}\}$ the set of connected components of $\mathcal{G}$. Then  $\mathscr{C}$ is a partition of $\mathbb{L}$ and we will later call $\Delta \in \mathscr{C}$ a \textit{cluster}.

The next generation of a pair of configurations $\sigma,\eta \in \Omega$ is given by the configurations that coincide with either $\sigma$ or $\eta$ on each cluster $\Delta \in \mathscr{C}$. 
Hence, the offspring produced by $\sigma,\eta \in \Omega$ is given by the set
\[\thickbar{\Omega}_{\sigma\eta}=\thickbar{\Omega}_{\sigma\eta}(\mathscr{C}) := \big\{ 
\zeta \in \Omega : \forall \Delta \in \mathscr{C}\big(\zeta_\Delta = \sigma_\Delta \text{ or } \zeta_\Delta = \eta_\Delta \big) \big\}.\]

Let us now define $\mathlcal{A}_f(\mathscr{C},\mu, \Omega)$ as the finite-dimensional algebra generated by the Gibbs measure $\mu$ to be free $\mathbb{K}$-module generated by $\{e_\sigma\}_{\sigma\in \Omega}$ and product given by bilinear extension of
\[e_\zeta \cdot e_\eta = \sum_{\sigma \in \thickbar{\Omega}_{\zeta\eta}} a_{\zeta\eta,\sigma}e_\sigma \quad\text{with}\quad a_{\zeta\eta,\sigma} = \frac{\mu(\sigma)}{\mu(\thickbar{\Omega}_{\zeta\eta})}.\]


\subsection{On evolution algebras}

This class of algebras was first introduced to study self-reproduction of alleles in non-Mendelian genetics.  Its particular structure made it interesting for many other
mathematical fields, such as graph theory, stochastic processes,  and mathematical physics. The best general reference here is Tian \cite{tian2018}. Let the evolution algebra $\mathlcal{E}$ be defined as the $\mathbb{K}$-module generated by $\{e_i\}_{i\in I}$ with product given by bilinear extension of 
\[e_i \cdot e_j = \left\{\begin{array}{cc}
    \sum_{k \in I} b_{ik}e_k, & \text{if } i=j \\
    0 ,& \text{otherwise.}
\end{array}\right.\]

Observe that they are a specific type of the genetic algebras defined in \cref{sec:genetic.algebras}. This can be verified by setting $a_{ij,k} = \delta_{ij}b_{ik}$. 

Their algebraic structure makes them particularly interesting and they stand out among other genetic algebras. For instance, they may exhibit algebraic persistency, algebraic transiency, and algebraic periodicity. Note that the evolution algebras are commutative and it suffices to exist $e_i \neq e_j$ such that $e_i^2 \cdot e_j \neq 0$ for the algebra to be non-associative.

\subsection*{Finite-dimensional Gibbs evolution algebras} 

In a later study, ozikov and
Tian \cite{rozikov2011} defined finite dimensional evolution algebras generated by Gibbs measures. The algebras are similar to the genetic algebras defined above, but now the ordered pairs of configurations in $\Omega^2$ define its generating set.

The offspring now is given by each $(\sigma,\eta) \in \Omega^2$ producing $\thickbar{\Omega}_{\sigma\eta}^2$ (recall the notation defined in \cref{sec:fin.dim.Gibbs.gen}). We write $\mu^{\otimes 2}:= \mu \otimes\mu$ for the product measure on $\Omega^2$. Let $\mathlcal{E}_f(\mathscr{C},\mu,\Omega)$ stand for the finite-dimensional algebra generated by Gibbs measure $\mu$ defined as the free $\mathbb{K}$-module generated by $\{e_{\sigma\eta}\}_{(\sigma,\eta)\in \Omega^2}$ with product given by bilinear extension of
\[e_{\zeta\xi} \cdot e_{\zeta\xi} = \sum_{(\eta,\sigma) \in \thickbar{\Omega}_{\zeta\xi}^2} b_{\zeta\xi,\eta\sigma}e_{\eta\sigma} \quad\text{with}\quad b_{\zeta\xi,\eta\sigma} = \frac{\mu(\eta)\mu(\sigma)}{\mu^{\otimes2}\left(\thickbar{\Omega}_{\zeta\eta}^2\right)},\]
and $e_{\zeta\xi} \cdot e_{\zeta'\xi'} = 0$ when $(\zeta,\xi) \neq (\zeta',\xi')$.


 \section{Infinite-dimensional Gibbs genetic algebras} \label{sec:infinite.dim.genetic.algebras}

The interacting microscopic components from the Gibbs formalism will be used to define a genetic dynamics its configuration space. Consider that the offspring is determined by replicating the configuration of one of the parents in the given clusters (also \textit{loci set} or \textit{connected component}) of the system. The configurations that differ macroscopically do not produce offspring. Thus only configurations with finite discrepancy sets are allowed to reproduce, \textit{i.e.}, $\sigma, \eta\in \Omega$ such that $\mathfrak{D}_{\sigma\eta} \in \mathcal{L}$ (recall notation from \cref{sec:Gibbs}).

From now on, we define the configuration space $\Omega = S^\mathbb{L}$ with  $S$ finite and $\mathbb{L}$ countable. Let $\mathscr{C}$ be a partition of $\mathbb{L}$ so that each $\Delta \in \mathscr{C}$ is called a cluster. The set of the offspring produced by a pair $\sigma, \eta \in \Omega$ is defined by
\[\Omega_{\sigma\eta} = \Omega_{\sigma\eta}(\mathscr{C}) := \big\{\zeta \in \Omega : \forall \Delta \in \mathscr{C}\big(\zeta_\Delta = \sigma_\Delta \text{ or } \zeta_\Delta = \eta_\Delta \big) \text{ and }\mathfrak{D}_{\sigma\eta} \in \mathcal{L}\big\}.\] 
\begin{figure}[!htb]
    \centering
    \includegraphics[width=280pt]{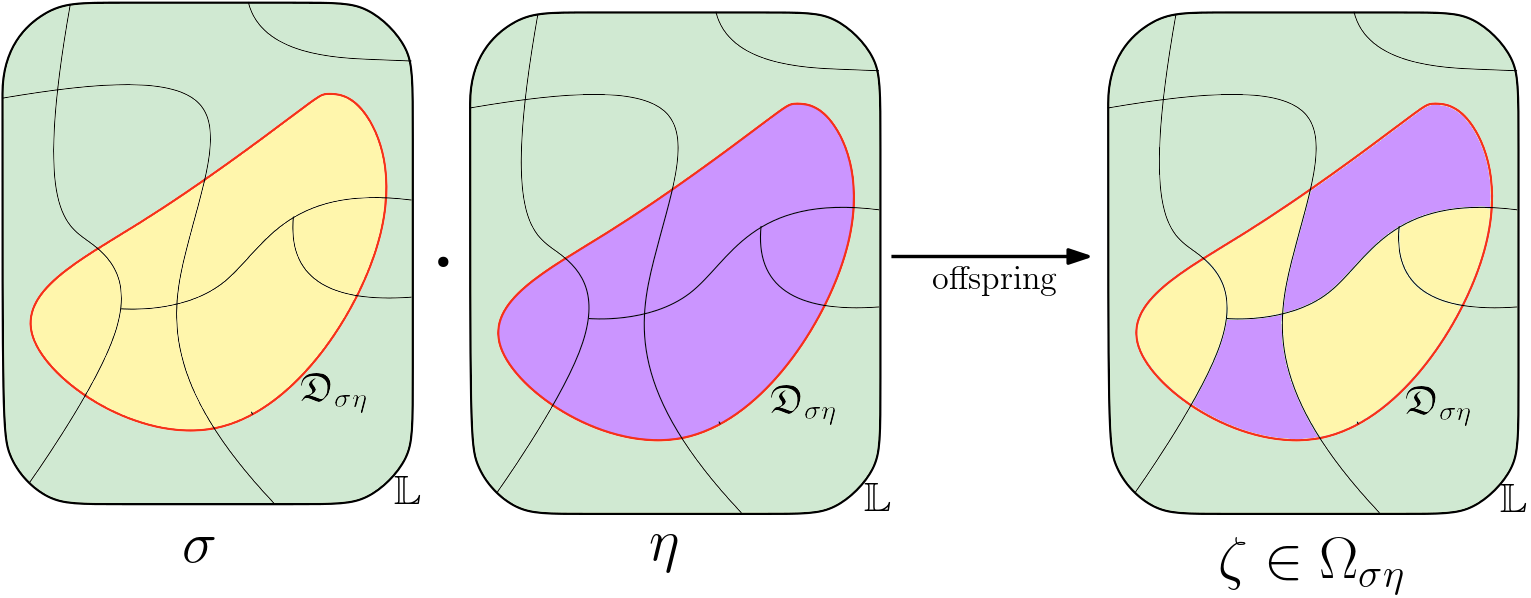}
    \caption{Offspring alternating configurations in the clusters intersecting $\mathfrak{D}_{\sigma\eta}$ (the set of parental discrepancies).}
    \label{fig:offspring}
\end{figure}

Let $\langle \mathfrak{B} \rangle$ denote the $\mathbb{K}$-module spanned by $\mathfrak{B}$ as a Hamel (algebraic) basis. Consider $\mathfrak{B}_{I}:= \{e_i\}_{i \in I}$ to be a basis indexed by a set $I$ and fix $\mathbb{K}$ to be equal to $\mathbb{R}$ or $\mathbb{C}$. The $\mathscr{C}$-genetic Gibbs algebra generated by $\mu \in \mathscr{G}(\Phi)$ on $(\Omega,\mathscr{F})$ is the free $\mathbb{K}$-module $\mathcal{A}(\mathscr{C},\mu,\Phi,\Omega)=\left\langle \mathfrak{B}_{\Omega} \right\rangle$ with product given by bilinear extension of
\[e_\zeta \cdot e_\eta = \sum_{\sigma \in \Omega_{\zeta\eta}} c_{\zeta \eta, \sigma} e_\sigma\] where
\[c_{\zeta \eta, \sigma} = \frac{\mu(\sigma\mid \zeta_{(\mathfrak{D}_{\zeta\eta})^c})}{\mu(\Omega_{\zeta\eta}\mid \zeta_{(\mathfrak{D}_{\zeta\eta})^c})} = \frac{\gamma_{\mathfrak{D}_{\zeta\eta}}^\Phi(\sigma\mid \zeta)}{\gamma_{\mathfrak{D}_{\zeta\eta}}^\Phi(\Omega_{\zeta\eta}\mid \zeta)} = \frac{h_{\mathfrak{D_{\zeta \eta}}}^\Phi(\sigma)}{\sum_{\xi \in \Omega_{\zeta\eta}}h_{\mathfrak{D_{\zeta \eta}}}^\Phi(\xi)}\]
when $\sigma \in \Omega_{\zeta\eta}$. We may write $c_{\zeta \eta, \sigma}=0$ when $\sigma \not\in \Omega_{\zeta\eta}$. Define for all $\eta \in \Omega$
\[\mathtt{E}^\eta := \big\{\sigma \in \Omega : \vert\mathfrak{D}_{\sigma\eta}\vert< +\infty \big\}, \text{ and} \quad \mathtt{F}^\eta:=\left\langle \mathfrak{B}_{\mathtt{E}^\eta} \right\rangle.\]
We call $\mathtt{E}^\eta$ the $\eta$\textit{-fertile class} and $\mathtt{F}^\eta$ the $\eta$\textit{-fertile ideal} (see \cref{thm:genetic.decomposition} below). Note that each class $\mathtt{E}^{\eta}$ is associated with the cylindrical tail of the configurations. It can be translated into Gibbs measures language that the elements of $\eta$-fertile class differ microscopically but are similar macroscopically (\textit{i.e.}, they share the same tail).

It is straightforward to see that if $\eta \in \mathtt{E}^\sigma$, then $\sigma \in \mathtt{E}^\eta$. Thus $\eta \in \mathtt{E}^\eta$ for all $\sigma \in \Omega$. Moreover, given $\sigma, \eta \in \Omega$, either $\mathtt{E}^\sigma = \mathtt{E}^\eta$ or $\mathtt{E}^\sigma \cap \mathtt{E}^\eta = \emptyset$. Therefore $\mathlcal{P}_{\Omega}:=\{\mathtt{E}^\sigma : \sigma \in \Omega\}$ is a partition of $\Omega$. In what follows, we denote by $\mathcal{A} \cong \mathcal{A}'$ the isomorphism of algebras, \textit{i.e.}, when there exists a bijective multiplicative linear map $\varphi \in \operatorname{hom}(\mathcal{A},\mathcal{A}')$.

The following lemma shows that our definition is compatible with the finite Gibbs algebras found in the literature (as defined above in \cref{sec:fin.dim.Gibbs.gen}). An immediate consequence is that the theorems obtained in this article extend to all countable $\mathbb{L}$, not necessarily infinite.

\begin{lemma} \label{lm:finite.alg.equiv}
    Let $\mathbb{L}$ and $S$ be finite. Consider $\Phi$ an admissible interaction potential and $\mu \in \mathscr{G}(\Phi)$ a Gibbs measure. Then, for every fixed set of clusters $\mathscr{C}$,
    \[\mathcal{A}(\mathscr{C},\mu,\Phi,\Omega) \cong \mathlcal{A}_f(\mathscr{C},\mu,\Omega).\]
\end{lemma}
\begin{proof}
    It suffices to show that the two definitions are equivalent. First, observe that, since $\mathbb{L}$ is finite, $\mathcal{L}=2^{\mathbb{L}}$ and $\mathfrak{B}_{\Omega}= \{e_\sigma\}_{\sigma \in \Omega}$. Thus $\Omega_{\xi\eta}=\thickbar{\Omega}_{\xi\eta}$ for all $\xi,\eta \in \Omega$ and $\mathscr{C}$ fixed. It then suffices to verify that all the structure coefficients $a_{\zeta\eta,\sigma}$ and $c_{\zeta\eta,\sigma}$ are equal, \textit{i.e.}, for all $\zeta,\eta \in \Omega$ and $\sigma \in \Omega_{\zeta\eta}$, 
    \[a_{\zeta\eta,\sigma}=\frac{\mu(\sigma)}{\mu(\Omega_{\zeta\eta})} =\frac{\mu(\sigma\mid \zeta_{(\mathfrak{D}_{\zeta\eta})^c})}{\mu(\Omega_{\zeta\eta}\mid \zeta_{(\mathfrak{D}_{\zeta\eta})^c})}=c_{\zeta\eta,\sigma}.
    \]
    
    We rewrite the equality above as
    \begin{equation} \label{eq:struct.coeff.finite}
        \frac{\mu_{\mathbb{L}}(\sigma)}{\mu_{\mathbb{L}}(\Omega_{\zeta\eta})} = \frac{h_{\mathbb{L}}^\Phi(\sigma)}{\sum_{\xi \in \Omega_{\zeta\eta}}h_{\mathbb{L}}^\Phi(\xi)}= \frac{h_{\mathfrak{D_{\zeta \eta}}}^\Phi(\sigma)}{\sum_{\xi \in \Omega_{\zeta\eta}}h_{\mathfrak{D_{\zeta \eta}}}^\Phi(\xi)} = \frac{\mu_{\mathfrak{D}_{\zeta\eta}}(\sigma\mid \zeta)}{\mu_{\mathfrak{D}_{\zeta\eta}}(\Omega_{\zeta\eta}\mid \zeta)}.
    \end{equation}

    Observe that, for all $\omega \in \Omega$,
    \begin{align*}
        H_{\mathbb{L}}^\Phi(\omega) = \sum_{A \in\mathcal{L} }\Phi_A(\omega) &= \sum_{\substack{A \in \mathcal{L}, ~A \cap \mathfrak{D}_{\zeta\eta} \neq \emptyset}}\Phi_A(\omega) ~~~~+  \sum_{\substack{A \in \mathcal{L}, ~A \cap \mathfrak{D}_{\zeta\eta} = \emptyset}}\Phi_A(\omega)\\
         &= H_{\mathfrak{D}_{\zeta\eta}}^\Phi(\omega) + \sum_{\substack{A \in \mathcal{L}, ~A \cap \mathfrak{D}_{\zeta\eta} = \emptyset}}\Phi_A(\omega).
    \end{align*}
    Let us define
    \[H_{(\zeta\eta)}^\Phi :=\sum_{\substack{A \in \mathcal{L}, ~A \cap \mathfrak{D}_{\zeta\eta} = \emptyset}}\Phi_A(\zeta) \quad \text{and} \quad h_{(\zeta\eta)}^\Phi := \exp\left(-H_{(\zeta\eta)}^\Phi\right).\]

    Recall that $\Phi_A$ is $\mathscr{F}_A$-measurable and, therefore, $\Phi_A(\zeta) = \Phi_A(\omega_{A^c}\zeta_A)$ for all $\omega\in \Omega$. Since $\xi_A \equiv \zeta_A$ for $\xi \in \Omega_{\zeta\eta}$ when $A \cap \mathfrak{D}_{\zeta\eta} = \emptyset$, one has that \[H_{\mathbb{L}}^\Phi(\xi) = H_{\mathfrak{D}_{\zeta\eta}}^\Phi(\xi) + H_{(\zeta\eta)}^\Phi \; \text{ and } \;  h_{\mathbb{L}}^\Phi (\xi)= h_{(\zeta\eta)}^\Phi h_{\mathfrak{D}_{\zeta\eta}}^\Phi(\xi)\quad\text{ for all } \xi \in \Omega_{\zeta\eta}.\] Hence, if $\sigma \in \Omega_{\zeta\eta}$, then
    \[\frac{h_{\mathbb{L}}^\Phi(\sigma)}{\sum_{\xi \in \Omega_{\zeta\eta}}h_{\mathbb{L}}^\Phi(\xi)}= \frac{h_{(\zeta\eta)}^\Phi h_{\mathfrak{D_{\zeta \eta}}}^\Phi(\sigma)}{h_{(\zeta\eta)}^\Phi\sum_{\xi \in \Omega_{\zeta\eta}}h_{\mathfrak{D_{\zeta \eta}}}^\Phi(\xi)}=
    \frac{h_{\mathfrak{D_{\zeta \eta}}}^\Phi(\sigma)}{\sum_{\xi \in \Omega_{\zeta\eta}}h_{\mathfrak{D_{\zeta \eta}}}^\Phi(\xi)},\]
    which proves \eqref{eq:struct.coeff.finite}, and the proof is complete.
\end{proof}

We proceed with some basic facts regarding the discrepancies of two configurations and the offspring produced by a pair of parents in $\Omega$.

\begin{lemma} \label{lm:discrep}
    For all $\sigma,\eta \in \Omega$,  one has that $\vert\Omega_{\sigma\eta}\vert \leq 2^{\vert\mathfrak{D}_{\sigma \eta}\vert}$ when $\mathfrak{D}_{\sigma \eta} \in \mathcal{L}$, and 
    $\vert\Omega_{\sigma\eta}\vert =0$ otherwise. Moreover, for all $\zeta,\xi \in \Omega_{\sigma\eta}$, \[\mathfrak{D}_{\zeta\xi} \subseteq \mathfrak{D}_{\sigma\eta};\] and $\mathfrak{D}_{\sigma\zeta} = \mathfrak{D}_{\sigma\eta}$ if, and only if, $\zeta=\eta$.
\end{lemma}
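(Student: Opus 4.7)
The plan is to unpack the definition of $\Om_{\sigma\eta}$ cluster by cluster. Let $\mathscr{C}_{\sigma\eta} := \{\Delta \in \mathscr{C} : \Delta \cap \mathfrak{D}_{\sigma\eta} \neq \emptyset\}$ and observe that for any $\Delta \in \mathscr{C} \setminus \mathscr{C}_{\sigma\eta}$ one has $\sigma_\Delta = \eta_\Delta$, while for $\Delta \in \mathscr{C}_{\sigma\eta}$ the two restrictions $\sigma_\Delta$ and $\eta_\Delta$ are distinct. This dichotomy is the engine of all four statements.

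For the cardinality bound, I would first handle the trivial case $\mathfrak{D}_{\sigma\eta} \notin \mathcal{L}$, where $\Om_{\sigma\eta} = \emptyset$ holds by definition. When $\mathfrak{D}_{\sigma\eta} \in \mathcal{L}$, I would note that $\zeta \in \Om_{\sigma\eta}$ is completely determined by a choice of $\sigma_\Delta$ or $\eta_\Delta$ on each $\Delta \in \mathscr{C}_{\sigma\eta}$ (on the remaining clusters the values are forced). This gives at most $2^{|\mathscr{C}_{\sigma\eta}|}$ configurations. Since the clusters in $\mathscr{C}_{\sigma\eta}$ are pairwise disjoint and each contains at least one element of $\mathfrak{D}_{\sigma\eta}$, we have $|\mathscr{C}_{\sigma\eta}| \leq |\mathfrak{D}_{\sigma\eta}|$, yielding the claimed bound.

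For the inclusion $\mathfrak{D}_{\zeta\xi} \subseteq \mathfrak{D}_{\sigma\eta}$, I would argue by contrapositive: if $x \notin \mathfrak{D}_{\sigma\eta}$, then $\sigma(x) = \eta(x)$, and for the cluster $\Delta$ containing $x$ both $\zeta_\Delta$ and $\xi_\Delta$ lie in $\{\sigma_\Delta, \eta_\Delta\}$, forcing $\zeta(x) = \sigma(x) = \xi(x)$, hence $x \notin \mathfrak{D}_{\zeta\xi}$.

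For the final equivalence, the direction $\zeta = \eta \Rightarrow \mathfrak{D}_{\sigma\zeta} = \mathfrak{D}_{\sigma\eta}$ is immediate. For the converse, assume $\mathfrak{D}_{\sigma\zeta} = \mathfrak{D}_{\sigma\eta}$ with $\zeta \in \Om_{\sigma\eta}$, and verify $\zeta = \eta$ on each cluster $\Delta$: if $\Delta \notin \mathscr{C}_{\sigma\eta}$ then $\zeta_\Delta = \sigma_\Delta = \eta_\Delta$ from the previous paragraph's reasoning; if $\Delta \in \mathscr{C}_{\sigma\eta}$, pick $x \in \Delta \cap \mathfrak{D}_{\sigma\eta} = \Delta \cap \mathfrak{D}_{\sigma\zeta}$, so $\zeta(x) \neq \sigma(x)$, ruling out $\zeta_\Delta = \sigma_\Delta$ and leaving only $\zeta_\Delta = \eta_\Delta$. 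There is no genuine obstacle here; the only point requiring a small amount of care is the counting in (a), where it is important to aggregate the $2$-choices per cluster rather than per discrepancy site before invoking $|\mathscr{C}_{\sigma\eta}| \leq |\mathfrak{D}_{\sigma\eta}|$.
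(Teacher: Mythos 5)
Your proposal is correct and follows essentially the same route as the paper: both arguments hinge on the set of clusters meeting $\mathfrak{D}_{\sigma\eta}$ (your $\mathscr{C}_{\sigma\eta}$, the paper's $D_{\sigma\eta}$), bound its cardinality by $\vert\mathfrak{D}_{\sigma\eta}\vert$, and use the fact that an offspring is determined by a binary choice on each such cluster. The only cosmetic difference is that the paper parametrizes offspring explicitly as $\sigma_{L^c}\eta_L$ and computes $\mathfrak{D}_{\zeta\xi}$ as a symmetric difference, whereas you argue site by site; the content is identical.
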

\begin{proof}
    By the definition of $\Omega_{\sigma\eta}$ , the case $\mathfrak{D}_{\sigma\eta} \not\in \mathcal{L}$ is trivial. Consider now $\mathfrak{D}_{\sigma\eta}$ finite and set \[D_{\sigma\eta} := \{\Delta \in \mathscr{C}: \Delta \cap \mathfrak{D}_{\sigma\eta} \neq \emptyset\}.\] 
    Since $\mathscr{C}$ is a partition of $\mathbb{L}$, there is an unique $\Delta \in D_{\sigma\eta}$ for each element of $\mathfrak{D}_{\sigma\eta}$. Then one can easily see that $\vert D_{\sigma\eta}\vert \leq \vert \mathfrak{D}_{\sigma\eta}\vert $ and all $\zeta \in \Omega_{\sigma\eta}$ is such that
    \[\zeta = \sigma_{L^c} \eta_{L}\]
    with $L = \Delta_1 \Dot{\cup} \cdots\Dot{\cup}\Delta_j$ such that  $\Delta_1 ,\dots,\Delta_j \in D_{\sigma\eta}$. Therefore, there is a bijection between  $\Omega_{\sigma\eta}$ and $2^{D_{\sigma \eta}}$, which implies that 
    \[\vert \Omega_{\sigma\eta}\vert  = 2^{\vert D_{\sigma \eta}\vert } \leq 2^{\vert \mathfrak{D}_{\sigma \eta}\vert }.\]

    Note that, for all $\zeta, \xi \in \Omega_{\sigma\eta}$, one can write $\zeta = \sigma_{\Lambda^c}\eta_{\Lambda}$ and $\xi = \sigma_{(\Lambda')^c}\eta_{\Lambda'}$ with $\Lambda,\Lambda' \subseteq \mathfrak{D}_{\sigma\eta}$. Then $\mathfrak{D}_{\sigma\zeta} = \Lambda$ and $\mathfrak{D}_{\sigma\xi}= \Lambda'$. Thus,
    \[\mathfrak{D}_{\zeta\xi} = \Lambda \triangle \Lambda' \subseteq \mathfrak{D}_{\sigma\eta}.\]
    The final assertion of the lemma is trivial since $\mathfrak{D}_{\sigma\zeta} = \mathfrak{D}_{\sigma\eta}=\Lambda$  exactly when $\zeta =  \sigma_{\Lambda^c}\eta_{\Lambda} = \eta$.
\end{proof}

The following examples highlight how $\mathscr{C}$ affects the structure coefficients of the algebra considering two extreme cases.

\begin{example}[Unique cluster] \label{exmpl:unique.cluster}
    Consider the case when $\mathbb{L}$ is the unique cluster  $\mathscr{C}_{\blacktriangle}:=\{\mathbb{L}\}$. Hence, if $\sigma \in \mathtt{E}^\eta$, then the offspring set $\Omega_{\sigma\eta}= \{\sigma, \eta\}$. Therefore, the structure coefficients of $\mathcal{A}(\mathscr{C}_{\blacktriangle},\mu,\Phi,\Omega)$ are given by
    \begin{align*}
        c_{\sigma\eta,\sigma} = \frac{h_{\mathfrak{D_{\sigma \eta}}}^\Phi(\sigma)}{h_{\mathfrak{D_{\sigma \eta}}}^\Phi(\sigma)+h_{\mathfrak{D_{\sigma \eta}}}^\Phi(\eta)} &= \frac{1}{1+h_{\mathfrak{D_{\sigma \eta}}}^\Phi(\eta)/h_{\mathfrak{D_{\sigma \eta}}}^\Phi(\sigma)}\\ &= \frac{1}{1+\exp\big(H_{\mathfrak{D_{\sigma \eta}}}^\Phi(\sigma)-H_{\mathfrak{D_{\sigma \eta}}}^\Phi(\eta)\big)},
    \end{align*}
    and , when $\sigma \neq \eta$,
    \begin{equation} \label{eq:coeff.replica.parents}
        c_{\sigma\eta,\eta} = \frac{1}{1+\exp\big(H_{\mathfrak{D_{\sigma \eta}}}^\Phi(\eta)-H_{\mathfrak{D_{\sigma \eta}}}^\Phi(\sigma)\big)} = 1-c_{\sigma\eta,\sigma}.
    \end{equation} \hfill $\blacktriangleleft$ 
\end{example}

The property that coefficients exhibit in \eqref{eq:coeff.replica.parents} is a consequence of the offspring of distinct $\sigma$ and $\eta$ being necessarily a replica of one of the parents. Then, in particular, \eqref{eq:coeff.replica.parents} holds whenever $\Omega_{\sigma\eta}=\{\sigma,\eta\}$. The next example provides us a case where the offspring copies the configuration of the parents at each site individually.

\begin{example}[Atomic clusters and two spin systems] \label{ex:atomic.cluster}
    The diametrically opposite case of the unique cluster given in \cref{exmpl:unique.cluster} is when $\mathscr{C}$ is the set of atomic clusters \[\mathscr{C}_{\odot}:= \big\{\{x\}: x \in \mathbb{L}\big\}.\] Let us also consider that the set of spins $S$ is such that $\vert S\vert =2$. Then $\Omega_{\sigma\eta}= \big\{\xi \sigma_{(\mathfrak{D}_{\sigma\eta})^c}: \xi \in S^{\mathfrak{D}_{\sigma\eta}}\big\}$ when $\sigma \in \mathtt{E}^\eta$. Note that $Z_{\mathfrak{D}_{\sigma\eta}}^\Phi(\sigma) = \sum_{\zeta \in \Omega_{\sigma\eta}}h_{\mathfrak{D}_{\sigma\eta}}^\Phi(\zeta)$.

    Therefore, the structure coefficients of $\mathcal{A}(\mathscr{C}_{\odot},\mu,\Phi,\Omega)$ are precisely the Gibbs measures
    \[c_{\sigma\eta,\zeta} = \mu_{\mathfrak{D}_{\sigma\eta}}(\zeta\mid\sigma).\]\hfill $\blacktriangleleft$ 
\end{example}

Now that we are more familiar with some properties of the algebras, let us characterize the non-associativity.

\begin{lemma}
    Let $\mathcal{A}=\mathcal{A}(\mathscr{C},\mu,\Phi,\Omega)$ be the $\mathscr{C}$-genetic Gibbs algebra generated by $\mu \in \mathscr{G}(\Phi)$ on $\Omega$. Then it is necessary and sufficient that $\operatorname{dim}(\mathcal{A})\geq 3$ for $\mathcal{A}$ be a non-associative algebra.
\end{lemma}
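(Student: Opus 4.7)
The plan is to establish both directions separately, with the bulk of the work in the sufficiency half. For necessity I would argue the contrapositive: if $\dim(\mathcal{A}) \leq 2$ then $\mathcal{A}$ is associative. The one-dimensional case is trivial, since $|\Om| = 1$ forces $\Om_{\sigma\sigma} = \{\sigma\}$ and $c_{\sigma\sigma,\sigma} = 1$, so $\mathcal{A}$ is spanned by a single idempotent $e_\sigma$ with $e_\sigma^2 = e_\sigma$. For $\dim(\mathcal{A}) = 2$ the constraint $|S|^{|\mathbb{L}|} = 2$ restricts the combinatorics to $|S| = 2$ and $|\mathbb{L}| = 1$, reducing associativity to a direct bilinear check on the two-element basis $\{e_\sigma, e_\eta\}$.

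For sufficiency, assume $\dim(\mathcal{A}) = |\Om| \geq 3$. I would exhibit $\sigma, \eta \in \Om$ for which associativity fails on $(e_\sigma, e_\sigma, e_\eta)$. The key construction is to pick $\sigma$ and $\eta$ differing at a single site $x \in \mathbb{L}$. Then $\mathfrak{D}_{\sigma\eta} = \{x\}$ lies inside the unique cluster containing $x$, and \cref{lm:discrep} gives $|\Om_{\sigma\eta}| \leq 2^{|\mathfrak{D}_{\sigma\eta}|} = 2$, so $\Om_{\sigma\eta} = \{\sigma, \eta\}$. Consequently
\[
e_\sigma \cdot e_\eta = \alpha \, e_\sigma + (1-\alpha) \, e_\eta,
\]
with $\alpha \in (0,1)$ by the strict positivity of the Boltzmann factor $h_{\{x\}}^\Phi$ (equivalently, by the Hammersley--Clifford positivity of Gibbs conditionals noted in \cref{sec:Gibbs}). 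Using the idempotency $e_\sigma^2 = e_\sigma$, direct expansion yields $(e_\sigma \cdot e_\sigma) \cdot e_\eta = \alpha e_\sigma + (1-\alpha) e_\eta$, whereas
\[
e_\sigma \cdot (e_\sigma \cdot e_\eta) = \alpha\, e_\sigma + (1-\alpha)\bigl(\alpha e_\sigma + (1-\alpha) e_\eta\bigr) = \alpha(2-\alpha)\, e_\sigma + (1-\alpha)^2\, e_\eta.
\]
Matching coefficients of $e_\eta$ forces $(1-\alpha)^2 = 1-\alpha$, i.e.\ $\alpha \in \{0,1\}$, contradicting $\alpha \in (0,1)$.

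The main obstacle, aside from this calculation, is guaranteeing that a single-site flip pair $\sigma, \eta$ actually exists whenever $\dim(\mathcal{A}) \geq 3$. This reduces to a short combinatorial check: $|S|^{|\mathbb{L}|} \geq 3$ forces either $|S| \geq 3$ (flip one site to a third spin value) or $|S| = 2$ with $|\mathbb{L}| \geq 2$ (flip any one of at least two sites from a reference configuration), and in both scenarios the resulting pair satisfies $|\mathfrak{D}_{\sigma\eta}| = 1$ and lies within a single cluster, as required. Everything else is routine manipulation of the structural coefficients in their Boltzmann-factor form.
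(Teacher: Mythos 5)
Your sufficiency argument is correct and is in fact cleaner than the paper's. Where the paper tests associativity on a triple of three \emph{distinct} configurations $\sigma,\eta,\zeta$ with $\vert\mathfrak{D}_{\sigma\eta}\vert=\vert\mathfrak{D}_{\eta\zeta}\vert=1$ and extracts a contradiction from the coefficient of $e_\sigma$ in $e_\sigma\cdot(e_\eta\cdot e_\zeta)=(e_\sigma\cdot e_\eta)\cdot e_\zeta$, you test the triple $(e_\sigma,e_\sigma,e_\eta)$ for a single pair $\sigma\neq\eta$ and use the idempotency of $e_\sigma$; the resulting identity $(1-\alpha)^2=1-\alpha$ with $\alpha\in(0,1)$ gives the contradiction immediately. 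Both computations are valid; yours needs less combinatorial set-up, and in fact you do not even need the single-site flip: any $\eta\in\mathtt{E}^\sigma\setminus\{\sigma\}$ works, since $c_{\sigma\eta,\eta}\in(0,1)$ always and the $e_\eta$-coefficients of the two associations are $1-\alpha$ and $(1-\alpha)^2$ regardless of the size of $\Om_{\sigma\eta}$.

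The necessity direction, however, contains a step that fails, and your own sufficiency computation is what kills it. For $\dim(\mathcal{A})=2$ one has $\vert S\vert=2$ and $\vert\mathbb{L}\vert=1$, so $\Om=\{\sigma,\eta\}$ with $\mathfrak{D}_{\sigma\eta}$ a singleton; this is precisely a single-site flip pair, and the calculation you ran for sufficiency applies verbatim: $(e_\sigma\cdot e_\sigma)\cdot e_\eta$ has $e_\eta$-coefficient $1-\alpha$ while $e_\sigma\cdot(e_\sigma\cdot e_\eta)$ has $e_\eta$-coefficient $(1-\alpha)^2$, with $\alpha\in(0,1)$. So the ``direct bilinear check'' you defer to does not confirm associativity --- it refutes it, and the two-dimensional algebra is non-associative. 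Your proposal is therefore internally inconsistent as written, and, more importantly, the necessity half of the statement is false as stated: only $\dim(\mathcal{A})\leq 1$ yields an associative algebra. To be fair, the paper's own treatment of this direction --- that associativity for $\dim(\mathcal{A})<3$ is ``an immediate consequence of the commutativity of $\mathcal{A}$'' --- is equally unsubstantiated and suffers from the same defect: commutativity of a two-dimensional algebra does not imply associativity, and here it demonstrably fails.
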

\begin{proof}
    Consider $\vert \Omega\vert \geq 3$ and let $\sigma, \eta, \zeta \in \Omega$ be three distinct configurations such that $\sigma,\eta \in \mathtt{E}^\zeta$. Suppose that
    \[
    e_\sigma\cdot(e_\eta\cdot e_\zeta) = (e_\sigma\cdot e_\eta)\cdot e_\zeta,\]
    then 
    \begin{equation} \label{eq:sum.non.associative}
    \sum_{\varpi \in \Omega_{\eta\zeta}}c_{\eta\zeta,\varpi}\sum_{\theta \in \Omega_{\varpi\sigma}}c_{\varpi\sigma,\theta~} e_\theta=\sum_{\omega \in \Omega_{\sigma\eta}}c_{\sigma\eta,\omega}\sum_{\xi \in \Omega_{\omega\zeta}}c_{\omega\zeta,\xi~} e_\xi.
    \end{equation}
    
    Let us choose $\mathfrak{D}_{\sigma\eta} = \{x\}$ and $\mathfrak{D}_{\eta\zeta} = \{y\}$, then $\mathfrak{D}_{\sigma\eta} =\{x,y\}$. Here, we may take $x\neq y$ when $\vert S\vert =2$, and $x=y$ otherwise. Observe that $\Omega_{\sigma \eta} = \{\sigma,\eta\}$, $\Omega_{\eta\zeta}= \{\eta,\zeta\}$, and $\sigma,\zeta \in \Omega_{\sigma\zeta}$. By expanding \eqref{eq:sum.non.associative}, the coefficients accompanying $e_\sigma$ yield 
    \begin{equation} \label{eq:constants.non.associative}
        (c_{\eta\zeta,\eta} - c_{\sigma\zeta,\sigma})c_{\sigma\eta,\sigma} + c_{\eta\zeta,\zeta}c_{\sigma\zeta,\sigma}=0.
    \end{equation}

    Fix $a=c_{\sigma\zeta,\sigma}$, $b=c_{\eta\zeta,\zeta}$, and $c= c_{\sigma\eta,\sigma}$. Note that since $\vert \Omega_{\eta\zeta}\vert =2$, $c_{\eta\zeta,\eta}= 1-b$. The properties from the Hamiltonian and Gibbs measures ensure us that $a,b,c \in (0,1)$. Note that it follows from $ab\neq0$ that $a+b \neq 1$. Hence, one has by \eqref{eq:constants.non.associative} that
    $c=\frac{ab}{a+b-1}$ with $c<1$. It implies that
    \[ab < a+b-1,\]
    then $a(1-b) + b-1 > 0$, and thence $a>1$.
    It contradicts the fact of $a \in (0,1)$. Therefore, the algebra is not associative when one can choose distinct $\sigma,\eta,\zeta \in \Omega$ with unitary $\mathfrak{D}_{\sigma\eta}$ and $\mathfrak{D}_{\eta\zeta}$, which covers all cases where $\operatorname{dim}(\mathcal{A})\geq 3$.

    The associativity when $\vert \Omega\vert =\operatorname{dim}(\mathcal{A}) < 3$ is an immediate consequence of the commutativity of $\mathcal{A}$.
    
\end{proof}

The theorem below provides a decomposition of the algebra $\mathcal{A}=\mathcal{A}(\mathscr{C},\mu,\Phi,\Omega)$ into a direct sum of ideals. Let us define \[\mathcal{F}_{\Omega}:= \{\mathtt{F}^\eta: \eta \in \Omega\}.\] Consider now $\widetilde{\sigma}: \mathcal{F}_{\Omega} \to \Omega$ to be a choice that fixes $\widetilde{\sigma}(\mathtt{F}) \in \Omega$ such that $\mathtt{F} = \mathtt{F}^{\widetilde{\sigma}(\mathtt{F})}$. Set $\widetilde{\Omega}:=\{\widetilde{\sigma}(\mathtt{F}): \mathtt{F} \in \mathcal{F}_{\Omega}\}$. We will see below that the axiom of choice is not necessary to define $\widetilde{\sigma}$ and that $\mathcal{F}_{\Omega}$ is a set of ideals of $\mathcal{A}$.

\begin{theorem}[Decomposition of $\mathcal{A}$ into a direct sum of Markov ideals] \label{thm:genetic.decomposition}
    Let $\mathcal{A}=\mathcal{A}(\mathscr{C},\mu,\Phi,\Omega)$ be a $\mathscr{C}$-genetic Gibbs algebra generated by $\mu \in \mathscr{G}(\Phi)$ on $\Omega$. Then, for all $\eta \in \Omega$, \[e_\eta \mathcal{A} = \mathtt{F}^\eta.\] The subalgebras $\mathtt{F} \in \mathcal{F}_{\Omega}$ are Markov ideals of $\mathcal{A}$ such that
    \[\mathcal{A} = \bigoplus_{\mathtt{F} \in \mathcal{F}_{\Omega}}\mathtt{F} = \bigoplus_{\eta \in \widetilde{\Omega}}\mathtt{F}^\eta.\]
    Moreover, each $\mathtt{F} \in \mathcal{F}_{\Omega}$ has countable basis and $\vert \widetilde{\Omega}\vert = \vert \mathcal{F}_{\Omega}\vert = 2^{\aleph_0}$ when $\mathbb{L}$ is countably infinite.
\end{theorem}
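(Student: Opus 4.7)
\textbf{Step 1: The identity $e_\eta \mathcal{A} = \mathtt{F}^\eta$.} The inclusion $e_\eta \mathcal{A} \subseteq \mathtt{F}^\eta$ is immediate from the product rule and \cref{lm:discrep}: any nonzero product $e_\eta \cdot e_\zeta$ is supported on $\Om_{\eta\zeta}$, whose elements $\sigma$ satisfy $\mathfrak{D}_{\eta\sigma} \subseteq \mathfrak{D}_{\eta\zeta} \in \mathcal{L}$ and therefore lie in $\mathtt{E}^\eta$. For the reverse inclusion I plan to induct on $n = |\mathfrak{D}_{\eta\zeta}|$. The base case $n = 0$ is $e_\eta \cdot e_\eta = e_\eta$, since $\Om_{\eta\eta} = \{\eta\}$ and $c_{\eta\eta,\eta}=1$. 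For the step, every $\sigma \in \Om_{\eta\zeta}$ has the form $\eta_{L^c}\zeta_L$ for $L$ a union of clusters in $D_{\eta\zeta}$, so $\mathfrak{D}_{\eta\sigma} = L \cap \mathfrak{D}_{\eta\zeta}$ is a proper subset of $\mathfrak{D}_{\eta\zeta}$ unless $\sigma = \zeta$. Since $c_{\eta\zeta,\zeta} > 0$ by strict positivity of Gibbs conditional distributions on a finite spin space, I can isolate $e_\zeta$ in the expansion of $e_\eta \cdot e_\zeta$ using the inductive hypothesis for the remaining summands.

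\textbf{Step 2: Each $\mathtt{F}^\eta$ is a Markov ideal.} For $\zeta \in \mathtt{E}^\eta$ and $\xi \in \Om$, either $\xi \notin \mathtt{E}^\zeta = \mathtt{E}^\eta$ and $e_\zeta \cdot e_\xi = 0$, or $\xi \in \mathtt{E}^\eta$, in which case every $\sigma \in \Om_{\zeta\xi}$ satisfies $\mathfrak{D}_{\eta\sigma} \subseteq \mathfrak{D}_{\eta\zeta} \cup \mathfrak{D}_{\zeta\sigma}$, which is finite by \cref{lm:discrep}, so $\sigma \in \mathtt{E}^\eta$. Hence $\mathtt{F}^\eta \cdot \mathcal{A} \subseteq \mathtt{F}^\eta$, and the same argument gives $\mathtt{F}^\eta \cdot \mathtt{F}^{\eta'} = 0$ whenever $\mathtt{E}^\eta \cap \mathtt{E}^{\eta'} = \emptyset$. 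The Markov property is built into the definition of the structural coefficients: $\sum_{\sigma \in \Om_{\zeta\xi}} c_{\zeta\xi,\sigma} = 1$ since these coefficients form a conditional distribution on $\Om_{\zeta\xi}$.

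\textbf{Step 3: Direct sum and cardinalities.} Since $\mathlcal{P}_{\Om}$ is a partition of $\Om$, the basis $\mathfrak{B}_{\Om}$ splits as the disjoint union $\bigsqcup_{\eta \in \widetilde{\Om}} \mathfrak{B}_{\mathtt{E}^\eta}$, giving a module direct sum $\mathcal{A} = \bigoplus_{\eta \in \widetilde{\Om}} \mathtt{F}^\eta$; together with Step~2 this is a direct sum of ideals. Each $\sigma \in \mathtt{E}^\eta$ is determined by the pair $(\mathfrak{D}_{\eta\sigma}, \sigma|_{\mathfrak{D}_{\eta\sigma}})$, so $|\mathtt{E}^\eta| \leq \sum_{L \in \mathcal{L}} |S|^{|L|} = \aleph_0$ when $\mathbb{L}$ is countably infinite (countable union of finite sets), yielding countable bases for each $\mathtt{F}^\eta$. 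Then $|\Om| = |S|^{\aleph_0} = 2^{\aleph_0}$ (assuming $|S|\geq 2$), and since $\mathcal{F}_\Om$ partitions $\Om$ into countable classes, cardinal arithmetic gives $|\widetilde{\Om}| = |\mathcal{F}_\Om| = 2^{\aleph_0}$. For $\widetilde{\sigma}$, I would fix once and for all an enumeration of $\mathbb{L}$ and a total order on $S$; this induces a canonical well-ordering on each countable class $\mathtt{F} \in \mathcal{F}_\Om$ (for instance by lexicographic comparison of the spins over the finite discrepancy sets relative to a class-intrinsic reference), so the distinguished representative can be selected without appealing to the axiom of choice.

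\textbf{Main obstacle.} The most delicate point is the induction in Step~1, where I must carefully exploit the cluster description from \cref{lm:discrep} to ensure the discrepancy set shrinks strictly for every summand other than $\sigma = \zeta$; the subsequent ideal, Markov and decomposition claims then follow from bookkeeping on discrepancy sets together with the normalization of the Gibbs conditional distributions.
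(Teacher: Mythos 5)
Your proof follows the paper's argument essentially verbatim: the same strong induction on $\vert\mathfrak{D}_{\eta\zeta}\vert$ that isolates $e_\zeta$ via the strictly positive coefficient $c_{\eta\zeta,\zeta}$, the same discrepancy-set bookkeeping for the ideal and Markov properties, and the same counting of $\mathtt{E}^\eta$ through pairs $(\Lambda,\omega)$ with $\Lambda\in\mathcal{L}$ and $\omega\in S^\Lambda$. The only point I would flag is your side remark that a lexicographic rule yields $\widetilde{\sigma}$ without choice: a class $\mathtt{E}^\eta$ has no lexicographically least element in general (a transversal for the finite-discrepancy equivalence on $S^{\mathbb{L}}$ is the classical $E_0$-selector, which is not available in ZF alone), but this is inessential since the decomposition can be indexed by $\mathcal{F}_{\Om}$ directly.
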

\begin{proof}
    Regard $\eta \in \Omega$ as fixed. Since for all $u \in \mathcal{A} \setminus \mathtt{F}^\eta$, $u$ is given by the formal sum \[u = \sum_{\sigma \in \Omega \setminus \mathtt{E}^{\eta}} a_\sigma e_\sigma,\]
    then $e_\eta \cdot u = 0$. It thus follows that $e_\eta \mathcal{A} \subseteq \mathtt{F}^\eta$. Observe that the idempotence of $e_\eta$ implies that $e_\eta \in e_\eta \mathcal{A}$. Suppose that, for an integer $k \geq 0$, if $\vert \mathfrak{D}_{\eta\sigma}\vert  \leq k$, then $e_\sigma \in e_\eta\mathcal{A}$. Let now $\zeta \in \mathtt{E}^\eta$ be such that $\vert \mathfrak{D}_{\eta\zeta}\vert  = k+1$. Then 
    \[e_\eta\cdot e_\zeta = \sum_{\xi \in \Omega_{\eta \zeta}} c_{\eta\zeta,\xi}e_\xi.\]

    However, one has by \cref{lm:discrep} and the assumption above that, for all $\xi \in \Omega_{\eta\zeta} \setminus \{\zeta\}$, $e_\xi \in e_\eta \mathcal{A}$. Since $c_{\eta\zeta, \zeta} \neq 0$, 
    \[e_\zeta = \frac{1}{c_{\eta\zeta, \zeta}}\left(\sum_{\xi \in \Omega_{\eta \zeta}\setminus\{\zeta\}} c_{\eta\zeta,\xi}e_\xi - e_\eta\cdot e_\zeta\right) \in e_\eta\mathcal{A}.\]

    Hence, one has by finite induction that $\mathfrak{B}_{\mathtt{E}^\eta} \subseteq e_\eta \mathcal{A}$ and, consequentially, $\mathtt{F}^\eta \subseteq e_\eta\mathcal{A}$. Therefore, $\mathtt{F}^\eta = e_\eta\mathcal{A}$. In fact, $e_\eta\mathtt{F}^\eta = \mathtt{F}^\eta$ by the same arguments. Let us verify that $\mathtt{F}^\eta$ is an ideal. 

    Suppose that $\sigma\in \Omega$ is such that $\sigma \not \in \mathtt{E}^\eta$ and let $v \in \mathtt{F}^\eta$. Since $e_\sigma \cdot e_\zeta =0$ for all $\zeta \in \mathtt{E}^\eta$, $e_\sigma \cdot v =0$. Thus, $e_\sigma \mathtt{F}^\eta = \langle0\rangle \subseteq \mathtt{F}^\eta$. On the other hand, $e_\zeta \mathtt{F}^\eta = \mathtt{F}^\eta$. It then follows that each $\mathtt{F} \in \mathcal{F}_{\Omega}$ is an ideal of $\mathcal{A}$.

    Observe that, for all $\zeta \in \mathtt{E}^\eta$, $\Omega_{\zeta\eta} \neq \emptyset$. Then $\sum_{\sigma\in \Omega_{\zeta\eta}}c_{\zeta\eta,\sigma}=1$, which implies that, for all $\eta \in \Omega$, $\mathtt{F}^\eta$ is Markov. It remains to verify that $\mathfrak{B}_{\mathtt{E}^\eta}$ is countable for all $\eta \in \Omega$. Note that every $\zeta \in \mathtt{E}^\eta$ is given by $\zeta = \eta_{\Lambda^c}\omega$ for a $\Lambda \in \mathcal{L}$ and $\omega \in S^\Lambda$. Hence,
    \[
        \left\vert\mathtt{E}^\eta\right\vert = \left\vert\bigcup_{\Lambda\in \mathcal{L}} S^\Lambda \right\vert.
    \]

    However, if $\{\Lambda_n\}_{n \in \mathbb{N}}$ is such that $\Lambda_n \uparrow \mathbb{L}$, then $\mathcal{L} = \bigcup_{n\in\mathbb{N}}2^{\Lambda_n}$, which implies that $\vert \mathcal{L}\vert =\aleph_0$. It then follows that $\mathfrak{B}_{\mathtt{E}^\eta}$ is countable. The direct sum is a straightforward from the definition of $\mathcal{A}$ as a free $\mathbb{K}$-module the Hamel basis $\mathfrak{B}_{\Omega}$. Note that $\vert \mathfrak{B}_{\Omega}\vert  \leq \vert \widetilde{\Omega}\times \mathfrak{B}_{\mathtt{E}^\eta}\vert $. Since $\vert \Omega\vert = 2^{\aleph_0}$ when $\vert \mathbb{L}\vert = \aleph_0$, one has that $\vert \widetilde{\Omega}\vert  = \vert \mathcal{F}_{\Omega}\vert = 2^{\aleph_0}$, and the proof is complete.
\end{proof}

The following results deal with isomorphism between genetic Gibbs algebras $\mathcal{A}$ and $\mathcal{A}'$ generated by Gibbs measures defined on the same $\Omega$. To differentiate them, we denote by $\mathfrak{B}_{\Omega}=\{e_\sigma\}_{\sigma\in \Omega}$ and $\mathfrak{B}_{\Omega}'=\{e_\sigma'\}_{\sigma\in \Omega}$ their respective standard basis. We also let $c_{\zeta\eta,\sigma}$ and $c_{\zeta\eta,\sigma}'$ denote the structure coefficients of $\mathcal{A}$ and $\mathcal{A}'$, respectively.

\begin{theorem}\label{thm:genetic.equiv.potentials}
    Let $\mathcal{A} = \mathcal{A}(\mathscr{C},\mu,\Phi,\Omega)$ and $\mathcal{A}'= \mathcal{A}(\mathscr{C},\mu',\Psi,\Omega)$ be two $\mathscr{C}$-genetic Gibbs algebras. If the interaction potentials $\Phi$ and $\Psi$ are equivalent, then algebras $\mathcal{A}$ and $\mathcal{A}'$ are isomorphic.
\end{theorem}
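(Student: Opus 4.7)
The plan is to show that the structural coefficients of $\mathcal{A}$ and $\mathcal{A}'$ coincide on the standard basis, so that the canonical linear bijection $\varphi\colon \mathcal{A}\to\mathcal{A}'$ defined by $\varphi(e_\sigma)=e'_\sigma$ (and extended linearly) is automatically an algebra isomorphism. Note that the index set $\Om$ and the clusters $\mathscr{C}$ are the same for both algebras, so the offspring set $\Om_{\zeta\eta}$ and the discrepancy set $\mathfrak{D}_{\zeta\eta}$ are identical in the two algebras; only the weights depend on $\Phi$ versus $\Psi$.

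The key step is a clean cancellation on the Boltzmann factors. By \eqref{eq:equiv.potentials}, for $\Lambda=\mathfrak{D}_{\zeta\eta}\in\mathcal{L}$ we have $h^\Phi_\Lambda = h^{\Phi-\Psi}_\Lambda\,h^\Psi_\Lambda$, and $H^{\Phi-\Psi}_\Lambda$ is $\mathscr{T}_\Lambda$-measurable, so it depends only on the restriction to $\Lambda^c$. Now every $\sigma\in\Om_{\zeta\eta}$ satisfies $\sigma_{\Lambda^c}=\zeta_{\Lambda^c}=\eta_{\Lambda^c}$ (since offspring may only alternate parents inside the clusters meeting $\mathfrak{D}_{\zeta\eta}\subseteq\Lambda$). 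Hence the factor $h^{\Phi-\Psi}_{\mathfrak{D}_{\zeta\eta}}(\sigma)$ is constant in $\sigma$ as $\sigma$ ranges over $\Om_{\zeta\eta}$, say equal to $h^{\Phi-\Psi}_{\mathfrak{D}_{\zeta\eta}}(\zeta)$. Therefore
\begin{equation*}
c_{\zeta\eta,\sigma}
=\frac{h^\Phi_{\mathfrak{D}_{\zeta\eta}}(\sigma)}{\sum_{\xi\in\Om_{\zeta\eta}} h^\Phi_{\mathfrak{D}_{\zeta\eta}}(\xi)}
=\frac{h^{\Phi-\Psi}_{\mathfrak{D}_{\zeta\eta}}(\zeta)\,h^\Psi_{\mathfrak{D}_{\zeta\eta}}(\sigma)}{h^{\Phi-\Psi}_{\mathfrak{D}_{\zeta\eta}}(\zeta)\sum_{\xi\in\Om_{\zeta\eta}} h^\Psi_{\mathfrak{D}_{\zeta\eta}}(\xi)}
=c'_{\zeta\eta,\sigma},
\end{equation*}
for every $\sigma\in\Om_{\zeta\eta}$, and both sides are $0$ when $\sigma\notin\Om_{\zeta\eta}$.

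With the structural coefficients identified, the rest is purely formal. Define $\varphi\colon\mathcal{A}\to\mathcal{A}'$ as the unique $\K$-linear extension of $e_\sigma\mapsto e'_\sigma$; this is a bijection of Hamel bases and hence a $\K$-linear isomorphism. For any $\zeta,\eta\in\Om$,
\begin{equation*}
\varphi(e_\zeta\cdot e_\eta)
=\sum_{\sigma\in\Om_{\zeta\eta}} c_{\zeta\eta,\sigma}\,\varphi(e_\sigma)
=\sum_{\sigma\in\Om_{\zeta\eta}} c'_{\zeta\eta,\sigma}\,e'_\sigma
=e'_\zeta\cdot e'_\eta
=\varphi(e_\zeta)\cdot\varphi(e_\eta),
\end{equation*}
and multiplicativity on all of $\mathcal{A}$ follows by bilinearity, establishing $\mathcal{A}\cong\mathcal{A}'$.

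The main conceptual obstacle, such as it is, lies only in spotting that the $\mathscr{T}_\Lambda$-measurability of $H^{\Phi-\Psi}_\Lambda$ is exactly the right condition to make the extra factor $h^{\Phi-\Psi}_{\mathfrak{D}_{\zeta\eta}}$ constant on the set $\Om_{\zeta\eta}$ over which the normalization sum runs; once that is in place the computation is routine. In particular, the isomorphism does not depend on the choice of Gibbs measures $\mu\in\mathscr{G}(\Phi)$ and $\mu'\in\mathscr{G}(\Psi)$ and in fact preserves the canonical basis, so it respects the ideal decomposition obtained in \cref{thm:genetic.decomposition}.
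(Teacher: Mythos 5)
Your proof is correct and follows essentially the same route as the paper: identify the structural coefficients via the factorization $h^\Phi_{\mathfrak{D}_{\zeta\eta}}=h^{\Phi-\Psi}_{\mathfrak{D}_{\zeta\eta}}h^\Psi_{\mathfrak{D}_{\zeta\eta}}$ from \eqref{eq:equiv.potentials} and cancel the common factor, then conclude by linear extension of $e_\sigma\mapsto e'_\sigma$. You make explicit the step the paper leaves implicit --- that the $\mathscr{T}_{\mathfrak{D}_{\zeta\eta}}$-measurability of $H^{\Phi-\Psi}_{\mathfrak{D}_{\zeta\eta}}$ and the fact that all offspring agree with $\zeta$ outside $\mathfrak{D}_{\zeta\eta}$ make the extra factor constant over $\Om_{\zeta\eta}$, allowing it to be pulled out of the normalizing sum --- which is a welcome clarification but not a different argument.
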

\begin{proof}
    Set $\varphi:\mathcal{A} \to \mathcal{A}'$ to be defined by linear extension of $\varphi(e_{\zeta\eta})=e_{\zeta\eta}'$ for all $\zeta, \eta\in \Omega$. Then $\varphi$ is an isomorphism of algebras when $\varphi$ is multiplicative. It suffices to show that $c_{\zeta\eta, \sigma}= c_{\zeta\eta, \sigma}'$ for all $\zeta, \eta,\sigma \in \Omega$. One has by definition that $c_{\zeta\eta, \sigma}= c_{\zeta\eta, \sigma}'=0$ for every $\sigma \not\in \Omega_{\zeta,\eta}$. 
    
    Consider now $\sigma \in \Omega_{\zeta\eta}$ and $\Phi \sim \Psi$, one has by \eqref{eq:DLR} and \eqref{eq:equiv.potentials} that
    \begin{equation*}
        c_{\zeta\eta,\sigma}=\frac{h_{\mathfrak{D_{\zeta \eta}}}^\Phi(\sigma)}{\sum\limits_{\xi \in \Omega_{\zeta\eta}}h_{\mathfrak{D_{\zeta \eta}}}^\Phi(\xi)}=\frac{h_{\mathfrak{D_{\zeta \eta}}}^{\Phi-\Psi}(\sigma)~h_{\mathfrak{D_{\zeta \eta}}}^\Psi(\sigma)}{h_{\mathfrak{D_{\zeta \eta}}}^{\Phi-\Psi}(\sigma)\sum\limits_{\xi \in \Omega_{\zeta\eta}}h_{\mathfrak{D_{\zeta \eta}}}^\Psi(\xi)}=c_{\zeta\eta,\sigma}',
    \end{equation*}
    and the proof is complete.
\end{proof}

The previous result establishes a connection of the interaction potentials and the algebras. A direct consequence is the following corollary. It states that the phase transition of an interaction potential does not affect the defined algebras.

\begin{corollary}[Stability under phase transition] \label{cor:stability.phase.transition}
    The phase transition of an interaction potential $\Phi$ does not affect the generated $\mathscr{C}$-genetic Gibbs algebra. In other words, if $\mu, \mu' \in \mathscr{G}(\Phi)$, then $\mathcal{A}(\mathscr{C},\mu,\Phi,\Omega)$ is isomorphic to $\mathcal{A}(\mathscr{C},\mu',\Phi,\Omega)$.    
\end{corollary}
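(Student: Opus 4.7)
The plan is to observe that the corollary follows almost immediately from \cref{thm:genetic.equiv.potentials} by taking $\Psi = \Phi$. The equivalence relation $\sim$ on potentials is reflexive: if $\Psi = \Phi$, then $\Phi - \Psi \equiv 0$, so the Hamiltonian $H_\Lambda^{\Phi-\Psi}$ vanishes identically for every $\Lambda \in \mathcal{L}$ and is therefore trivially $\mathscr{T}_\Lambda$-measurable. Hence $\Phi \sim \Phi$, and \cref{thm:genetic.equiv.potentials} applied to the pair of algebras $\mathcal{A}(\mathscr{C},\mu,\Phi,\Om)$ and $\mathcal{A}(\mathscr{C},\mu',\Phi,\Om)$ yields the desired isomorphism.

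Alternatively, one can give a direct argument that makes the underlying reason more transparent: the structural coefficients
\[
c_{\zeta\eta,\sigma} = \frac{h_{\mathfrak{D}_{\zeta\eta}}^\Phi(\sigma)}{\sum_{\xi \in \Om_{\zeta\eta}} h_{\mathfrak{D}_{\zeta\eta}}^\Phi(\xi)}
\]
depend solely on the Boltzmann factors determined by $\Phi$ on the finite discrepancy set $\mathfrak{D}_{\zeta\eta}$, and not on the particular element of $\mathscr{G}(\Phi)$ selected. Consequently, if $c_{\zeta\eta,\sigma}$ and $c_{\zeta\eta,\sigma}'$ denote the structural coefficients of $\mathcal{A}(\mathscr{C},\mu,\Phi,\Om)$ and $\mathcal{A}(\mathscr{C},\mu',\Phi,\Om)$ respectively, then $c_{\zeta\eta,\sigma} = c_{\zeta\eta,\sigma}'$ for all $\zeta,\eta,\sigma \in \Om$. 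The linear map $\varphi$ defined by $\varphi(e_\sigma) = e_\sigma'$ on the Hamel basis $\mathfrak{B}_\Om$ is then a bijection that preserves products, hence an algebra isomorphism.

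There is no substantive obstacle here; the corollary is essentially a sanity check confirming that the algebraic structure is a property of the potential alone. The only subtle point worth noting in the write-up is that, although the Gibbs measures $\mu$ and $\mu'$ may differ globally (and in fact this is precisely what happens at a phase transition), they nonetheless share the same local conditional specifications $\gamma_\Lambda^\Phi(\cdot\mid\cdot)$ via the DLR equation \eqref{eq:DLR}. Since the structural coefficients are expressed through these conditional specifications on finite sets $\mathfrak{D}_{\zeta\eta} \in \mathcal{L}$, the choice of Gibbs measure within $\mathscr{G}(\Phi)$ is irrelevant, and the proof is complete.
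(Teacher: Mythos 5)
Your proposal is correct and matches the paper's intent exactly: the paper states this corollary as an immediate consequence of \cref{thm:genetic.equiv.potentials}, obtained by taking $\Psi=\Phi$ and using reflexivity of $\sim$, just as you do. Your supplementary direct argument (that the structural coefficients depend only on the local specifications $\gamma_{\mathfrak{D}_{\zeta\eta}}^\Phi$ and not on the choice of $\mu\in\mathscr{G}(\Phi)$) is the same observation that underlies the theorem's proof, so there is nothing to add.
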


As an example of Gibbs measures exhibiting phase transition, the Ising model on the hypercubic lattice $\mathbb{Z}^d$ with $d \geq 2$ at low temperatures (see Example \ref{ex:potts} and Georgii \cite{georgii2011}) provides a classical case where multiple Gibbs measures coexist. Nevertheless, Corollary \ref{cor:stability.phase.transition} ensures that the associated Gibbs algebras remain isomorphic.

We present below an example of transformation that preserves the algebra under isomorphism but not the equivalence of potentials. Let us first define $\mathcal{T}$ to be the set of all transformations $\tau:\Omega\to\Omega$ of the form
\begin{equation*}
    \tau\omega:=(\tau_x\omega_{\tau_\ast^{-1}x})_{x\in\mathbb{L}}
\end{equation*}
where $\tau_\ast: \mathbb{L} \to \mathbb{L}$ and $\tau_x: S \to S$ are bijections. Note that $\tau \in \mathcal{T}$ is invertible such that $\tau^{-1} \in \mathcal{T}$

The transformations $\tau \in \mathcal{T}$ permute the configurations of $\Omega$ preserving part of its original structure. They can be particularly interesting in the study of Gibbs measures determining, for instance, \textit{translations}, \textit{spin flip}, and \textit{spin rotation} (see \cite[Sec.~5.1]{georgii2011}). Before we proceed, we define $\tau(\Phi) := (\Phi_{\tau_\ast^{-1} A}\circ\tau^{-1})_{A \in \mathcal{L}}$ and observe that $\tau(\Phi)$ is also an interaction potential.

\begin{example}[Isomorphism under translation]
    Consider $\mathbb{L}=\mathbb{Z}$, $S=\{0,1\}$, and $\mathscr{C}=\mathscr{C}_{\blacktriangle} =\{\mathbb{Z}\}$. Set $\Phi=(\Phi_A)_{A \in \mathcal{L}}$ to be given by
    \[
        \Phi_A(\sigma) := \left\{\begin{array}{cc}
             \sigma(0),&  \text{if } A=\{0,1\}\\
             0,& \text{otherwise.} 
        \end{array}\right.
    \]
    for all $\sigma \in \Omega$. Let us define $\tau:\Omega\to\Omega$ to be the \textit{translation} (or \textit{shift}) such that $\tau \sigma(x) = \sigma(x-1)$ for all $ x \in \mathbb{Z}$. Then $\tau_\ast(x)= x+1$ and $\tau_x=\operatorname{id}$ for all $x \in \mathbb{Z}$. 
    
    Thus, for each $A \in \mathcal{L}$ and every $\sigma \in \Omega$,
    \[
        \tau(\Phi)_A(\sigma) := \left\{\begin{array}{cc}
             \sigma(1),&  \text{if } A=\{1,2\}\\
             0,& \text{otherwise.} 
        \end{array}\right.
    \]

    Observe that $H_{\{0\}}^\Phi(\sigma) = \sigma(0)$ while $H_{\{0\}}^{\tau(\Phi)} \equiv 0$, therefore $\Phi \not\sim \tau(\Phi)$. Now, let $\mathcal{A}= \mathcal{A}(\mathscr{C}_{\blacktriangle},\mu,\Phi,\Omega)$ and $\mathcal{A}'= \mathcal{A}(\mathscr{C}_{\blacktriangle},\mu',\tau(\Phi),\Omega)$. Set $\varphi: \mathcal{A} \to \mathcal{A}'$ to be given by linear extension of $\varphi(e_\sigma)=e_{\tau\sigma}'$ for all $\sigma \in \Omega$. Since $H_\Lambda^\Phi = H_{\tau_\ast\Lambda}^{\tau(\Phi)}\circ\tau$, one can easily verify that $\varphi$ is an isomorphism (see \cref{thm:genetic.T.equiv.potentials} for details). \hfill $\blacktriangleleft$ 
\end{example}

The example above gives us evidence that \cref{thm:genetic.equiv.potentials} can be improved by extending relations between potentials and isomorphisms via $\tau \in \mathcal{T}$. We call the potentials $\Phi$ and $\Psi$ $\mathcal{T}$\textit{-equivalent} when there exists $\tau \in \mathcal{T}$ such that $\Phi \sim \tau^{-1}(\Psi)$.

Consider $\tau \in \mathcal{T}$, let us define $\tau(\mathscr{C}) :=\{\tau_\ast \Delta : \Delta \in \mathscr{C}\}$ and observe that $\tau(\mathscr{C})$ is also a set of clusters. Let $\mathcal{A}$ and $\mathcal{A}'$ be two genetic algebras generated by Gibbs measures defined on the same $\Omega$. Recall the notation defined above. We say that the algebra $\mathcal{A}$ is $\tau$-\emph{isomorphic} to $\mathcal{A}'$ when the linear extension of $\varphi(e_\sigma)= e_{\tau\sigma}'$ is an isomorphism of algebras.

\begin{theorem}\label{thm:genetic.T.equiv.potentials}
    Let $\tau \in \mathcal{T}$ and let $\mathcal{A} = \mathcal{A}(\mathscr{C},\mu,\Phi,\Omega)$ and $\mathcal{A}'= \mathcal{A}\big(\tau(\mathscr{C}),\mu',\Psi,\Omega\big)$ be two genetic Gibbs algebras. If $\Phi\sim \tau^{-1}(\Psi)$, then the algebra $\mathcal{A}$ is $\tau$-isomorphic to $\mathcal{A}'$. Moreover, the converse holds when $\mathscr{C}=\mathscr{C}_{\odot}$ is the set of atomic clusters. 
\end{theorem}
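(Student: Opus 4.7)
The plan is to take $\varphi:\mathcal{A}\to\mathcal{A}'$ as the linear extension of $\varphi(e_\sigma)=e'_{\tau\sigma}$ in both directions and verify when it preserves the multiplication. For the forward direction, the argument rests on three bookkeeping identities that follow from $\tau\in\mathcal{T}$ being assembled from a bijection $\tau_*:\mathbb{L}\to\mathbb{L}$ and pointwise bijections $\tau_x:S\to S$: (i) $\mathfrak{D}_{\tau\zeta,\tau\eta}=\tau_*\mathfrak{D}_{\zeta\eta}$ (so $\tau_*$ sends $\mathcal{L}$ to itself), (ii) $\tau$ restricts to a bijection $\Om_{\zeta\eta}(\mathscr{C})\to\Om_{\tau\zeta,\tau\eta}(\tau(\mathscr{C}))$ obtained by transporting each cluster via $\Delta\mapsto\tau_*\Delta$, and (iii) $H_\Lambda^{\tau^{-1}(\Psi)}(\sigma)=H_{\tau_*\Lambda}^{\Psi}(\tau\sigma)$, hence $h_\Lambda^{\tau^{-1}(\Psi)}(\sigma)=h_{\tau_*\Lambda}^{\Psi}(\tau\sigma)$. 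Combining (iii) with the hypothesis $\Phi\sim\tau^{-1}(\Psi)$ and the factorisation \eqref{eq:equiv.potentials} gives $h_{\mathfrak{D}_{\zeta\eta}}^{\Phi}=h_{\mathfrak{D}_{\zeta\eta}}^{\Phi-\tau^{-1}(\Psi)}\cdot h_{\mathfrak{D}_{\zeta\eta}}^{\tau^{-1}(\Psi)}$, and since every $\xi\in\Om_{\zeta\eta}$ shares $\zeta_{(\mathfrak{D}_{\zeta\eta})^c}$ off $\mathfrak{D}_{\zeta\eta}$ while $H_{\mathfrak{D}_{\zeta\eta}}^{\Phi-\tau^{-1}(\Psi)}$ is $\mathscr{T}_{\mathfrak{D}_{\zeta\eta}}$-measurable, the first factor is constant on $\Om_{\zeta\eta}$ and cancels in the ratio defining $c_{\zeta\eta,\sigma}$. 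A direct substitution using (i)--(iii) then yields $c_{\zeta\eta,\sigma}=c'_{\tau\zeta,\tau\eta,\tau\sigma}$, and re-indexing the image sum by $\sigma'=\tau\sigma$ delivers $\varphi(e_\zeta\cdot e_\eta)=\varphi(e_\zeta)\cdot\varphi(e_\eta)$.

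For the converse, assume $\mathscr{C}=\mathscr{C}_\odot$ and that $\varphi$ is already known to be an algebra isomorphism, which forces $c_{\zeta\eta,\sigma}=c'_{\tau\zeta,\tau\eta,\tau\sigma}$ for every $\sigma\in\Om_{\zeta\eta}(\mathscr{C}_\odot)$. Fix $x\in\mathbb{L}$ and distinct $s,s'\in S$, pick arbitrary $\omega\in S^{\mathbb{L}\setminus\{x\}}$, and let $\zeta,\eta\in\Om$ agree with $\omega$ off $x$ with $\zeta(x)=s$ and $\eta(x)=s'$. Then $\mathfrak{D}_{\zeta\eta}=\{x\}$ and $\Om_{\zeta\eta}(\mathscr{C}_\odot)=\{\zeta,\eta\}$, so the two-term formula \eqref{eq:coeff.replica.parents} applies on both sides. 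Converting the $\Psi$-Hamiltonians at $\{\tau_*x\}$ into $\tau^{-1}(\Psi)$-Hamiltonians at $\{x\}$ via (iii) and matching the two resulting exponentials produces
\[H_{\{x\}}^{\Phi-\tau^{-1}(\Psi)}(\zeta)=H_{\{x\}}^{\Phi-\tau^{-1}(\Psi)}(\eta),\]
so, as $\omega,s,s'$ were arbitrary, $H_{\{x\}}^{\Phi-\tau^{-1}(\Psi)}$ is $\mathscr{T}_{\{x\}}$-measurable.

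The step I expect to be the main obstacle is lifting this single-site conclusion to arbitrary $\Lambda\in\mathcal{L}$, since atomic clusters provide only one-site comparisons. The plan is a telescoping argument: given $\xi,\xi'\in\Om$ that agree on $\Lambda^c$ and an enumeration $\Lambda=\{x_1,\dots,x_n\}$, interpolate $\xi_0=\xi,\xi_1,\dots,\xi_n=\xi'$ by letting $\xi_i$ agree with $\xi'$ on $\{x_1,\dots,x_i\}$ and with $\xi$ elsewhere, so that $\xi_{i-1}$ and $\xi_i$ differ at most at $x_i$. Because each $(\Phi-\tau^{-1}(\Psi))_A$ is $\mathscr{F}_A$-measurable, only terms with $x_i\in A$ can distinguish $\xi_{i-1}$ from $\xi_i$, and any such $A$ automatically satisfies $A\cap\Lambda\neq\emptyset$; hence
\[H_\Lambda^{\Phi-\tau^{-1}(\Psi)}(\xi_{i-1})-H_\Lambda^{\Phi-\tau^{-1}(\Psi)}(\xi_i)=H_{\{x_i\}}^{\Phi-\tau^{-1}(\Psi)}(\xi_{i-1})-H_{\{x_i\}}^{\Phi-\tau^{-1}(\Psi)}(\xi_i)=0\]
by the single-site step. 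Telescoping through $i=1,\dots,n$ gives $H_\Lambda^{\Phi-\tau^{-1}(\Psi)}(\xi)=H_\Lambda^{\Phi-\tau^{-1}(\Psi)}(\xi')$, proving $\mathscr{T}_\Lambda$-measurability for every $\Lambda\in\mathcal{L}$, and hence $\Phi\sim\tau^{-1}(\Psi)$.
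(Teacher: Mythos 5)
Your proposal is correct. The forward direction is essentially the paper's argument: the same three transport identities (discrepancy sets, offspring sets, and $H_{\tau_\ast\Lambda}^{\Psi}\circ\tau = H_\Lambda^{\tau^{-1}(\Psi)}$) followed by cancellation of the $\mathscr{T}_{\mathfrak{D}_{\zeta\eta}}$-measurable factor $h^{\Phi-\tau^{-1}(\Psi)}$ in the ratio defining $c_{\zeta\eta,\sigma}$. The converse, however, takes a genuinely different route. The paper works with an arbitrary $\Lambda\in\mathcal{L}$ at once: it extracts from the equality of coefficients a constant $k_{\zeta\eta}$, shows it depends only on $(\zeta,\Lambda)$ because $\zeta$ itself always lies in $\Om_{\zeta\eta}$, and then sweeps over all $\sigma=\xi\zeta_{\Lambda^c}$ by choosing a suitable $\eta$ with $\mathfrak{D}_{\zeta\eta}=\Lambda$, so that $h_\Lambda^\Phi/h_\Lambda^{\tau^{-1}(\Psi)}$ is constant on the cylinder. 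You instead extract only the single-site information ($\vert\mathfrak{D}_{\zeta\eta}\vert=1$, where \eqref{eq:coeff.replica.parents} gives a clean two-term identity) and then lift to general $\Lambda$ by a sitewise interpolation/telescoping argument, using that $\mathscr{F}_A$-measurability kills every term with $x_i\notin A$ and that $A\ni x_i$ forces $A\cap\Lambda\neq\emptyset$. Both arguments are sound; yours shows the slightly sharper fact that the single-site structural coefficients alone already pin down the equivalence class of the potential, at the cost of the extra telescoping step (which implicitly rearranges the convergent sums defining the Hamiltonians, harmless under the paper's admissibility assumption), while the paper's version exploits the full richness of the atomic-cluster offspring sets to handle each $\Lambda$ in one stroke.
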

\begin{proof}
    The fist part is similar to the proof of \cref{thm:genetic.equiv.potentials}. Since $\tau \in \mathcal{T}$ is given, it suffices to show that $c_{\zeta\eta,\sigma}=c_{\tau\zeta\tau\eta,\tau\sigma}'$ for all $\zeta,\eta,\sigma\in\Omega$. 
    
    Indeed, by the definition of $\tau \in \mathcal{T}$, one has $\tau_\ast\mathfrak{D}_{\zeta\eta} = \mathfrak{D}_{\tau\zeta\tau\eta}$. Therefore, if $\zeta \not \in \mathtt{E}^\eta$, then $\tau\zeta \not \in \mathtt{E}^{\tau\eta}$ and $c_{\zeta\eta,\sigma} = c_{\tau\zeta\eta,\sigma}'=0$ for all $\sigma \in \Omega$. Consider now $\zeta \in \mathtt{E}^\eta$. It is also straightforward that $\tau \Omega_{\zeta\eta}= \Omega_{\tau\zeta \tau\eta}$. Hence, for all $\sigma\in \Omega \setminus \Omega_{\zeta\eta}$, we also have that $c_{\zeta\eta,\sigma} = c_{\tau\zeta\eta,\sigma}'=0$. 
    
    Observe that, for all $\Lambda \in \mathcal{L}$,
    \begin{equation} \label{eq:Hamiltonian.tau.Psi}
        H_{\tau_\ast\Lambda}^{\Psi} \circ \tau = \sum_{A \in \mathcal{L},~A \cap \Lambda\neq \emptyset} \Psi_{\tau_\ast A}\circ \tau  = H_\Lambda^{\tau^{-1}(\Psi)}.
    \end{equation}
    
    Let $\sigma \in \Omega_{\zeta\eta}$ and $\Phi \sim \tau^{-1}(\Psi)$, then one has by \eqref{eq:DLR}, \eqref{eq:equiv.potentials}, and \eqref{eq:Hamiltonian.tau.Psi} that
    \begin{equation*}
        c_{\zeta\eta,\sigma}=\frac{h_{\mathfrak{D_{\zeta \eta}}}^\Phi(\sigma)}{\sum\limits_{\xi \in \Omega_{\zeta\eta}}h_{\mathfrak{D_{\zeta \eta}}}^\Phi(\xi)}=\frac{h_{\mathfrak{D_{\zeta \eta}}}^{\tau^{-1}(\Psi)}(\sigma)}{\sum\limits_{\xi \in \Omega_{\zeta\eta}}h_{\mathfrak{D_{\zeta \eta}}}^{\tau^{-1}(\Psi)}(\xi)}=\frac{h_{\mathfrak{D_{\tau\zeta \tau\eta}}}^{\Psi}(\tau\sigma)}{\sum\limits_{\xi' \in \Omega_{\tau\zeta\tau\eta}}h_{\mathfrak{D_{\tau\zeta \tau\eta}}}^{\Psi}(\xi')}=c_{\tau\zeta\tau\eta,\tau\sigma}'.
    \end{equation*}
    It follows that $\mathcal{A}$ is $\tau$-isomorphic to $\mathcal{A}'$.

    Suppose now that $\mathscr{C}=\mathscr{C}_\odot$ it the set of atomic clusters. Then, for all $\sigma \in \Omega_{\zeta\eta}$,
    \begin{equation*}
        c_{\zeta\eta,\sigma}=\frac{h_{\mathfrak{D_{\zeta \eta}}}^\Phi(\sigma)}{\sum\limits_{\xi \in \Omega_{\zeta\eta}}h_{\mathfrak{D_{\zeta \eta}}}^\Phi(\xi)}=\frac{h_{\mathfrak{D_{\tau\zeta \tau\eta}}}^{\Psi}(\tau\sigma)}{\sum\limits_{\xi' \in \Omega_{\tau\zeta\tau\eta}}h_{\mathfrak{D_{\tau\zeta \tau\eta}}}^{\Psi}(\xi')}=c_{\tau\zeta\tau\eta,\tau\sigma}'.
    \end{equation*}

    In particular, there exists $k_{\zeta\eta}>0$ depending on $\Omega_{\zeta\eta} $ such that
    \begin{equation} \label{eq:k.offspring}
        k_{\zeta\eta}:=\frac{h_{\mathfrak{D_{\zeta \eta}}}^\Phi(\sigma)}{h_{\mathfrak{D_{\tau\zeta \tau\eta}}}^{\Psi}(\tau\sigma)}=\frac{\sum\limits_{\xi \in \Omega_{\zeta\eta}}h_{\mathfrak{D_{\zeta \eta}}}^\Phi(\xi)}{\sum\limits_{\xi' \in \Omega_{\tau\zeta\tau\eta}}h_{\mathfrak{D_{\tau\zeta \tau\eta}}}^{\Psi}(\xi')}.
    \end{equation}

    Fix an arbitrary $\Lambda \in \mathcal{L}$ and let $\Theta_{\zeta,\Lambda}:=\{\eta \in \Omega: \mathfrak{D}_{\zeta,\eta}=\Lambda\}$. Since $\zeta,\eta \in \Omega_{\zeta\eta}$ for all $\eta \in \Theta_{\zeta,\Lambda}$, it follows from \eqref{eq:k.offspring} that
    \begin{equation} \label{eq:k.const.Lambda.offspring}
        h_\Lambda^\Phi(\zeta)=h_{\tau_\ast\Lambda}^\Psi(\tau\zeta) k_{\zeta\eta}=h_{\tau_\ast\Lambda}^\Psi(\tau\zeta) k_{\zeta\omega} \quad \text{for all }\eta,\omega \in \Theta_{\zeta,\Lambda}.
    \end{equation}
    Thus, $k_{\zeta,\eta}=k_{\zeta,\omega}$ for all $\eta,\omega \in \Theta_{\zeta,\Lambda}$. Let us fix $k^\zeta_\Lambda=k_{\zeta,\eta}$ for $\eta \in \Theta_{\zeta,\Lambda}$.

    Consider now $\sigma= \xi\zeta_{\Lambda^c}$ with $\xi \in S^\Lambda$. Then, since $\mathscr{C}=\mathscr{C}_\odot$, there exists $\eta \in \Theta_{\zeta,\Lambda}$ such that $\sigma \in \Omega_{\zeta\eta}$. By \eqref{eq:k.const.Lambda.offspring}, one has that
    $h_\Lambda^\Phi(\sigma)= h_{\tau_\ast\Lambda}^\Psi(\tau\sigma)k_\Lambda^{\zeta}$. It then follows from \eqref{eq:Hamiltonian.tau.Psi} that, for all $\xi \in S^\Lambda$,
    \[
        k_\Lambda^\zeta = \frac{h_\Lambda^\Phi(\xi\zeta_{\Lambda^c})}{h_{\tau_\ast\Lambda}^\Psi(\tau~\xi\zeta_{\Lambda^c})} = \frac{h_\Lambda^\Phi(\xi\zeta_{\Lambda^c})}{h_\Lambda^{\tau^{-1}(\Psi)}(\xi\zeta_{\Lambda^c})},
    \]
    which in turn proves that $H_\Lambda^{\Phi}-H_\Lambda^{\tau^{-1}(\Psi)}$ is $\mathscr{T}_{\Lambda}$-measurable for all $\Lambda \in \mathcal{L}$. In other words, $\Phi \sim \tau^{-1}(\Psi)$.
\end{proof}

The property presented by $\mathscr{C}_\odot$ in \cref{thm:genetic.T.equiv.potentials} shows that we may recover  information regarding the interaction potential from the algebra when $\mathscr{C}=\mathscr{C}_\odot$. This would be expected when $\vert S\vert =2$ due to the property presented in \cref{ex:atomic.cluster} but \eqref{eq:k.const.Lambda.offspring} allows us to compare all configurations varying in a given region. 

We will show a result for some functionals of the $\mathscr{C}$-genetic Gibbs algebra. Denote by $\hom(\mathcal{A},\mathbb{K})$ the set of multiplicative linear functionals of the $\mathbb{K}$-algebra $\mathcal{A}=\mathcal{A}(\mathscr{C},\mu,\Phi,\Omega)$ to $\mathbb{K}$, \textit{i.e.}, the set of linear functionals $\varphi:\mathcal{A}\to\mathbb{K}$ such that $\varphi(u\cdot v)=\varphi(u)\cdot\varphi(v)$. Let $\pi_\mathtt{F}$ stand for the multiplicative linear functional such that, for any $\sigma\in\Omega$, $\pi_\mathtt{F}(e_\sigma)=1$ when $e_\sigma\in \mathtt{F} \subset \mathcal{A}$ and $\pi_\mathtt{F}(e_\sigma)=0$ otherwise.

\begin{theorem}[Multiplicative functionals of $\mathcal{A}$] \label{thm:hom.A}

    Let $\mathcal{A} = \mathcal{A}(\mathscr{C},\mu,\Phi,\Omega)$ be a $\mathscr{C}$-genetic Gibbs algebra generated by $\mu\in\mathscr{G}(\Phi)$ on $\Omega$. Then
    \begin{equation*}
    \hom(\mathcal{A},\mathbb{K})=\Big\{\pi_\mathtt{F}:\mathtt{F}\in\mathcal{F}_{\Omega}\cup\{\langle0\rangle\}\Big\}.
    \end{equation*}
    \end{theorem}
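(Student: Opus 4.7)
The plan is to verify first that each $\pi_{\mathtt{F}}$ in the alleged list is indeed a multiplicative functional, and then to show that any nonzero $\varphi \in \hom(\mathcal{A},\K)$ coincides with $\pi_{\mathtt{F}^\sigma}$ for some $\sigma \in \Om$. The zero functional $\pi_{\langle 0\rangle}$ is trivially multiplicative, so attention turns to $\pi_{\mathtt{F}^\eta}$ with $\eta \in \Om$.

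For multiplicativity of $\pi_{\mathtt{F}^\eta}$, I would split into cases on whether the inputs $e_\zeta, e_\omega$ lie in $\mathtt{F}^\eta$. When both lie in $\mathtt{F}^\eta$, i.e.\ $\zeta, \omega \in \mathtt{E}^\eta$, then $\mathfrak{D}_{\zeta\omega}$ is finite, so $e_\zeta \cdot e_\omega = \sum_{\sigma \in \Om_{\zeta\omega}} c_{\zeta\omega,\sigma} e_\sigma$, and by \cref{lm:discrep} every $\sigma \in \Om_{\zeta\omega}$ satisfies $\mathfrak{D}_{\sigma\zeta} \subseteq \mathfrak{D}_{\zeta\omega}$, whence $\sigma \in \mathtt{E}^\zeta = \mathtt{E}^\eta$; by the Markov property (\cref{thm:genetic.decomposition}) the coefficients sum to $1$, matching $\pi_{\mathtt{F}^\eta}(e_\zeta)\pi_{\mathtt{F}^\eta}(e_\omega)=1$. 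If exactly one of $\zeta,\omega$ fails to lie in $\mathtt{E}^\eta$, the two configurations differ macroscopically, so $\Om_{\zeta\omega} = \emptyset$ and $e_\zeta \cdot e_\omega = 0$. If neither lies in $\mathtt{E}^\eta$, either the product is zero, or else $\zeta \in \mathtt{E}^\omega$ and all offspring again lie in $\mathtt{E}^\zeta \neq \mathtt{E}^\eta$, giving $\pi_{\mathtt{F}^\eta}(e_\zeta \cdot e_\omega) = 0$.

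For the converse, fix a nonzero $\varphi \in \hom(\mathcal{A},\K)$. Since $\Om_{\eta\eta} = \{\eta\}$ with $c_{\eta\eta,\eta} = 1$, every $e_\eta$ is idempotent, so $\varphi(e_\eta) \in \{0,1\}$. Pick $\sigma \in \Om$ with $\varphi(e_\sigma) = 1$. For any $\eta \notin \mathtt{E}^\sigma$, the configurations $\sigma,\eta$ differ macroscopically, $e_\sigma \cdot e_\eta = 0$, and hence $\varphi(e_\eta) = 0$. It remains to prove $\varphi(e_\eta) = 1$ for all $\eta \in \mathtt{E}^\sigma$, which I would carry out by induction on $n := \vert\mathfrak{D}_{\sigma\eta}\vert$. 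The base $n=0$ gives $\eta = \sigma$. Assuming the claim for all $\eta' \in \mathtt{E}^\sigma$ with $\vert\mathfrak{D}_{\sigma\eta'}\vert < n$, take $\eta \in \mathtt{E}^\sigma$ with $\vert\mathfrak{D}_{\sigma\eta}\vert = n$ and apply $\varphi$ to $e_\sigma \cdot e_\eta$:
\[
\varphi(e_\eta) \;=\; \varphi(e_\sigma)\varphi(e_\eta) \;=\; \sum_{\xi \in \Om_{\sigma\eta}} c_{\sigma\eta,\xi}\,\varphi(e_\xi).
\]
By \cref{lm:discrep}, every $\xi \in \Om_{\sigma\eta} \setminus \{\eta\}$ has $\vert\mathfrak{D}_{\sigma\xi}\vert < n$, so $\varphi(e_\xi) = 1$ by induction; combined with $\sum_{\xi\in\Om_{\sigma\eta}} c_{\sigma\eta,\xi} = 1$, this gives $(1-c_{\sigma\eta,\eta})\varphi(e_\eta) = 1 - c_{\sigma\eta,\eta}$. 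Since $\sigma \in \Om_{\sigma\eta}$ and the structural coefficients are strictly positive (finite $S$, admissible potential), $c_{\sigma\eta,\eta} < 1$, so $\varphi(e_\eta) = 1$, completing the induction and yielding $\varphi = \pi_{\mathtt{F}^\sigma}$.

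The step I expect to require the most care is the inductive argument, specifically extracting $\varphi(e_\eta)$ from the relation above; it rests on the strict inequality $c_{\sigma\eta,\eta} < 1$, which in turn uses the finiteness of $S$ and admissibility of $\Phi$ to guarantee $h_{\mathfrak{D}_{\sigma\eta}}^\Phi(\sigma) > 0$. Everything else is organizational bookkeeping from \cref{lm:discrep} and \cref{thm:genetic.decomposition}.
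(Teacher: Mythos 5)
Your proposal is correct. The overall skeleton matches the paper's: idempotency of each $e_\eta$ forces $\varphi(e_\eta)\in\{0,1\}$, vanishing of products across distinct fertile classes forces $\varphi$ to be supported on a single $\mathtt{E}^\sigma$, and strict positivity of the structural coefficients forces $\varphi\equiv 1$ on that class. The one place you genuinely diverge is the last step: you run an induction on $\vert\mathfrak{D}_{\sigma\eta}\vert$ (mirroring the induction in \cref{thm:genetic.decomposition}) and solve $(1-c_{\sigma\eta,\eta})\varphi(e_\eta)=1-c_{\sigma\eta,\eta}$, whereas the paper avoids induction entirely: since $\eta\in\Om_{\zeta\eta}$ and $\varphi(e_\eta)=1$, the value $\varphi(e_\zeta)\varphi(e_\eta)=\sum_{\xi}c_{\zeta\eta,\xi}\varphi(e_\xi)\geq c_{\zeta\eta,\eta}>0$ already forces $\varphi(e_\zeta)\varphi(e_\eta)=1$ (it lies in $\{0,1\}$), hence $\varphi(e_\zeta)=1$ in one stroke. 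Both arguments rest on the same two facts ($c_{\zeta\eta,\xi}>0$ and $\sum_\xi c_{\zeta\eta,\xi}=1$); the paper's is shorter, yours is more mechanical but equally valid. You are also more thorough on the forward inclusion — the paper only remarks that $\pi_{\mathtt{F}^\eta}$ "is well-defined by linear extension," while you actually check the case analysis showing each $\pi_{\mathtt{F}}$ is multiplicative, which is a worthwhile addition since that inclusion is half of the stated equality.
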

    
\begin{proof}
    It is straightforward to see that $\pi_{\langle0\rangle}\equiv0 \in hom(\mathcal{A},\mathbb{K})$. Let $\varphi\in\hom(\mathcal{A},\mathbb{K})\setminus\{\pi_{\langle0\rangle}\}$. Since every $e_\eta \in \mathfrak{B}_{\Omega}$ is idempotent, one has that, for every $\eta \in \Omega$,
    \begin{equation*}
    \varphi(e_\eta)=\varphi(e_\eta^2)=\varphi(e_\eta)^2.
    \end{equation*}
    It follows that $\varphi(e_\eta)\in\{0,1\}$. In fact, there exists $\eta \in \Omega$ such that $\varphi(e_\eta)=1$. If $\eta\notin \mathtt{E}^\sigma$, $e_\eta e_\sigma=0$ and, since $\mathbb{K}$ is an integral domain, $\varphi\vert _{\mathtt{F}^\sigma}=0$. 
        
    Observe now that if $\zeta\in \mathtt{E}^\eta$,
    \[
        \varphi(e_\zeta \cdot e_\eta) = \sum_{\xi \in \Omega_{\zeta\eta}}c_{\zeta\eta,\xi} \varphi(e_\xi)
    \]
    with $c_{\zeta\eta,\xi} \in (0,1)$ and $\varphi(e_\xi) \in \{0,1\}$ for all $\xi \in \Omega_{\zeta\eta}$. We also have that $\varphi(e_\zeta)\varphi(e_\eta) \in \{0,1\}$. Note that $\eta \in \Omega_{\zeta\eta}$, then $\varphi(e_\zeta)\varphi(e_\eta) \geq c_{\zeta\eta,\eta}$, which implies that
    \[
        \varphi(e_\zeta \cdot e_\eta) = \sum_{\xi \in \Omega_{\zeta\eta}}c_{\zeta\eta,\xi} =1.
    \]
    
    Therefore, $\varphi(e_\zeta)=1$ for all $\zeta \in \mathtt{E}^\eta$, $\varphi=\pi_{\mathtt{F}^\eta}$ and $\pi_{\mathtt{F}^\eta}$ is well-defined by linear extension.
\end{proof}

The theorem above shows us that all weight homomorphisms of $\mathcal{A}=\mathcal{A}(\mathscr{C},\mu,\Phi,\Omega)$ are determined by $\pi_\mathtt{F}$ with $\mathtt{F}\in \mathcal{F}_{\Omega}$. Hence, for all $\mathtt{F} \in \mathcal{F}_{\Omega}$,  $(\mathcal{A},\pi_\mathtt{F})$ is a weighted algebra. In particular, if $\mathbb{L}$ is finite, then $(\mathcal{A}, \pi_\mathcal{A})$ is the unique possible weighted algebra for $ \mathcal{A}$.

Let us introduce a new notation to extend our results to product probability spaces. Consider $\Phi^i$ an interaction potential on $\Omega_i = S^{\mathbb{L}_i}$ for $i \in I$. Set $\Omega = \prod_{i \in I} \Omega_i$ and $\mathcal{L}$ the se of finite subsets of $\mathbb{L} = \bigsqcup_{i \in I} \mathbb{L}_i$. Define $\bigoplus_{i\in I} \Phi^i = (\Phi_A)_{A \in \mathcal{L}}$ to be given by 
\begin{equation} \label{eq:def.sum potentials}
    \Phi_A(\sigma) := \sum_{i \in I}\Phi^i_{A_i}(\sigma_{\mathbb{L}_i})
\end{equation}
for each $\sigma \in \Omega$ and $A = \bigsqcup_{i \in I}A_i$ with $A_i \subseteq \mathbb{L}_i$. We can now state a theorem that establish a relation between finite products of Gibbs probability spaces and tensor algebras. From the probabilistic point of view, the product measure indicates independence which, in terms of Gibbs measures, is a lack interaction between certain regions. We will see below that this property is translated as the tensor product by the genetic algebra.

\begin{theorem}[Genetic algebras generated by products of Gibbs measures] \label{thm:genetic.product.v.tensor.alg}
    Let $\{\mathbb{L}_i\}_{i=1}^n$ be a sequence of countable sets such that, for each $i\in \{1, \dots, n\}$, $\mathscr{C}_i$ is a partition of $\mathbb{L}_i$ associated with a Gibbs measure $\mu_i \in \mathscr{G}(\Phi^i)$ on $\Omega_i = S^{\mathbb{L}_i}$ with $S$ a fixed finite set of spins.
    
    Consider the genetic Gibbs algebras $\mathcal{A}_i:= \mathcal{A}(\mathscr{C}_i,\mu_i,\Phi^i,\Omega_i)$ for  all $i \in \{1, \dots, n\}$,  and \[\mathcal{A}=\mathcal{A}\left(\bigsqcup_{i=1}^n \mathscr{C}_i,\bigotimes_{i=1}^n\mu_i,\bigoplus_{i=1}^n\Phi^i,\prod_{i=1}^n\Omega_i\right).\] Then $\mathcal{A}$ is isomorphic to the tensor algebra $\bigotimes_{i =1}^n \mathcal{A}_i$ equipped with the ordinary product.
\end{theorem}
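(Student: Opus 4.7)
The strategy is to exhibit an explicit isomorphism $\varphi:\mathcal{A}\to\bigotimes_{i=1}^{n}\mathcal{A}_i$. Using the canonical identification $\Om \leftrightarrow \prod_i \Om_i$ given by $\sigma \leftrightarrow (\sigma^1,\ldots,\sigma^n)$ with $\sigma^i:=\sigma_{\mathbb{L}_i}$, I would set $\varphi(e_\sigma) := e_{\sigma^1}\otimes\cdots\otimes e_{\sigma^n}$ and extend $\K$-linearly. Since pure tensors of basis elements form a Hamel basis of $\bigotimes_i\mathcal{A}_i$ and $\sigma\leftrightarrow(\sigma^1,\ldots,\sigma^n)$ is a bijection, $\varphi$ is a $\K$-linear bijection by construction; the entire content of the theorem reduces to checking that $\varphi$ is multiplicative.

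Multiplicativity would follow from three factorizations, which I would establish in order. First, since $\mathbb{L}=\bigsqcup_i\mathbb{L}_i$, one has $\mathfrak{D}_{\zeta\eta}=\bigsqcup_i\mathfrak{D}_{\zeta^i\eta^i}$, so $\zeta,\eta$ differ microscopically iff each pair $(\zeta^i,\eta^i)$ does; in the macroscopic case both sides are zero (at least one factor $e_{\zeta^i}\cdot e_{\eta^i}$ vanishes, killing the tensor product). Second, because $\bigsqcup_i\mathscr{C}_i$ is the partition on $\mathbb{L}$, every cluster lies entirely in a single $\mathbb{L}_i$; hence the identification $\Om\leftrightarrow\prod_i\Om_i$ restricts to a bijection $\Om_{\zeta\eta}\leftrightarrow\prod_i\Om^i_{\zeta^i\eta^i}$. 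Third, the definition $\Phi_A(\sigma)=\sum_i\Phi^i_{A_i}(\sigma^i)$ together with the disjointness $\mathbb{L}=\bigsqcup_i\mathbb{L}_i$ yields the Hamiltonian splitting $H^\Phi_{\mathfrak{D}_{\zeta\eta}}(\sigma)=\sum_{i=1}^{n}H^{\Phi^i}_{\mathfrak{D}_{\zeta^i\eta^i}}(\sigma^i)$, whence by exponentiation
\begin{equation*}
 h^\Phi_{\mathfrak{D}_{\zeta\eta}}(\sigma)=\prod_{i=1}^{n} h^{\Phi^i}_{\mathfrak{D}_{\zeta^i\eta^i}}(\sigma^i).
\end{equation*}

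Combining the last two items, the partition function in the denominator of $c_{\zeta\eta,\sigma}$ becomes a product of partition functions by finite Fubini, giving $c_{\zeta\eta,\sigma}=\prod_i c^i_{\zeta^i\eta^i,\sigma^i}$. Applying $\varphi$ to $e_\zeta\cdot e_\eta=\sum_{\sigma\in\Om_{\zeta\eta}}c_{\zeta\eta,\sigma}\,e_\sigma$ and redistributing the product of sums,
\begin{equation*}
 \varphi(e_\zeta\cdot e_\eta)=\bigotimes_{i=1}^{n}\Bigl(\sum_{\sigma^i\in\Om^i_{\zeta^i\eta^i}} c^i_{\zeta^i\eta^i,\sigma^i}\, e_{\sigma^i}\Bigr)=\bigotimes_{i=1}^{n}(e_{\zeta^i}\cdot_i e_{\eta^i})=\varphi(e_\zeta)\cdot\varphi(e_\eta),
\end{equation*}
where the middle expression uses the componentwise product on $\bigotimes_i\mathcal{A}_i$. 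This completes the argument.

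The main obstacle is step three, the additive decomposition of the Hamiltonian from the definition of $\bigoplus_i\Phi^i$. Care is needed because $\Phi_A$ is defined for every finite $A\subseteq\mathbb{L}$ (not only for $A$ lying in a single $\mathbb{L}_i$), so one must verify that summing $\Phi_A$ over all $A$ with $A\cap\mathfrak{D}_{\zeta\eta}\neq\emptyset$ reorganizes into $\sum_i H^{\Phi^i}_{\mathfrak{D}_{\zeta^i\eta^i}}(\sigma^i)$ without overcounting the cross-region contributions. Once that additive splitting is nailed down, the remainder is pure algebraic bookkeeping; as a byproduct, the Boltzmann factorization also confirms that $\bigotimes_i\mu_i\in\mathscr{G}(\bigoplus_i\Phi^i)$, so $\mathcal{A}$ is a legitimate $\mathscr{C}$-genetic Gibbs algebra in the sense of \cref{sec:infinite.dim.genetic.algebras}.
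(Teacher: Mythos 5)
Your proposal is correct and follows essentially the same route as the paper: the same map $\varphi(e_\sigma)=e_{\sigma^1}\otimes\cdots\otimes e_{\sigma^n}$, the same three factorizations (discrepancy sets, offspring sets $\Om_{\zeta\eta}=\prod_i\Om^i_{\zeta^i\eta^i}$, and the Hamiltonian/Boltzmann splitting yielding $c_{\zeta\eta,\sigma}=\prod_i c^i_{\zeta^i\eta^i,\sigma^i}$), with the only difference being that the paper verifies well-definedness of $\mathcal{A}$ (admissibility of $\bigoplus_i\Phi^i$ and $\bigotimes_i\mu_i\in\mathscr{G}(\bigoplus_i\Phi^i)$) up front rather than as a byproduct. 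The reorganization of $\sum_{A\cap\mathfrak{D}_{\zeta\eta}\neq\emptyset}\Phi_A$ that you rightly flag as the delicate point is asserted without further comment in the paper's proof as well, so you have not omitted anything the paper supplies.
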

\begin{proof}
    First, we verify that $\mathcal{A}$ is well-defined. One can easily see that the disjoint union of cluster sets $\mathscr{C}:= \bigsqcup_{i=1}^n \mathscr{C}_i$ is a partition of $\mathbb{L}= \bigsqcup_{i=1}^n \mathbb{L}_i$. Note that $\Omega :=\prod_{i=1}^n\Omega_i = S^{\mathbb{L}}$. We will now prove that $\Phi:=\bigoplus_{i=1}^n\Phi^i$ is an admissible potential and the product measure $\mu:= \bigotimes_{i=1}^n\mu_i$ is such that $\mu \in \mathscr{G}(\Phi)$.

    Let $\sigma \in \Omega$ and $\Lambda \in \mathcal{L}$ be such that $\Lambda = \bigsqcup_{i=1}^n \Lambda_i$ with $\Lambda_i \subseteq \mathbb{L}_i$. Then, using the notation and definition from \eqref{eq:def.sum potentials}, 
    \[H_\Lambda^\Phi(\sigma) = \sum_{A \in \mathcal{L}, ~A \cap \Lambda \neq \emptyset}\left(\sum_{i=1}^n \Phi_A(\sigma_{\mathbb{L}_i})\right) = \sum_{i=1}^n H_{\Lambda_i}^{\Phi^i}(\sigma_{\mathbb{L}_i}).\]

    It follows that $\Phi$ is admissible and $h_\Lambda^\Phi(\sigma) = \prod_{i=1}^n h_{\Lambda_i}^{\Phi^i}(\sigma_{\mathbb{L}_i})$ for all $\sigma \in \Omega$ and $ \Lambda \in \mathcal{L}$. Hence,
    \[Z_\Lambda^\Phi = \sum_{\sigma \in S^\Lambda} h_\Lambda^\Phi(\sigma) = \sum_{\sigma_1 \in S^{\Lambda_1}}\cdots\sum_{\sigma_n \in S^{\Lambda_n}}\prod_{i=1}^{n}h_{\Lambda_i}^{\Phi^i}(\sigma_i) = \prod_{i=1}^n Z_{\Lambda_i}^{\Phi^i}.\]
    Observe that the $Z_\Lambda^\Phi $ is well-defined even when a $\Lambda_i = \emptyset$, since $H_{\emptyset}^{\Phi^i}\equiv0$, $h_{\emptyset}^{\Phi^i}\equiv1$ and $S^{\emptyset}=\{\emptyset\}$. One has that, for all  $\sigma, \xi \in \Omega$,
    \begin{align}
    \mu_\Lambda(\sigma\mid \xi) = \frac{h_\Lambda^\Phi(\sigma) }{Z_\Lambda^\Phi}\mathds{1}_{\{\sigma\}}(\sigma_\Lambda\xi_{\Lambda^c})&= \prod_{i=1}^n\frac{h_{\Lambda_i}^{\Phi^i}(\sigma_{\mathbb{L}_i}) }{Z_{\Lambda_i}^{\Phi^i}}\mathds{1}_{\{\sigma_{\mathbb{L}_i}\}}(\sigma_{\Lambda_i}\xi_{\mathbb{L}_i \setminus \Lambda_i}) \nonumber\\
    &= \prod_{i=1}^n \mu_i(\sigma_{\mathbb{L}_i}\mid\xi_{\mathbb{L}_i \setminus \Lambda_i}) = \left(\bigotimes_{i=1}^n \mu_i\right)(\sigma\mid\xi_{\Lambda^c}). \label{eq:equality.product.measure}
    \end{align}
    Therefore, $\mu$ is a Gibbs measure such that $\mu \in \mathscr{G}(\Phi)$. We now proceed to prove the isomorphism between $\mathcal{A}$ and the tensor algebra $\bigotimes_{i=1}^n \mathcal{A}_i$.

    Let $\sigma, \eta, \zeta \in \Omega$ and write $\sigma= \sigma_1 \cdots \sigma_n$, $\eta= \eta_1 \cdots \eta_n$, and  $\zeta= \zeta_1 \cdots \zeta_n$ with $\sigma_i, \eta_i, \zeta_i \in \Omega_i$ for all $i \in \{1, \dots, n\}$. Since $\mathscr{C}$ is the disjoint union of the sets of clusters $\mathscr{C}_i$'s, every $\zeta \in \Omega_{\sigma\eta}$ is determined by $\zeta_i \in \Omega_{\sigma_i\eta_i}$ for each $i \in \{1, \dots, n\}$, then one can easily see that 
    \[\Omega_{\sigma\eta} = \prod_{i=1}^n \Omega_{\sigma_i\eta_i}.\]
    The equality above holds even when $\sigma \not \in \mathtt{E}^\eta$, since there exists $\Omega_{\sigma_i\eta_i} = \emptyset$ and, necessarily, $\Omega_{\sigma\eta}= \emptyset$.

    Consider $\varphi: \mathcal{A} \to \bigotimes_{i=1}^n\mathcal{A}_i$ be defined by linear extension of $\varphi(e_\sigma) = \varphi(e_{\sigma_1 \cdots \sigma_n}) = e_{\sigma_1}\otimes \cdots \otimes e_{\sigma_n}$. The ordinary product of tensor algebra $\bigotimes_{i=1}^n\mathcal{A}_i$  and \eqref{eq:equality.product.measure} imply that
    \begin{align*}
        \varphi(e_\sigma e_\eta) =\varphi(e_\sigma)\varphi(e_\eta) &= (e_{\sigma_1}\otimes \cdots \otimes e_{\sigma_n}) \cdot (e_{\eta_1}\otimes \cdots \otimes e_{\eta_n})\\ &= \left(\sum_{\zeta_1 \in \Omega_{\sigma_1\eta_1}}c_{\sigma_1\eta_1,\zeta_1} e_{\zeta_1}\right) \otimes \cdots \otimes \left(\sum_{\zeta_n \in \Omega_{\sigma_n\eta_n}}c_{\sigma_n\eta_n,\zeta_n} e_{\zeta_n}\right)\\
        &= \sum_{\zeta_1 \in \Omega_{\sigma_1\eta_1}}\cdots\sum_{\zeta_n \in \Omega_{\sigma_n\eta_n}}\prod_{i=1}^n c_{\sigma_i\eta_i,\zeta_1} ~ e_{\zeta_i}\otimes \cdots \otimes e_{\zeta_n}\\
        &= \sum_{\zeta \in \Omega_{\sigma\eta}} c_{\sigma\eta,\zeta} ~ \varphi(e_\zeta),
    \end{align*}
    which proves the theorem.
\end{proof}

The finite subalgebras of $\mathcal{A}(\mathscr{C},\Phi,\Omega)$ may exhibit a similar structure to the finite Gibbs algebras. Define $\mathscr{C}\vert_{L} := \{\Delta \cap L : \Delta\in \mathscr{C}\}\setminus\{\emptyset\}$ to be a partition of  $L \subseteq \mathbb{L}$ given by $\mathscr{C}$. We show below a special relation when they are a direct sum of subalgebras of Markov ideals $\mathtt{F}\in \mathcal{F}_{\Omega}$.

\begin{theorem} \label{thm:finite.subalgebras}
    Let $\mathcal{A}_f$ be a finite-dimensional subalgebra of $\mathcal{A}(\mathscr{C},\Phi,\Omega)$ such that the interaction potential $\Phi$ has finite range and
    \[\mathcal{A}_f = \bigoplus_{i=1}^n~ \mathtt{A}_i, \quad \text{with } \mathtt{A}_i \text{ subalgebras of distinct } \mathtt{F}_i \in \mathcal{F}_{\Omega}.\] Then there exists $\Lambda \in \mathcal{L}$ such that each $\mathtt{A}_i$ is isomorphic to a subalgebra of the finite-dimensional $\mathscr{C}\vert_\Lambda$-genetic Gibbs algebra $\mathlcal{A}_f(\mathscr{C}\vert_\Lambda,\mu_\Lambda(\cdot\mid\xi),S^\Lambda)$ with any fixed $\xi \in \Omega$. Moreover, the same holds for every $\mathlcal{A}_f(\mathscr{C}\vert_{\Lambda'},\mu_{\Lambda'}(\cdot\mid\xi),S^{\Lambda'})$ with $\Lambda \subseteq \Lambda' \in \mathcal{L}$.
\end{theorem}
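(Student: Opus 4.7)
The plan is to realize each $\mathtt{A}_i$ as a subalgebra of a finite-dimensional ``slice'' $\mathtt{B}_i\subseteq\mathtt{F}_i$ generated by configurations that coincide with a reference $\eta_i$ outside a common $\Lambda\in\mathcal{L}$, and then transport it to $\mathlcal{A}_f(\mathscr{C}|_{\Lambda},\mu_{\Lambda}(\cdot\mid\xi),S^{\Lambda})$ via the restriction $\sigma\mapsto\sigma|_{\Lambda}$. Concretely, fix $\eta_i\in\Om$ with $\mathtt{F}_i=\mathtt{F}^{\eta_i}$. Since $\mathtt{A}_i$ is finite-dimensional and lies in $\mathtt{F}^{\eta_i}=\langle\mathfrak{B}_{\mathtt{E}^{\eta_i}}\rangle$, the union of supports of any basis of $\mathtt{A}_i$ is a finite set $\Sigma_i\subseteq\mathtt{E}^{\eta_i}$, so there is some $\Lambda_i\in\mathcal{L}$ with $\mathfrak{D}_{\sigma\eta_i}\subseteq\Lambda_i$ for every $\sigma\in\Sigma_i$. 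The finite-range hypothesis ensures $\Lambda:=cl_\Phi(\Lambda_1\cup\cdots\cup\Lambda_n)\in\mathcal{L}$. Define $\mathtt{B}_i:=\langle e_\sigma:\sigma|_{\Lambda^c}=\eta_i|_{\Lambda^c}\rangle$; this finite-dimensional subspace of $\mathtt{F}^{\eta_i}$ contains $\mathtt{A}_i$ by construction, and is closed under multiplication by \cref{lm:discrep}: for basis elements $e_\sigma,e_{\sigma'}$ of $\mathtt{B}_i$, every $\zeta\in\Om_{\sigma\sigma'}$ satisfies $\mathfrak{D}_{\sigma\zeta}\subseteq\mathfrak{D}_{\sigma\sigma'}\subseteq\Lambda$, so $\zeta|_{\Lambda^c}=\eta_i|_{\Lambda^c}$. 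The restriction $\sigma\mapsto\sigma|_{\Lambda}$ bijects the generating set of $\mathtt{B}_i$ with $S^{\Lambda}$, so one obtains a linear isomorphism $\phi_i:\mathtt{B}_i\to\mathlcal{A}_f(\mathscr{C}|_{\Lambda},\mu_{\Lambda}(\cdot\mid\xi),S^{\Lambda})$ by linear extension of $e_\sigma\mapsto e_{\sigma|_{\Lambda}}$.

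The core step is verifying that $\phi_i$ respects multiplication, i.e.\ that the restriction bijects $\Om_{\sigma\sigma'}$ with $\thickbar{\Om}_{\sigma|_{\Lambda}\sigma'|_{\Lambda}}$ and matches the structural coefficients $c_{\sigma\sigma',\zeta}=a_{\sigma|_{\Lambda}\sigma'|_{\Lambda},\,\zeta|_{\Lambda}}$. The bijection of offspring sets is immediate from $\mathfrak{D}_{\sigma\sigma'}\subseteq\Lambda$ and the fact that clusters $\Delta\in\mathscr{C}$ not meeting $\mathfrak{D}_{\sigma\sigma'}$ impose no genuine choice on either side. Since both coefficient families are probabilities summing to $1$ over offspring, matching them reduces to matching ratios
\[
\frac{h^\Phi_{\Lambda}(\zeta|_{\Lambda}\,\xi_{\Lambda^c})}{h^\Phi_{\Lambda}(\zeta_0|_{\Lambda}\,\xi_{\Lambda^c})}=\frac{h^\Phi_{\mathfrak{D}_{\sigma\sigma'}}(\zeta)}{h^\Phi_{\mathfrak{D}_{\sigma\sigma'}}(\zeta_0)}
\]
for any reference $\zeta_0\in\Om_{\sigma\sigma'}$. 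Expanding the difference $H^\Phi_{\Lambda}(\zeta|_{\Lambda}\xi_{\Lambda^c})-H^\Phi_{\Lambda}(\zeta_0|_{\Lambda}\xi_{\Lambda^c})$ and splitting according to whether the interaction set $A$ meets $\mathfrak{D}_{\sigma\sigma'}$, the ``not-meeting'' terms cancel because $\zeta$ and $\zeta_0$ agree on $A$ (inside $\Lambda$ they both equal $\sigma$, outside $\Lambda$ they both equal $\xi$). For the ``meeting'' terms the finite-range hypothesis is decisive: any such $A$ with $\Phi_A\not\equiv 0$ lies inside $cl_\Phi(\mathfrak{D}_{\sigma\sigma'})\subseteq\Lambda$, so $\Phi_A(\zeta|_{\Lambda}\xi_{\Lambda^c})=\Phi_A(\zeta)$, and their sum recovers $H^\Phi_{\mathfrak{D}_{\sigma\sigma'}}(\zeta)-H^\Phi_{\mathfrak{D}_{\sigma\sigma'}}(\zeta_0)$.

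Hence $\phi_i$ is an algebra isomorphism and $\phi_i(\mathtt{A}_i)$ is a subalgebra of $\mathlcal{A}_f(\mathscr{C}|_{\Lambda},\mu_{\Lambda}(\cdot\mid\xi),S^{\Lambda})$ isomorphic to $\mathtt{A}_i$. The ``moreover'' clause is automatic: the only property of $\Lambda$ used is $\Lambda\supseteq cl_\Phi(\Lambda_1\cup\cdots\cup\Lambda_n)$, which is preserved by any larger $\Lambda'\supseteq\Lambda$ in $\mathcal{L}$, and the same construction of $\mathtt{B}_i$ and $\phi_i$ (with $\Lambda'$ in place of $\Lambda$) gives the corresponding embedding. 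I expect the main obstacle to be exactly the coefficient-matching step: the boundary condition $\xi$ in the target is arbitrary and unrelated to the $\eta_i$'s, and the finite-range hypothesis is precisely what permits absorbing the $\xi$-dependent contributions into a common factor that cancels in the ratio and thereby reproduces the intrinsic structural coefficients $c_{\sigma\sigma',\zeta}$.
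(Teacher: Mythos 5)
Your proposal is correct and follows essentially the same route as the paper: take $\Lambda$ to be the $\operatorname{cl}_\Phi$-closure of the (finite) union of all relevant discrepancy sets, send each basis configuration to an element of $S^\Lambda$, and match the structural coefficients by splitting the Hamiltonian according to whether $A$ meets $\mathfrak{D}_{\sigma\sigma'}$, using $\mathscr{F}_A$-measurability for the non-meeting terms and finite range (so that $A\subseteq\operatorname{cl}_\Phi(\mathfrak{D}_{\sigma\sigma'})\subseteq\Lambda$) for the meeting ones; the paper's map $f_i(\sigma)=\sigma_{\Lambda_i}\xi^i_{\Lambda\setminus\Lambda_i}$ differs from your restriction $\sigma\mapsto\sigma|_\Lambda$ only in how the sites of $\Lambda$ outside $\Lambda_i$ are filled, which is immaterial. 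One caveat: your claim that $\phi_i$ is multiplicative on all of $\mathtt{B}_i=\langle e_\sigma:\sigma|_{\Lambda^c}=\eta_i|_{\Lambda^c}\rangle$ is slightly too strong, since generators of $\mathtt{B}_i$ may have $\mathfrak{D}_{\sigma\sigma'}\subseteq\Lambda$ with $\operatorname{cl}_\Phi(\mathfrak{D}_{\sigma\sigma'})\not\subseteq\Lambda$ (the closure operation need not be idempotent), so the coefficient-matching step can fail for such pairs; this does not affect the theorem, because the pairs arising from $\mathtt{A}_i$ have $\mathfrak{D}_{\sigma\sigma'}\subseteq\Lambda_1\cup\cdots\cup\Lambda_n$ \emph{before} taking the closure, and restricting $\phi_i$ to $\mathtt{A}_i$ is all that is needed. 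A small point in your favour: by embedding $\mathtt{A}_i$ into an ambient span $\mathtt{B}_i$ of standard basis vectors you avoid the paper's unjustified assumption that $\mathtt{A}_i$ itself is spanned by standard basis elements.
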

\begin{proof}
    First, let $\mathtt{E}_i=\{\eta \in \mathtt{E}^{\widetilde{\sigma}(\mathtt{F}_i)}: e_\eta \in \mathtt{A}_i\}$, then $\mathtt{A}_i = \langle\mathfrak{B}_{\mathtt{E}_i}\rangle$ for every $i \in \{1, \dots, n\}$. Observe that, for all $\sigma,\eta \in \mathtt{E}_i$, $\mathfrak{D}_{\sigma\eta} \in \mathcal{L}$.
    Set 
    \[\Lambda := \operatorname{cl}_{\Phi}\left( \bigcup_{i=1}^n\left(\bigcup_{\sigma,\eta \in \mathtt{E}_i} \mathfrak{D}_{\sigma\eta}\right)\right).\]
    Hence, since $\Phi$ has finite range, $\Lambda \in \mathcal{L}$. Let us now define the algebra $\mathlcal{A}_{(\Lambda)}:=\mathlcal{A}_f(\mathscr{C}\vert_\Lambda,\mu_\Lambda(\cdot\mid\xi),S^\Lambda)$ as in Sec.\ref{sec:fin.dim.Gibbs.gen}  with a fixed $\xi \in \Omega$.

    Consider $\Lambda_i := \operatorname{cl}_\Phi\left(\bigcup_{\sigma,\eta \in \mathtt{E}_i} \mathfrak{D}_{\sigma\eta}\right)$ and choose $\xi^i \in \Omega$ arbitrarily fixed for each $i \in \{1, \dots, n\}$. Define $f_i: \mathtt{E}_i \to S^\Lambda$ such that $f_i(\sigma) = \sigma_{\Lambda_i}\xi^i_{\Lambda\setminus\Lambda_i}$. Set $\varphi_i: \mathtt{A}_i \to \mathlcal{A}_{(\Lambda)}$ to be given by linear extension of
    \[
        \varphi_i(e_\sigma) = e_{f_i(\sigma)}.
    \]

    Let us now prove that $\operatorname{Im}(\varphi_i)$ is a subalgebra of $\mathlcal{A}_{(\Lambda)}$ and $\varphi_i$ determines an isomorphism between $\mathtt{A}_i$ and $\operatorname{Im}(\varphi_i)$. It follows from the definitions of $f_i$  and $\mathscr{C}\vert_\Lambda$ that $\mathfrak{D}_{f_i(\sigma)f_i(\eta)}= \mathfrak{D}_{\sigma\eta}$ and $\Omega_{f_i(\sigma)f_i(\eta)} = \big\{f_i(\zeta):\zeta \in \Omega_{\sigma\eta}\big\}$. Note that, for all $\sigma,\eta \in \mathtt{E}_i$ and $\zeta \in \Omega_{\sigma\eta}$, since $\Phi_A$ is $\mathscr{F}_A$-measurable,
    \[H_{\mathfrak{D}_{\sigma\eta}}^\Phi(\zeta) = \sum_{A \in \mathcal{L}, ~ A \cap \mathfrak{D}_{\sigma\eta} \neq \emptyset}\Phi_A(\zeta)=\sum_{A \in \mathcal{L}, ~ A \cap \mathfrak{D}_{\sigma\eta} \neq \emptyset}\Phi_A\big(f_i(\zeta)\big) = H_{\mathfrak{D}_{f_i(\sigma)f_i(\eta)}}^\Phi\big(f_i(\zeta)\big).\]
    
    Therefore, by \cref{lm:finite.alg.equiv}, $c_{\sigma\eta,\zeta} = a_{f_i(\sigma)f_i(\eta),f_i(\zeta)}$ for all $\sigma,\eta \in \mathtt{E}_i$ and $\zeta \in \Omega_{\sigma\eta}$. Hence, by properties inherited from $\mathtt{A}_i$, we conclude that $\operatorname{Im}(\varphi_i)$ is a subalgebra of $\mathlcal{A}_{(\Lambda)}$ and $\varphi_i$ is an isomorphism of algebras.

    The result for $\mathlcal{A}_f(\mathscr{C}\vert_{\Lambda'},\mu_{\Lambda'}(\cdot\mid\xi),S^{\Lambda'})$ immediately obtained by replacing $\Lambda$ with $\Lambda' \in \mathcal{L}$ with $\Lambda \subseteq \Lambda'$ and following the same steps as above.
\end{proof}
\begin{figure}[!htb]
    \centering
    \includegraphics[width=200pt]{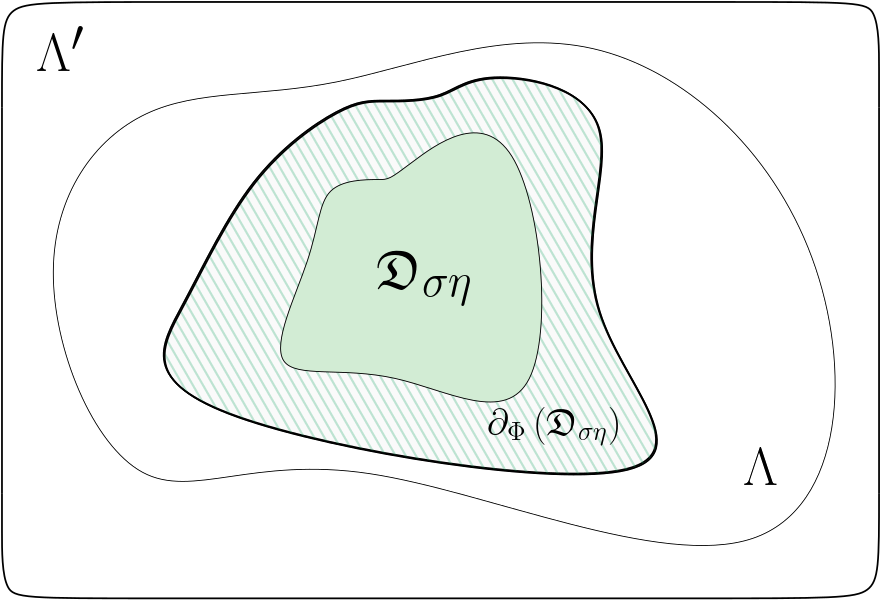}
    \caption{Regions determined by \cref{thm:finite.subalgebras} that eliminate the boundary effect on finite subalgebras.}
    \label{fig:finite.range.subalgebras}
\end{figure}

The theorem above may give us the impression that the the ideals $\mathtt{F} \in \mathcal{F}_{\Omega}$ could be isomorphic. However, it simply exhibits a finite algebraic compatibility between them. On the other hand, we will construct the evolution algebras in the next section so that we can establish the isomorphism of fertile classes when the potential has finite range (see \cref{thm:isomorphic.decomposition.E.M}).


\section{Infinite-dimensional Gibbs evolution algebras}

We will employ an approach similar to that used in genetic algebras to define evolution algebras generated by Gibbs measures. The main difference is that the evolution dynamics is determined by ordered pairs of configurations. The self-reproduction of a $(\sigma,\eta) \in \Omega^2$ will be compared with the coupling of its components $\sigma$ and $\eta$. The offspring produced by $(\sigma,\eta)$ is now given by the set $\Omega_{\sigma\eta}^2$.

Let $e_{\sigma\eta}$ stand for $e_{(\sigma,\eta)} \in \mathfrak{B}_{\Omega^2}$. The $\mathscr{C}$-evolution Gibbs algebra generated by $\mu \in \mathscr{G}(\Phi)$ on $\Omega$ is the free $\mathbb{K}-$module $\mathcal{E}(\mathscr{C},\mu,\Phi,\Omega)=\left\langle \mathfrak{B}_{\Omega^2} \right\rangle$ with product given by bilinear extension of
\[
    e_{\sigma\eta} \cdot e_{\sigma'\eta'} = \left\lbrace \begin{array}{cl}
    \sum\limits_{(\zeta,\xi) \in \Omega_{\sigma\eta}^2} \mathsf{c}_{\sigma \eta, \zeta\xi} ~e_{\zeta\xi}, & \text{ if } \sigma = \sigma' \text{ and } \eta=\eta';\\
     0, & \text{otherwise}. 
    \end{array}\right.
\]
where
\[\mathsf{c}_{\sigma \eta, \zeta\xi} =  \frac{\mu(\zeta\mid\sigma_{(\mathfrak{D}_{\sigma\eta})^c})\mu(\xi\mid\sigma_{(\mathfrak{D}_{\sigma\eta})^c})}{\mu^{\otimes 2}(\Omega_{\sigma\eta}^2\mid\sigma_{(\mathfrak{D}_{\sigma\eta})^c})} = c_{\sigma\eta,\zeta}c_{\sigma\eta,\xi}.\]

The lemma below proves that or definition is equivalent to the finite-dimensional evolution Gibbs algebras when $\mathbb{L}$ is finite. Hence, the results obtained in this section only require $\mathbb{L}$ to be countable.

\begin{lemma} \label{lm:finite.evo.alg.equiv}
    Let $\mathbb{L}$ and $S$ be finite. Consider $\Phi$ an admissible interaction potential and $\mu \in \mathscr{G}(\Phi)$ a Gibbs measure. Then, for every fixed set of clusters $\mathscr{C}$,
    \[\mathcal{E}(\mathscr{C},\mu,\Phi,\Omega) \cong \mathlcal{E}_f(\mathscr{C},\mu,\Omega).\]
\end{lemma}
\begin{proof}
    We follow the same steps from the proof of \cref{lm:finite.alg.equiv} replacing $\mathfrak{B}_{\Omega}$ by $\mathfrak{B}_{\Omega^2}$. Observe that from \eqref{eq:struct.coeff.finite}, for all $\zeta,\xi \in \Omega_{\sigma\eta}$,
    \[\frac{\mu(\zeta)\mu(\xi)}{\mu^{\otimes2}(\Omega_{\sigma\eta}^2)}= \frac{\mu(\zeta\mid\sigma_{(\mathfrak{D}_{\sigma\eta})^c})\mu(\xi\mid\sigma_{(\mathfrak{D}_{\sigma\eta})^c})}{\mu^{\otimes 2}(\Omega_{\sigma\eta}\mid\sigma_{(\mathfrak{D}_{\sigma\eta})^c})} = c_{\sigma\eta,\zeta}c_{\sigma\eta,\xi} = \mathsf{c}_{\sigma \eta, \zeta\xi}\]
    and the lemma holds.
\end{proof}

The genetic variation described by $\mathcal{E} =\mathcal{E}(\mathscr{C},\mu,\Phi,\Omega)$ in successive generations depends on $\mathscr{C}$ to define the offspring of the pairs $\sigma, \eta \in \Omega$. On the other hand, $\mu \in \mathscr{G}(\Phi)$ determines the proportion of the surviving offspring. This dynamics is described by $e_{\sigma\eta}^2$ in evolution algebras. If $\sigma \not \in \mathtt{E}^\eta$, $e_{\sigma\eta}^2=0$ indicates that infertility and $(\sigma,\eta)$ does not produce surviving offspring. Observe that the idempotent elements of $\mathcal{E}$ are exactly $\{e_{\sigma\sigma}\}_{\sigma \in \Omega}$ and their finite sums. Therefore $e_{\sigma\sigma}^2=e_{\sigma\sigma}$ indicates the persistence  and stability of $(\sigma,\sigma)$ across generations.

Consider now $\sigma,\eta \in \Omega$ to be distinct such that $\sigma \in \mathtt{E}^\eta$. Then $\vert\Omega_{\sigma\eta}\vert \geq 2$ and since $\sigma, \eta \in \Omega_{\sigma \eta}$, the genealogical tree of $e_{\sigma \eta}$ exhibits a self-similar structure of the gene flow. Furthermore, $e_{\sigma\eta}^2= e_{\eta\sigma}^2$ the same behavior is repeated for each pair $\zeta,\xi \in \Omega_{\sigma\eta}$. Therefore, the tree is indeed fractal. We represent  genealogical tree of $e_{\sigma\eta}$ in \cref{fig:selfsimilar} when $\Omega_{\sigma\eta}= \{\sigma,\eta\}$.
\begin{figure}[!htb]
    \centering
     \includegraphics[width=320pt]{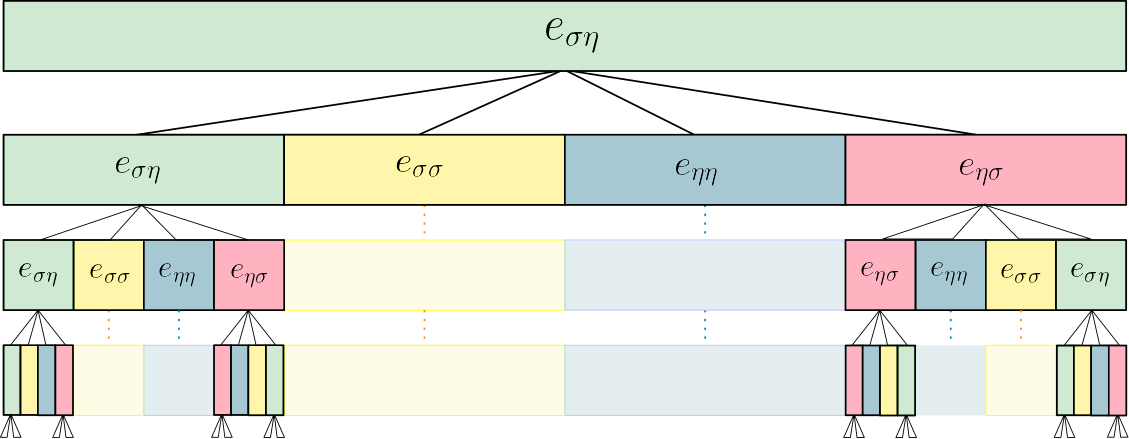}
     \caption{Self-similarity of the genealogical tree of $e_{\sigma\eta}$ when $\vert\Omega_{\sigma\eta}\vert=2$ and the stability of the idempotents.}
     \label{fig:selfsimilar}
\end{figure}

It can be easily seen that the structure coefficients of the $\mathscr{C}$-genetic and evolution Gibbs algebras are correlated. The theorem below establishes a characterization of the evolution algebra via isomorphism derived from the tensor algebra $\mathcal{A} \otimes\mathcal{A}$. 

\begin{theorem} \label{thm:evolution.iso.tensor}
    Let $\mathcal{A}= \mathcal{A}(\mathscr{C},\mu,\Phi,\Omega)$ and $\mathcal{E}= \mathcal{E}(\mathscr{C},\mu,\Phi,\Omega)$. Then $\mathcal{E}$ is isomorphic to the tensor module $\mathcal{A} \otimes \mathcal{A}$ with product $\ast$ given by bilinear extension of
    \[(e_\sigma \otimes e_\eta) \ast (e_{\sigma'} \otimes e_{\eta'}) := \delta_{\sigma \sigma'} \delta_{\eta\eta'}  (e_\sigma \otimes e_\eta) \cdot (e_{\eta} \otimes e_{\sigma}),\]
    where $\delta$ is the Kronecker delta and $\cdot$ denotes the ordinary product of the tensor algebra $\mathcal{A} \otimes \mathcal{A}$. 
\end{theorem}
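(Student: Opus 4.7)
The plan is to exhibit a concrete map $\varphi: \mathcal{E} \to \mathcal{A}\otimes \mathcal{A}$ defined on the standard basis by $\varphi(e_{\sigma\eta}) = e_\sigma \otimes e_\eta$ and extended by linearity. Since $\{e_{\sigma\eta}\}_{(\sigma,\eta)\in\Om^2}$ is a Hamel basis of $\mathcal{E}$ and $\{e_\sigma\otimes e_\eta\}_{(\sigma,\eta)\in\Om^2}$ is a Hamel basis of $\mathcal{A}\otimes\mathcal{A}$, this map is a linear isomorphism of $\K$-modules by construction. The only nontrivial task is to check that $\varphi$ intertwines the two products, i.e. $\varphi(e_{\sigma\eta}\cdot e_{\sigma'\eta'}) = \varphi(e_{\sigma\eta}) \ast \varphi(e_{\sigma'\eta'})$ for every basis pair, and by bilinearity this suffices.

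First I would dispose of the trivial case: whenever $(\sigma,\eta)\neq(\sigma',\eta')$ both sides vanish by the Kronecker deltas appearing in the definitions of $\cdot$ (in $\mathcal{E}$) and $\ast$. It remains to verify the identity when $(\sigma,\eta)=(\sigma',\eta')$. By definition $\varphi(e_{\sigma\eta}\cdot e_{\sigma\eta}) = \sum_{(\zeta,\xi)\in\Om_{\sigma\eta}^2} c_{\sigma\eta,\zeta}\,c_{\sigma\eta,\xi}\,(e_\zeta\otimes e_\xi)$, and on the other side
\[(e_\sigma\otimes e_\eta)\ast(e_\sigma\otimes e_\eta) = (e_\sigma\otimes e_\eta)\cdot(e_\eta\otimes e_\sigma) = (e_\sigma\cdot e_\eta)\otimes(e_\eta\cdot e_\sigma),\]
where the last identity is just the defining property of the ordinary tensor-algebra product.

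The key observation that ties the computation together is a symmetry of the structural coefficients of $\mathcal{A}$: since $\mathfrak{D}_{\sigma\eta}=\mathfrak{D}_{\eta\sigma}$ and $\Om_{\sigma\eta}=\Om_{\eta\sigma}$, the formula for $c_{\zeta\eta,\cdot}$ recalled from Section~\ref{sec:infinite.dim.genetic.algebras} gives $c_{\sigma\eta,\zeta}=c_{\eta\sigma,\zeta}$ for every $\zeta\in\Om_{\sigma\eta}$. Using this I expand
\[(e_\sigma\cdot e_\eta)\otimes(e_\eta\cdot e_\sigma) = \Biggl(\sum_{\zeta\in\Om_{\sigma\eta}} c_{\sigma\eta,\zeta}\,e_\zeta\Biggr)\otimes\Biggl(\sum_{\xi\in\Om_{\sigma\eta}} c_{\sigma\eta,\xi}\,e_\xi\Biggr) = \sum_{(\zeta,\xi)\in\Om_{\sigma\eta}^2} c_{\sigma\eta,\zeta}\,c_{\sigma\eta,\xi}\,(e_\zeta\otimes e_\xi),\]
which matches $\varphi(e_{\sigma\eta}\cdot e_{\sigma\eta})$ exactly.

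I do not expect any real obstacle in this proof: the only subtlety is recognizing that the twisting $e_\eta\otimes e_\sigma$ appearing in the definition of $\ast$ is there precisely to reproduce the product of structural coefficients $c_{\sigma\eta,\zeta}\,c_{\sigma\eta,\xi}$ (equivalently $\mathsf{c}_{\sigma\eta,\zeta\xi}$) after the identification $c_{\sigma\eta,\cdot}=c_{\eta\sigma,\cdot}$, and that the Kronecker deltas in $\ast$ exactly mirror the ``otherwise equal $0$'' clause in the definition of the evolution-algebra product. Once these two points are stated, the verification is a one-line expansion.
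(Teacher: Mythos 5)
Your proposal is correct and takes essentially the same route as the paper: the authors also define $\varphi$ by $\varphi(e_{\sigma\eta})=e_\sigma\otimes e_\eta$, use the Kronecker deltas to dispose of the off-diagonal case, and expand $(e_\sigma\cdot e_\eta)\otimes(e_\eta\cdot e_\sigma)$ into $\sum_{(\zeta,\xi)\in\Om_{\sigma\eta}^2}c_{\sigma\eta,\zeta}c_{\sigma\eta,\xi}\,(e_\zeta\otimes e_\xi)$, the twist being absorbed by the commutativity of $\mathcal{A}$ (your observation $c_{\sigma\eta,\cdot}=c_{\eta\sigma,\cdot}$). No gaps.
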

\begin{proof}
    Set $\varphi:\mathcal{E} \to \mathcal{A} \otimes \mathcal{A}$ to be the linear bijection such that, for all $e_{\sigma\eta} \in \mathfrak{B}_{\Omega^2}$, one has that $\varphi(e_{\sigma\eta})=e_\sigma \otimes e_\eta$. Observe that
    \begin{align*}
    \varphi(e_{\sigma \eta}) \ast \varphi(e_{\sigma' \eta'}) &= (e_{\sigma} \otimes e_{\eta}) \ast (e_{\sigma'} \otimes e_{\eta'})\\
    &= \delta_{\sigma\sigma'}\delta_{\eta\eta'} (e_{\sigma}\cdot e_{\eta}) \otimes (e_{\sigma}\cdot e_{\eta})\\
    &= \delta_{\sigma\sigma'}\delta_{\eta\eta'} \left( \sum_{\zeta \in \Omega_{\sigma\eta}} c_{\sigma\eta,\zeta}e_\zeta\right) \otimes \left( \sum_{\xi \in \Omega_{\sigma\eta}} c_{\sigma\eta,\xi}e_\xi\right)\\
    &= \delta_{\sigma\sigma'} \delta_{\eta\eta'} ~ \varphi\left( \sum_{(\zeta,\xi) \in \Omega_{\sigma\eta}^2} c_{\sigma\eta,\zeta} c_{\sigma\eta,\xi}e_{\zeta\xi} \right)=  ~\varphi\left(e_{\sigma\eta} \cdot e_{\sigma'\eta'} \right).
  \end{align*}
  
  Thus the proof is completed by linear and bilinear extension of $\varphi$ and $\ast$, respectively.
\end{proof}

Let us define 
\[\mathtt{F}_{\sigma{\eta}}=\langle \mathfrak{B}_{\mathtt{E}^\sigma \times\mathtt{E}^\eta} \rangle.\]

We can now apply previous results to obtain the following decomposition of the algebra.

\begin{corollary} \label{cor:evo.alg.decomposition}
    Let $\mathcal{E}=\mathcal{E}(\mathscr{C}, \mu,\Phi, \Omega)$ be a $\mathscr{C}$-evolution Gibbs algebra generated by $\mu \in \mathscr{G}(\Phi)$ on $\Omega$. Then
    \[\mathcal{E} = \bigoplus_{(\sigma,\eta) \in \widetilde{\Omega}^2}\mathtt{F}_{\sigma\eta}\]
    where $\mathtt{F}_{\sigma{\eta}}$ is an ideal isomorphic to $\mathtt{F}^\sigma\otimes\mathtt{F^\eta}$ with the isomorphism given in \cref{thm:evolution.iso.tensor}.
\end{corollary}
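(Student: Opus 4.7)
The plan is to obtain the decomposition by transporting the decomposition of $\mathcal{A}$ from \cref{thm:genetic.decomposition} across the isomorphism $\varphi$ of \cref{thm:evolution.iso.tensor}, which sends $e_{\sigma\eta}$ to $e_\sigma \otimes e_\eta$. Since $\{\mathtt{E}^\sigma\}_{\sigma \in \widetilde{\Om}}$ partitions $\Om$, the family $\{\mathtt{E}^\sigma \times \mathtt{E}^\eta\}_{(\sigma,\eta)\in \widetilde{\Om}^2}$ partitions $\Om^2$. Because $\mathcal{E}$ is the free $\K$-module on $\mathfrak{B}_{\Om^2}$ and $\mathtt{F}_{\sigma\eta}=\langle \mathfrak{B}_{\mathtt{E}^\sigma\times\mathtt{E}^\eta}\rangle$, the module direct sum $\mathcal{E}=\bigoplus_{(\sigma,\eta)\in \widetilde{\Om}^2}\mathtt{F}_{\sigma\eta}$ is immediate.

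Next I would check that each $\mathtt{F}_{\sigma\eta}$ is an ideal. Fix a basis element $e_{\zeta\xi}$ with $(\zeta,\xi)\in\mathtt{E}^\sigma\times\mathtt{E}^\eta$ and an arbitrary $e_{\sigma'\eta'}\in\mathfrak{B}_{\Om^2}$. By the definition of the product in $\mathcal{E}$, the product $e_{\sigma'\eta'}\cdot e_{\zeta\xi}$ vanishes unless $(\sigma',\eta')=(\zeta,\xi)$, in which case it equals $\sum_{(\zeta',\xi')\in\Om_{\zeta\xi}^2}\mathsf{c}_{\zeta\xi,\zeta'\xi'}\, e_{\zeta'\xi'}$. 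By \cref{lm:discrep}, each $\zeta'\in\Om_{\zeta\xi}$ satisfies $\mathfrak{D}_{\zeta'\zeta}\subseteq\mathfrak{D}_{\zeta\xi}\in\mathcal{L}$, so $\zeta'\in\mathtt{E}^\zeta=\mathtt{E}^\sigma$; symmetrically $\xi'\in\mathtt{E}^\eta$. Thus $(\zeta',\xi')\in\mathtt{E}^\sigma\times\mathtt{E}^\eta$, and $e_{\sigma'\eta'}\cdot e_{\zeta\xi}\in\mathtt{F}_{\sigma\eta}$, as required.

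Finally, for the isomorphism, I would restrict $\varphi$ from \cref{thm:evolution.iso.tensor} to $\mathtt{F}_{\sigma\eta}$. Its image is spanned by $\{e_\zeta\otimes e_\xi : \zeta\in\mathtt{E}^\sigma, \xi\in\mathtt{E}^\eta\}$, which is exactly the submodule $\mathtt{F}^\sigma\otimes\mathtt{F}^\eta$ of $\mathcal{A}\otimes\mathcal{A}$. The product $\ast$ of \cref{thm:evolution.iso.tensor}, when applied to two pure tensors from $\mathtt{F}^\sigma\otimes\mathtt{F}^\eta$, produces a linear combination of tensors $e_{\zeta'}\otimes e_{\xi'}$ with indices in $\Om_{\zeta\xi}^2\subseteq\mathtt{E}^\sigma\times\mathtt{E}^\eta$ by the preceding paragraph, so $\mathtt{F}^\sigma\otimes\mathtt{F}^\eta$ is closed under $\ast$ and $\varphi|_{\mathtt{F}_{\sigma\eta}}$ is an isomorphism of algebras.

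No substantial obstacle is expected here: the corollary is essentially a bookkeeping consequence of the decomposition in \cref{thm:genetic.decomposition}, the tensor representation in \cref{thm:evolution.iso.tensor}, and the discrepancy bound in \cref{lm:discrep}. The only point requiring minor care is ensuring that the inclusion $\Om_{\zeta\xi}^2\subseteq\mathtt{E}^\sigma\times\mathtt{E}^\eta$ is used symmetrically in both components to conclude the ideal property, which is handled by a single application of \cref{lm:discrep} in each coordinate.
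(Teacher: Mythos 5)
Your proposal is correct and follows essentially the same route as the paper: the module decomposition and the isomorphism with $\mathtt{F}^\sigma\otimes\mathtt{F}^\eta$ are read off from \cref{thm:genetic.decomposition,thm:evolution.iso.tensor}, and the ideal property is verified directly from the definition of the product in $\mathcal{E}$. If anything, you are slightly more explicit than the paper in invoking \cref{lm:discrep} to justify that the offspring indices $\Om_{\zeta\xi}^2$ remain in $\mathtt{E}^\sigma\times\mathtt{E}^\eta$, a point the paper's proof leaves implicit.
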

\begin{proof}
    The direct sum and the isomorphism between $\mathtt{F}_{\sigma\eta}$ and  $\mathtt{F}^\sigma\otimes\mathtt{F^\eta}$ are an immediate consequences of \cref{thm:genetic.decomposition,thm:evolution.iso.tensor}. It remains to verify that, for all $\sigma,\eta \in \Omega$, $\mathtt{F}_{\sigma\eta}$ is an ideal of $\mathcal{E}$.

    Consider $u := \sum_{(\sigma',\eta') \in \Xi'}a_{\sigma'\eta'} e_{\sigma'\eta'} \in \mathtt{F}_{\sigma\eta}$ and let $v:= \sum_{(\sigma'',\eta'') \in \Xi''}a_{\sigma''\eta''}e_{\sigma''\eta''} \in \mathcal{E}$ be arbitrary. Then,
    \[u\cdot v= \sum_{(\sigma',\eta') \in \Xi' \cap \Xi''} a_{\sigma'\eta'}^2 \sum_{(\zeta'\xi') \in \Omega_{\sigma'\eta'}^2}\mathsf{c}_{\sigma'\eta',\zeta'\xi'}~e_{\zeta'\xi'} \in \mathtt{F}_{\sigma\eta}.\]

    Therefore, for all $\sigma,\eta \in \Omega$, $\mathtt{F}_{\sigma\eta}$ is an ideal of $\mathcal{E}.$
\end{proof}

The next lemma establishes that all algebraic automorphisms of a Gibbs evolution algebra maps the standard base to itself. 

\begin{lemma} \label{lm:evolution.basis.automorphism}
    Let $\varphi: \mathcal{E} \to \mathcal{E}$ be an automorphism of the $\mathscr{C}$-evolution Gibbs algebra $\mathcal{E}=\mathcal{E}(\mathscr{C},\mu,\Phi,\Omega)$. Then $\varphi(\mathfrak{B}_{\Omega^2}) \subseteq \mathfrak{B}_{\Omega^2}$.
\end{lemma}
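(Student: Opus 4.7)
The plan is to characterize $\mathfrak{B}_{\Om^2}$ algebraically in two stages: first identify the diagonal basis elements $\{e_{\sigma\sigma}\}_{\sigma\in\Om}$ as the primitive idempotents of $\mathcal{E}$, so that any automorphism permutes them; then use the orthogonality relations and the explicit shape of $e_{\sigma\eta}^2$ to pin down $\varphi(e_{\sigma\eta})$ when $\sigma\neq\eta$. The strict positivity and normalization of the Gibbs structural coefficients provide the rigidity needed throughout.

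\emph{Stage 1.} Each $e_{\sigma\sigma}$ is idempotent (since $\Om_{\sigma\sigma}=\{\sigma\}$ and $c_{\sigma\sigma,\sigma}=1$) and primitive: any orthogonal decomposition $e_{\sigma\sigma}=u+w$ yields $u=u\cdot e_{\sigma\sigma}=u_{\sigma\sigma}\,e_{\sigma\sigma}$, hence $u\in\{0,e_{\sigma\sigma}\}$. Conversely, let $v=v_++v_-$ be any idempotent split by diagonal/off-diagonal support. Since $e_{\alpha\alpha}\cdot e_{\gamma\delta}=0$ for $\gamma\neq\delta$, $v_+\cdot v_-=0$, so $v=v_+^2+v_-^2$. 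Using
\[ e_{\gamma\delta}^2=\sum_{(\zeta,\xi)\in\Om_{\gamma\delta}^2}c_{\gamma\delta,\zeta}c_{\gamma\delta,\xi}e_{\zeta\xi}, \]
matching off-diagonal coefficients yields the quadratic system $b_{\gamma'\delta'}=\sum_{(\gamma,\delta)\in\operatorname{supp}(v_-)}b_{\gamma\delta}^2\,c_{\gamma\delta,\gamma'}c_{\gamma\delta,\delta'}$, while matching diagonal coefficients yields $a_\alpha(1-a_\alpha)=\sum_{(\gamma,\delta)\in\operatorname{supp}(v_-)}b_{\gamma\delta}^2\,c_{\gamma\delta,\alpha}^2$. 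In the extremal situation of a single off-diagonal pair $(\gamma,\delta)\in\operatorname{supp}(v_-)$ the first relation forces $b=1/(c_{\gamma\delta,\gamma}c_{\gamma\delta,\delta})$, and the second then evaluates to $a_\gamma(1-a_\gamma)=1/c_{\gamma\delta,\delta}^2$, which exceeds $1/4$ because $c_{\gamma\delta,\delta}\in(0,1)$, contradicting the real bound $a(1-a)\leq 1/4$. The general case follows by an analogous estimate on extremal indices, forcing $v_-\equiv 0$. The equation $v_+^2=v_+$ then gives $v=\sum_{\alpha\in A}e_{\alpha\alpha}$ for some finite $A\subset\Om$, primitive iff $|A|=1$. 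Since automorphisms preserve primitivity, $\varphi(e_{\sigma\sigma})=e_{f(\sigma)f(\sigma)}$ for some bijection $f:\Om\to\Om$.

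\emph{Stage 2 and conclusion.} Fix $\sigma\neq\eta$ and write $v:=\varphi(e_{\sigma\eta})=\sum_i a_i e_{\gamma_i\delta_i}$. From $e_{\sigma\eta}\cdot e_{\alpha\alpha}=0$ for every $\alpha\in\Om$ and $\varphi(e_{\alpha\alpha})=e_{f(\alpha)f(\alpha)}$ we get $v\cdot e_{\beta\beta}=0$ for every $\beta$; since $v\cdot e_{\beta\beta}=v_{\beta\beta}e_{\beta\beta}$ by direct computation, every diagonal coefficient of $v$ vanishes. Because this applies to every off-diagonal basis element, $\varphi$ sends off-diagonals to off-diagonals. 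Comparing diagonal parts of $v^2$ and $\varphi(e_{\sigma\eta}^2)$ (whose diagonal contribution reduces to $\sum_{\zeta\in\Om_{\sigma\eta}}c_{\sigma\eta,\zeta}^2\,e_{f(\zeta)f(\zeta)}$ thanks to the previous observation), positivity confines $\operatorname{supp}(v)$ to off-diagonal pairs inside $f(\Om_{\sigma\eta})\times f(\Om_{\sigma\eta})$. Finally, pair $v$ with $w:=\varphi(e_{\eta\sigma})$: $e_{\sigma\eta}^2=e_{\eta\sigma}^2$ gives $v^2=w^2$, $e_{\sigma\eta}\cdot e_{\eta\sigma}=0$ gives $v\cdot w=0$, and the orthogonalities $v\cdot\varphi(e_{\zeta\xi})=0$ for $(\zeta,\xi)\in\Om_{\sigma\eta}^2\setminus\{(\sigma,\eta),(\eta,\sigma)\}$ supply further quadratic constraints. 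As in the toy case $|\mathbb{L}|=1$, $|S|=2$ where $a^2+b^2=a+b=1$ forces $\{a,b\}=\{0,1\}$, this finite system admits only solutions $v\in\mathfrak{B}_{\Om^2}$. The main obstacle is Stage 1: ruling out mixed idempotents with off-diagonal support demands the full positivity/normalization structure of the Gibbs coefficients; once that is done, Stage 2 is essentially forced by the orthogonality and squaring constraints.
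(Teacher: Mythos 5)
Your overall architecture is the same as the paper's: first show that an automorphism permutes the diagonal idempotents $e_{\sigma\sigma}$, then use $\varphi(e_{\sigma\eta})^2=\varphi(e_{\sigma\eta}^2)$ to pin down the off-diagonal images. The place where you diverge is that the paper simply asserts the classification of idempotents, whereas you try to prove it --- and that is exactly where your argument has a genuine gap. You rule out off-diagonal support of an idempotent only when $\operatorname{supp}(v_-)$ is a single pair; the sentence ``the general case follows by an analogous estimate on extremal indices'' is not a proof, and the general system $b_{\gamma'\delta'}=\sum_{(\gamma,\delta)}b_{\gamma\delta}^2\,c_{\gamma\delta,\gamma'}c_{\gamma\delta,\delta'}$ does not yield to the same one-line computation, because the extremal coefficient is now dominated by a sum over possibly many pairs and the crude bound $c_{\gamma\delta,\gamma'}c_{\gamma\delta,\delta'}<1$ is not enough. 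Worse, the statement you are trying to prove is \emph{false} over $\K=\mathbb{C}$: already for $\vert\mathbb{L}\vert=1$ and $S=\{0,1\}$, writing $c_0=c_{\sigma\eta,\sigma}$ and $c_1=c_{\sigma\eta,\eta}$, the element $v=a\,e_{\sigma\sigma}+b\,(e_{\sigma\eta}+e_{\eta\sigma})+d\,e_{\eta\eta}$ with $b=1/(2c_0c_1)$ and $a,d$ the (necessarily non-real) roots of $a^2-a+1/(2c_1^2)=0$ and $d^2-d+1/(2c_0^2)=0$ is idempotent with off-diagonal support. So the positivity argument you invoke (``the real bound $a(1-a)\le 1/4$'') can only ever work for $\K=\mathbb{R}$; since the paper allows $\K=\mathbb{C}$, either the field must be restricted or a different characterization of the diagonal elements is needed. (This is a latent issue in the paper's own unproved assertion as well, but your proof makes essential quantitative use of it.)

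Stage 2 also ends in an assertion rather than a proof. The reduction to ``$v$ is off-diagonal with support in $f(\Om_{\sigma\eta})\times f(\Om_{\sigma\eta})$'' is correct and cleanly argued (the observation $v\cdot e_{\beta\beta}=v_{\beta\beta}e_{\beta\beta}=0$ is a nice touch not made explicit in the paper), but the closing claim that the remaining quadratic constraints ``admit only solutions $v\in\mathfrak{B}_{\Om^2}$'' is exactly the content to be proved, and the system $v^2=\varphi(e_{\sigma\eta}^2)=\sum_{(\zeta,\xi)\in\Om_{\sigma\eta}^2}\mathsf{c}_{\sigma\eta,\zeta\xi}\,\varphi(e_{\zeta\xi})$ couples the unknown $v$ to all the other unknown images $\varphi(e_{\zeta\xi})$, so it cannot be dispatched by analogy with the $2$-dimensional toy case. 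A workable way to close it is an induction on $\vert\mathfrak{D}_{\sigma\eta}\vert$, using \cref{lm:discrep} to get that the elements $e_{\gamma\delta}^2$ indexed by distinct unordered pairs are linearly independent (triangularity with respect to inclusion of discrepancy sets) and then extracting $\varphi(e_{\sigma\eta})$ from the top-order term; as written, the step is missing.
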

\begin{proof}
    First, observe that for every $\sigma \in \Omega$, $e_{\sigma\sigma}^2= e_{\sigma\sigma}$ and every idempotent element of $\mathcal{E}$ is given by $\sum_{\sigma' \in I'} e_{\sigma'\sigma'}$ with $I' \subseteq \Omega$ finite. Thus, there exists a finite $I \subseteq \Omega$ such that $\varphi(e_{\sigma\sigma}) = \sum_{\sigma' \in I} e_{\sigma'\sigma'}$. However, $e_{\sigma\sigma} = \sum_{\sigma' \in I}\varphi^{-1}(e_{\sigma'\sigma'})$ with $\varphi^{-1}(e_{\sigma'\sigma'}) = \sum_{\eta \in J(\sigma')}e_{\eta}$. Hence, $I=\{\sigma'\}$, \textit{i.e.}, $\varphi(e_{\sigma\sigma})= e_{\sigma'\sigma'}$.

    Now let $\varphi(e_{\sigma\eta}) = \sum_{{(\sigma',\eta')} \in \Xi_{\sigma\eta}}a_{\sigma'\eta'}e_{\sigma'\eta'}$ for any $\sigma, \eta \in \Omega$. Then,
    \[\sum_{(\sigma',\eta') \in \Xi_{\sigma\eta}}a_{\sigma'\eta'}^2 \sum_{(\zeta',\xi') \in \Omega_{\sigma'\eta'}^2} \mathsf{c}_{\sigma'\eta',\zeta'\xi'} ~e_{\zeta'\xi'} = \sum_{(\zeta,\xi) \in \Omega_{\sigma,\eta}^2}\mathsf{c}_{\sigma\eta,\zeta\xi} ~\varphi(e_{\zeta\xi}).\]

    By combining the result for idempotents with the equation above, we obtain that $a_{\sigma'\eta'}=1$ and $\Xi_{\sigma\eta} =\{(\sigma',\eta')\}$, and the lemma holds.
\end{proof}

The elements $e_{\sigma\eta}$ such that $e_{\sigma\eta}^2=0$ do not reproduce and, therefore, they do not act in the evolution dynamics of the system. Thus, it becomes interesting to consider an algebra in which all its elements are fertile. Set
\[\mathfrak{N}:=\left\{e_{\sigma\eta} \in \mathfrak{B}_{\Omega^2} : \mathfrak{D}_{\sigma\eta} \not\in \mathcal{L}\right\}.\]

Define $\mathcal{E}_M=\mathcal{E}_{M}(\mathscr{C},\mu,\Phi,\Omega)$ to be the subalgebra of 
$\mathcal{E}(\mathscr{C},\mu,\Phi,\Omega)$ such that $\mathcal{E}_M:=\big\langle \mathfrak{B}_{\Omega^2}\setminus\mathfrak{N} \big\rangle$. We call $\mathcal{E}_M$ the Markov $\mathscr{C}$\textit{-evolution Gibbs algebra} generated by the $\mu \in \mathscr{G}(\Phi)$ on $\Omega$.

The theorem below shows that $\mathcal{E}_M$ is a Markov evolution algebra and provides a decomposition into a direct sum of ideals that can be isomorphic. Before stating the result, let us fix some notation. Set $(\sigma,\eta)_L := (\sigma_L,\eta_L)$ for all $L \subseteq \mathbb{L}$ and $(\sigma,\eta) \in \Omega^2$. Let $(\sigma,\eta)_L(\zeta,\xi)_{L'}$ stand for $(\sigma_L\zeta_{L'},\eta_L\xi_{L'}) \in \big(S^{L \dot\cup L'}\big)^2$ when $L\cap L'=\emptyset$.

\begin{theorem}[Decomposition of $\mathcal{E}_M$ into a direct sum of ideals] \label{thm:isomorphic.decomposition.E.M}
    Let $\mathcal{E}_M := \mathcal{E}_M(\mathscr{C},\mu,\Phi,\Omega)$ be a Markov $\mathscr{C}$-evolution Gibbs algebra generated by $\mu \in \mathscr{G}(\Phi)$ on $\Omega$. Then $\mathcal{E}_M$ is indeed Markov  such that
    \[\mathcal{E}_M = \bigoplus_{\sigma \in \widetilde{\Omega}}\mathtt{F}_{\sigma\sigma},\]
    where each $\mathtt{F}_{\sigma\sigma} \in \mathcal{F}_{\Omega^2}$ is an ideal with countable basis $\mathfrak{B}_{(\mathtt{E}^\sigma)^2}$. Moreover, if $\Phi$ has finite range; then, for all $\sigma,\eta \in \Omega$, $\mathtt{F}_{\sigma\sigma}$ and $\mathtt{F}_{\eta\eta}$ are isomorphic.
\end{theorem}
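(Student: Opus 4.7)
The plan is to verify three statements in turn: Markovness of $\mathcal{E}_M$, the direct-sum decomposition into countably-generated ideals, and the isomorphism of any two $\mathtt{F}_{\sigma\sigma}$, $\mathtt{F}_{\eta\eta}$ under finite range.

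The Markov property is immediate from the factorization built into the definition: for each basis element $e_{\sigma\eta}\in\mathfrak{B}_{\Om^2}\setminus\mathfrak{N}$ (equivalently, $\sigma\in\mathtt{E}^\eta$),
\[
\sum_{(\zeta,\xi)\in\Om_{\sigma\eta}^2}\mathsf{c}_{\sigma\eta,\zeta\xi}\;=\;\Bigl(\sum_{\zeta\in\Om_{\sigma\eta}}c_{\sigma\eta,\zeta}\Bigr)^{2}\;=\;1,
\]
the last equality because each genetic fertile ideal $\mathtt{F}^\sigma$ is Markov by \cref{thm:genetic.decomposition}. For the decomposition, $e_{\zeta\xi}\in\mathfrak{B}_{\Om^2}\setminus\mathfrak{N}$ iff $\mathfrak{D}_{\zeta\xi}\in\mathcal{L}$ iff $\zeta$ and $\xi$ lie in a common fertile class $\mathtt{E}^\sigma$, so the basis splits as $\mathfrak{B}_{\Om^2}\setminus\mathfrak{N}=\bigsqcup_{\sigma\in\widetilde{\Om}}\mathfrak{B}_{(\mathtt{E}^\sigma)^2}$. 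Each $\mathtt{F}_{\sigma\sigma}$ is an ideal because products $e_{\zeta\xi}\cdot e_{\zeta'\xi'}$ across distinct ordered pairs vanish by definition, while the square $e_{\zeta\xi}^2$ produces offspring in $\Om_{\zeta\xi}^2\subseteq(\mathtt{E}^\sigma)^2$ by \cref{lm:discrep}. Countability of $\mathfrak{B}_{(\mathtt{E}^\sigma)^2}$ follows from $|\mathtt{E}^\sigma|=\aleph_0$.

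For the isomorphism claim, fix $\sigma,\eta\in\Om$; if they share a fertile class then $\mathtt{F}_{\sigma\sigma}=\mathtt{F}_{\eta\eta}$ trivially. Otherwise I would construct a sitewise bijection $\psi:\Om\to\Om$ with $\psi(\sigma)=\eta$ by picking, at each $x\in\mathbb{L}$, the transposition $\pi_x\in\mathrm{Sym}(S)$ swapping $\sigma(x)$ and $\eta(x)$ (identity on the remaining spins) and setting $\psi(\zeta)(x)=\pi_x(\zeta(x))$. Then $\psi$ restricts to a bijection $\mathtt{E}^\sigma\to\mathtt{E}^\eta$ that preserves \emph{all} pair-discrepancies, i.e., $\mathfrak{D}_{\psi(\zeta)\psi(\xi)}=\mathfrak{D}_{\zeta\xi}$, whence $\psi(\Om_{\zeta\xi})=\Om_{\psi(\zeta)\psi(\xi)}$. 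The plan is to lift $\psi$ to an algebra map $\Psi(e_{\zeta\xi})=\lambda_{\zeta\xi}\,e_{\psi(\zeta)\psi(\xi)}$ with positive scalars $\lambda_{\zeta\xi}$ chosen to enforce $\Psi(e_{\zeta\xi}^2)=\Psi(e_{\zeta\xi})^2$; equating coefficients gives the recursion
\[
\lambda_{\zeta'\xi'}\;=\;\lambda_{\zeta\xi}^{\,2}\,\frac{\mathsf{c}_{\psi(\zeta)\psi(\xi),\,\psi(\zeta')\psi(\xi')}}{\mathsf{c}_{\zeta\xi,\zeta'\xi'}}
\qquad\text{for every }(\zeta',\xi')\in\Om_{\zeta\xi}^2.
\]

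The hard part will be showing that this recursion is globally consistent across the different chains of offspring relating a common pair. Finite range enters decisively here: since $H^\Phi_\Lambda$ depends only on the restriction of the configuration to the finite set $\operatorname{cl}_\Phi(\Lambda)$, the ratios on the right-hand side factor through the local patches $\zeta|_{\operatorname{cl}_\Phi(\mathfrak{D}_{\zeta\xi})}$ and reduce to comparisons of $\sigma$- versus $\eta$-boundary partition functions on a finite region. I expect the correct choice is $\lambda_{\zeta\xi}$ defined as such a local partition-function ratio on $\operatorname{cl}_\Phi\bigl(\mathfrak{D}_{\sigma,\zeta}\cup\mathfrak{D}_{\sigma,\xi}\bigr)$, with the cocycle condition verified by a telescoping argument of the flavor used in \cref{lm:finite.alg.equiv} and \cref{thm:finite.subalgebras} (the local Hamiltonian contributions outside the discrepancy cancel in ratios). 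Once the $\lambda_{\zeta\xi}$ are defined consistently, bijectivity of $\Psi$ is inherited from $\psi$, and multiplicativity of $\Psi^{-1}$ follows from the symmetric construction starting from $\psi^{-1}$.
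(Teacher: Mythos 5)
Your first two paragraphs (Markovness via the factorization $\mathsf{c}_{\sigma\eta,\zeta\xi}=c_{\sigma\eta,\zeta}c_{\sigma\eta,\xi}$, and the direct-sum decomposition with countable bases) are correct and essentially the paper's argument, which invokes \cref{cor:evo.alg.decomposition}. The gap is in the isomorphism claim, and it is twofold. First, the scalar degrees of freedom you introduce do not exist: every $\omega\in\Om_{\zeta\xi}$ gives an idempotent $e_{\omega\omega}\in\Om_{\zeta\xi}^2$, and idempotence forces $\lambda_{\omega\omega}=1$; comparing the coefficient of $e_{\psi(\omega)\psi(\omega)}$ on both sides of $\Psi(e_{\zeta\xi}^2)=\Psi(e_{\zeta\xi})^2$ then yields $c_{\zeta\xi,\omega}^2=\lambda_{\zeta\xi}^2\,c_{\psi(\zeta)\psi(\xi),\psi(\omega)}^2$ for \emph{every} $\omega\in\Om_{\zeta\xi}$, and summing over $\omega$ (both families of genetic coefficients sum to $1$) forces $\lambda_{\zeta\xi}=1$. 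So there is no recursion to make consistent: the basis bijection must preserve the structure constants exactly, on the nose (this is the content of \cref{lm:evolution.basis.automorphism} adapted to the ideals).

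Second, your sitewise spin-permutation $\psi$ does not preserve them. Although $\psi$ preserves all discrepancy sets and maps $\Om_{\zeta\xi}$ onto $\Om_{\psi(\zeta)\psi(\xi)}$, the coefficient $c_{\zeta\xi,\omega}$ depends on $H^\Phi_{\mathfrak{D}_{\zeta\xi}}(\omega)$, which reads the configuration on all of $\operatorname{cl}_\Phi(\mathfrak{D}_{\zeta\xi})$; your $\psi$ permutes spins at every site where $\sigma$ and $\eta$ disagree, including sites of $\partial_\Phi\mathfrak{D}_{\zeta\xi}$ and of $\mathfrak{D}_{\zeta\xi}$ itself, and the energy differences $H^\Phi_{\mathfrak{D}_{\zeta\xi}}(\omega)-H^\Phi_{\mathfrak{D}_{\zeta\xi}}(\rho)$ are not invariant under such permutations (already for the Ising potential $\Phi_{\{x,y\}}(\omega)=-\beta\,\omega(x)\omega(y)$, flipping the spin at a single boundary site changes these differences). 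The paper's construction avoids exactly this: its map $f_{\sigma\eta}$ leaves the pair $(\zeta,\xi)$ \emph{literally unchanged} on the finite set $\operatorname{cl}_\Phi(\mathfrak{D}_{\zeta\xi})$ --- this is where finite range is used --- and grafts $\eta$ (suitably corrected on $\mathfrak{D}_{\sigma\zeta}\setminus\operatorname{cl}_\Phi(\mathfrak{D}_{\zeta\xi})$ via the map $g$ to keep the result in $\mathtt{E}^\eta$ and make $f_{\sigma\eta}$ bijective) only outside that closure, so that $H^\Phi_{\mathfrak{D}_{\zeta\xi}}(\omega)=H^\Phi_{\mathfrak{D}_{\zeta'\xi'}}(\omega')$ holds exactly and $\mathsf{c}_{\zeta\xi,\omega\varpi}=\mathsf{c}_{\zeta'\xi',\omega'\varpi'}$ with no scaling. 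To repair your argument you would need to replace the sitewise permutation by a map of this ``identity near the discrepancy, switch to $\eta$ far away'' type.
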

\begin{proof}
    The decomposition into a the direct sum of ideals is immediately obtained by \cref{cor:evo.alg.decomposition}. Observe that, for all $e_{\sigma\eta}\in\mathfrak{B}_{\Omega^2}\setminus\mathfrak{N}$, $\sigma \in \mathtt{E}^\eta$. Hence, one can easily see that $\sum_{(\zeta,\xi) \in \Omega_{\sigma\eta}^2} \mathsf{c}_{\sigma\eta,\zeta\xi} =1$ for all $e_{\sigma\eta}\in\mathfrak{B}_{\Omega^2}\setminus\mathfrak{N}$ and $\mathcal{E}_M$ is Markov.

    Consider now $\Phi$ to be a potential with finite range. It remains to show that the ideals $\mathtt{F}_{\sigma\sigma}$ are isomorphic. Set $g_L^{\sigma\eta}:\Omega \to S^L$ to be such that, for all $x \in L$ and $\omega\in\Omega$,
    \[g_L^{\sigma\eta}(\omega)(x) = \left\lbrace\begin{array}{cc}
        \omega(x), & \text{if } \omega(x)\neq\eta(x) \\
        \sigma(x), & \text{otherwise},
    \end{array}\right.\]
    and write $g_L^{\sigma\eta}(\omega,\varpi)$ for $\big(g_L^{\sigma\eta}( \omega),g_L^{\sigma\eta}(\varpi)\big)$.
    
    Let us fix, without loss of generality, two distinct $\sigma, \eta \in \widetilde{\Omega}$. Define, for $\zeta,\xi \in \mathtt{E}^\sigma$,
    \[
        \Lambda_{\zeta\xi}^\Phi:= \mathfrak{D}_{\sigma\zeta} \setminus\operatorname{cl}_\Phi \left(\mathfrak{D}_{\zeta\xi}\right) \quad \text{and} \quad L_{\zeta\xi}^\Phi:=\left(\operatorname{cl}_\Phi\left(\mathfrak{D}_{\zeta\xi}\right) \cup \mathfrak{D}_{\sigma\zeta}\right)^c.
    \]
    
    Thus, $\mathbb{L}=\operatorname{cl}_\Phi \left(\mathfrak{D}_{\zeta\xi}\right) ~\dot\cup ~L_{\zeta\xi}^\Phi ~\dot\cup~ \Lambda_{\zeta\xi}^\Phi$ and $\operatorname{cl}_\Phi \left(\mathfrak{D}_{\zeta\xi}\right), \Lambda_{\zeta\xi}^\Phi \in \mathcal{L}$. Set $\varphi:\mathtt{F}_{\sigma\sigma}\to\mathtt{F}_{\eta\eta}$ to be given by linear extension of $\varphi(e_{\zeta\xi})=e_{f_{\sigma\eta}(\zeta,\xi)}$ such that \[f_{\sigma\eta}(\zeta,\xi) := (\zeta,\xi)_{\operatorname{cl}_\Phi \left(\mathfrak{D}_{\zeta\xi}\right)~} (\eta,\eta)_{L_{\zeta\xi}^\Phi~}g_{\Lambda_{\zeta\xi}^\Phi}^{\sigma\eta}(\zeta,\xi).\]
        
    Write $(\zeta',\xi')$ for $f_{\sigma\eta}(\zeta, \xi)= (\zeta',\xi')$. Since $\operatorname{cl}_\Phi \left(\mathfrak{D}_{\zeta\xi}\right)\cup \Lambda_{\zeta\xi}^\Phi \in \mathcal{L}$, one can easily see that $(\zeta',\xi') \in (\mathtt{E}^\eta)^2$. Observe that $\zeta_{\Lambda_{\zeta\xi}^\Phi} = \xi_{\Lambda_{\zeta\xi}^\Phi}$ and thence $\mathfrak{D}_{\zeta\xi} = \mathfrak{D}_{\zeta'\xi'}$. We also have
    \[\Lambda_{\zeta\xi}^\Phi = \mathfrak{D}_{\eta\zeta'}\setminus\operatorname{cl}_\Phi\left(\mathfrak{D}_{\zeta'\xi'}\right)= \mathfrak{D}_{\eta\xi'}\setminus\operatorname{cl}_\Phi\left(\mathfrak{D}_{\zeta'\xi'}\right) = \Lambda_{\xi\zeta}^\Phi.\]
    \begin{figure}[!htb]
        \centering
        \includegraphics[width=300pt]{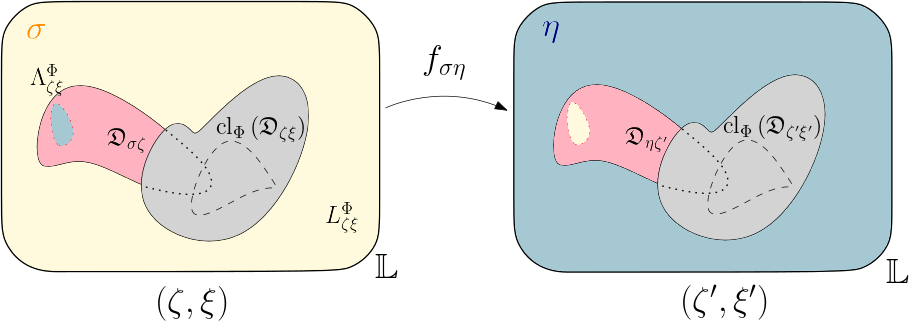}
        \caption{Graphical representation of $f_{\sigma \eta}$ mapping configurations of $(\mathtt{E}^\sigma)^2$ to $(\mathtt{E}^\eta)^2$.}
        \label{fig:evo.iso.fertile}
    \end{figure}

    Let now $\omega,\varpi \in \mathtt{E}^\sigma$ be such that $f_{\sigma\eta}(\omega,\varpi)= (\zeta',\xi')$. Then,
    \[(\omega,\varpi)_{\operatorname{cl}_\Phi\left(\mathfrak{D}_{\zeta\xi}\right)} = (\zeta,\xi) _{\operatorname{cl}_\Phi\left(\mathfrak{D}_{\zeta\xi}\right)}, \quad\]
    \[(\omega,\varpi) _{\mathfrak{D}_{\sigma\zeta} \setminus \operatorname{cl}_\Phi\left(\mathfrak{D}_{\zeta\xi}\right)} = g_{\mathfrak{D}_{\eta\zeta'}\setminus\operatorname{cl}_\Phi\left(\mathfrak{D}_{\zeta'\xi'}\right)}^{\eta\sigma}(\zeta',\xi')=(\zeta,\xi)_{\Lambda^\Phi_{\zeta\xi}}, \quad\]
    and $(\omega,\varpi)_{L_{\zeta\sigma}^\Phi} = (\sigma,\sigma)_{L_{\zeta\sigma}^\Phi}$. It implies that 
 \begin{align*}
     &\omega = \omega_{\mathfrak{D}_{\sigma\zeta}}\sigma_{(\mathfrak{D}_{\sigma\zeta})^c}= \zeta_{\mathfrak{D}_{\sigma\zeta}}\sigma_{(\mathfrak{D}_{\sigma\zeta})^c}= \zeta, \\
     \text{and}\quad &\varpi = \varpi_{\mathfrak{D}_{\sigma\zeta}}\sigma_{(\mathfrak{D}_{\sigma\zeta})^c}= \xi_{\mathfrak{D}_{\sigma\zeta}}\sigma_{(\mathfrak{D}_{\sigma\zeta})^c}= \xi.
 \end{align*}
 Hence, $f_{\sigma\eta}$ is injective. Note that, for any given $\omega', \varpi' \in \mathtt{E}^\eta$, \[\thickbar{\Lambda}_{\omega'\varpi'}^\Phi := \mathfrak{D}_{\eta\omega'}\setminus\operatorname{cl}_\Phi\left(\mathfrak{D}_{\omega'\varpi'}\right), \quad\text{and}\quad \thickbar{L}_{\omega'\varpi'}^\Phi:=\left(\operatorname{cl}_\Phi\left(\mathfrak{D}_{\omega'\varpi'}\right) \cup \mathfrak{D}_{\eta\omega'}\right)^c,\]
 one has that
 \[(\omega,\varpi):= (\omega',\varpi')_{\operatorname{cl}_\Phi \left(\mathfrak{D}_{\omega'\varpi'}\right)~} (\sigma,\sigma)_{\thickbar{L}_{\omega'\varpi'}^\Phi~}g_{\thickbar{\Lambda}_{\omega'\varpi'}^\Phi}^{\eta\sigma}(\omega',\varpi') \in \mathtt{E}^\sigma\]
 is such that $f_{\sigma\eta}(\omega,\varpi)=(\omega',\varpi')$. Therefore, $f_{\sigma\eta}$ and $\varphi$ are bijective. We will now prove that $\varphi$ is an isomorphism of ideals by showing that $\varphi(e_{\zeta\xi}^2)= e_{\zeta'\xi'}^2$. Observe that since $\mathfrak{D}_{\zeta\xi}= \mathfrak{D}_{\zeta'\xi'}$ and $(\zeta,\xi)_{\mathfrak{D}_{\zeta\xi}} = (\zeta',\xi')_{\mathfrak{D}_{\zeta'\xi'}}$, it follows that  if $(\omega,\varpi) \in \Omega_{\zeta\xi}^2$, then  $ (\omega',\varpi'):=f_{\sigma\eta}(\omega,\varpi) \in \Omega_{\zeta'\xi'}^2 = f_{\sigma\eta}(\Omega_{\zeta\xi}^2)$ is such that 
 \begin{equation*} \label{eq:omega.local.comparison}
    \omega_{\operatorname{cl}_\Phi(\mathfrak{D}_{\zeta\xi})}=\omega'_{\operatorname{cl}_\Phi(\mathfrak{D}_{\zeta'\xi'})} \quad \text{and} \quad \varpi_{\operatorname{cl}_\Phi(\mathfrak{D}_{\zeta\xi})}=\varpi'_{\operatorname{cl}_\Phi(\mathfrak{D}_{\zeta'\xi'})}.
 \end{equation*}
 
 By the equations above and by the $\mathscr{F}_A$-mensurability of $\Phi_A$, one has that, for all $\omega \in \Omega_{\zeta\xi}$, 
 \begin{align*}
    H_{\mathfrak{D}_{\zeta\xi}}^\Phi(\omega)&= \sum_{A \subseteq \operatorname{cl}_\Phi(\mathfrak{D}_{\zeta\xi}), ~A \cap \mathfrak{D}_{\zeta\xi}\neq \emptyset}\Phi_A(\omega)\\
    &= \sum_{A \subseteq \operatorname{cl}_\Phi(\mathfrak{D}_{\zeta'\xi'}), ~A \cap \mathfrak{D}_{\zeta'\xi'} \neq \emptyset}\Phi_A(\omega') =H_{\mathfrak{D}_{\zeta'\xi'}}^\Phi(\omega').
 \end{align*}

 Therefore, $\mathsf{c}_{\zeta\xi,\omega\varpi} = \mathsf{c}_{\zeta'\xi',\omega'\varpi'}$ for all $\zeta, \xi \in \mathtt{E}^\sigma$ and every $\omega,\varpi \in \Omega_{\zeta\xi}$, an it completes the proof.
\end{proof}

We present below an example of two fertile ideals of $\mathcal{E}_M$ that are not isomorphic when the interaction potential has infinite range.

\begin{example}(Fertile Markov ideals are not necessarily isomorphic)
    Let us consider $\mathbb{N}=\{1, 2, \dots\}$ and fix $\mathbb{L}=\mathbb{N}_0:=\mathbb{N} \cup\{0\}$. Set $S=\{0,1\}$ to be the spin set and $\mathscr{C}_{\blacktriangle}= \{\mathbb{N}_0\}$ the partition with an unique cluster. Consider $\Phi$ such that
    \[\Phi_A(\sigma):= \left\{\begin{array}{cl}
         \dfrac{1}{n^2}\sigma(n)\sigma(0),& \text{if } A = \{0, n\} \text{ with } n \in \mathbb{N}; \\
         0, & \text{otherwise}.
    \end{array}\right.\]
    Thus, $\Phi_A$ is $\mathscr{F}_A$-measurable and
    \[H_\Lambda^\Phi(\sigma)= \left\{\begin{array}{rl}
        \sigma(0)\sum\limits_{n \in \Lambda} \dfrac{1}{n^2} \sigma(n), & \text{if } \Lambda \cap \{0\} = \emptyset;\\
        \sigma(0)\sum\limits_{n \in \mathbb{N}} \dfrac{1}{n^2}\sigma(n), & \text{ otherwise}.
    \end{array} \right.\]

    Note that, for all $\Lambda \in \mathcal{L}$ and $\sigma \in \Omega$, one has that $0 \leq H_\Lambda^\Phi(\sigma) < 2$. Therefore $\Phi$ is an admissible potential and there exists a Gibbs measure $\mu \in \mathscr{G}(\Phi)$. One can easily see that $\Phi$ does not have finite range because $\mathbf{n}_0^\Phi= \mathbb{N}$ and $\mathcal{G}_\Phi$ is an infinite star graph.
    
    Let us fix $\zeta, \xi' \in \Omega$ such that
    \[\zeta \equiv 0 \quad \text{and} \quad \xi' \equiv 1.\]

    We will show that $\mathtt{F}_{\zeta\zeta}$ is not isomorphic to $\mathtt{F}_{\xi'\xi'}$ as subalgebras of $\mathcal{E}_M(\mathscr{C}_{\blacktriangle},\mu,\Phi,\Omega)$. Set $\eta' := \zeta_{\{0\}}\xi'_{\mathbb{N}}$. Then $\Omega_{\xi'\eta'} = \{\xi',\eta'\}$ and
    \[c_{\xi'\eta',\xi'} = \frac{1}{1+ h_{\{0\}}^\Phi(\eta')/h_{\{0\}}^\Phi(\xi')} = \frac{1}{1+ \exp(\sum_{n = 1}^{+\infty}{1}/{n^2})}.\]

    Suppose, by contradiction, that there exist $\xi,\eta \in \mathtt{E}^{\zeta}$ such that $c_{\xi\eta, \sigma} = c_{\xi'\eta', \xi'}$ with $\sigma \in \Omega_{\xi\eta}$. Since $\mathscr{C}_{\blacktriangle}$ has an unique cluster, $\sigma \in\{\xi, \eta\}$. Hence, it suffices to find $\xi, \eta \in \mathtt{E}^\zeta$ such that $H_{\mathfrak{D}_{\xi\eta}}(\xi)- H_{\mathfrak{D}_{\xi\eta}}(\eta) = \pm\sum_{n=1}^{+\infty} \frac{1}{n^2}$. However, \[\xi = \xi'_{\Lambda'}\zeta_{(\Lambda')^c}\text{ and }\eta = \xi'_{\Lambda''}\zeta_{(\Lambda'')^c} \quad \text{ with } \Lambda',\Lambda'' \in \mathcal{L}.\]
    Thus, $\mathfrak{D}_{\xi\eta} = \Lambda' \triangle \Lambda''$. If $0 \in \mathfrak{D}_{\xi\eta}$, then
    \[\big\vert H_{\mathfrak{D}_{\xi\eta}}(\xi)- H_{\mathfrak{D}_{\xi\eta}}(\eta)\big\vert = \left\vert\sum_{n \in \Lambda' \setminus \Lambda''} \frac{1}{n^2}\xi(0) - \sum_{m \in \Lambda''\setminus\Lambda'} \frac{1}{m^2}\eta(0)\right\vert <\sum_{n=1}^{+\infty} \frac{1}{n^2}.\]
    Now, if $0 \not\in \mathfrak{D}_{\xi\eta}$, then \[H_{\mathfrak{D}_{\xi\eta}}(\xi)- H_{\mathfrak{D}_{\xi\eta}}(\eta) = \big(\xi(0)-\eta(0)\big)\sum_{n=1}^{+\infty} \frac{1}{n^2} =0.\]
    
    Therefore, for all $\xi,\eta \in \mathtt{E}^{\zeta}$ and every $\sigma \in \Omega_{\xi\eta}$, $c_{\xi'\eta', \xi'} < c_{\xi\eta, \sigma}$. It implies that
    \[\mathsf{c}_{\xi'\eta',\xi'\xi'} \neq c_{\xi\eta, \sigma}c_{\xi\eta, \omega}\] for all $\xi,\eta \in \mathtt{E}^\zeta$ and all $\sigma,\omega \in \Omega_{\xi\eta}$. We conclude that  $\mathtt{F}_{\zeta\zeta}$ and  $\mathtt{F}_{\xi'\xi'}$ are not isomorphic. \hfill $\blacktriangleleft$ 
\end{example}

The following propositions are direct consequences of theorems from Section \ref{sec:infinite.dim.genetic.algebras}. We chose to write the results for the Markov evolution algebra $\mathcal{E}_M$, but they are also true for $\mathcal{E}$. Before stating the propositions, we define that a Markov $\mathscr{C}$-evolution algebra $\mathcal{E}_m$ is \textit{$\tau$-isomorphic to} $\mathcal{E}_M'$ when the linear map such that $\phi(e_{\sigma\eta})=e_{\tau\sigma\tau\eta}'$ determines an isomorphism of algebras. 

\begin{proposition} \label{prop:evo.T.equiv}
    Let $\tau \in \mathcal{T}$ and let $\mathcal{E}_M = \mathcal{E}_M(\mathscr{C},\mu,\Phi,\Omega)$ and $\mathcal{E}_M'= \mathcal{E}_M\big(\tau(\mathscr{C}),\mu',\Psi,\Omega\big)$ be two evolution Gibbs algebras. If $\Phi\sim \tau^{-1}(\Psi)$, then the algebra $\mathcal{E}_M$ is $\tau$-isomorphic to $\mathcal{E}_M'$. Moreover, the converse holds when $\mathscr{C}=\mathscr{C}_{\odot}$ is the set of atomic clusters. 
\end{proposition}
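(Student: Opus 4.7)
The plan is to reduce the proposition to \cref{thm:genetic.T.equiv.potentials} by exploiting the factorization $\mathsf{c}_{\sigma\eta,\zeta\xi}=c_{\sigma\eta,\zeta}\,c_{\sigma\eta,\xi}$ built into the definition of $\mathcal{E}_M$. This relation (also appearing in \cref{lm:finite.evo.alg.equiv} and \cref{thm:evolution.iso.tensor}) lets one pass back and forth between statements about an evolution Gibbs algebra and statements about its underlying genetic Gibbs algebra.

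For the forward direction, first verify that $\phi(e_{\sigma\eta})=e'_{\tau\sigma\,\tau\eta}$ is well defined as a map $\mathcal{E}_M\to\mathcal{E}_M'$. Since $\tau\in\mathcal{T}$ acts as a site permutation composed with spin bijections at each site, $\tau_\ast\mathfrak{D}_{\sigma\eta}=\mathfrak{D}_{\tau\sigma\,\tau\eta}$, so fertility is preserved and $\mathfrak{N}$ maps into $\mathfrak{N}'$; moreover $\tau\Om_{\sigma\eta}=\Om_{\tau\sigma\,\tau\eta}$ under the cluster partition $\tau(\mathscr{C})$. By the forward part of \cref{thm:genetic.T.equiv.potentials}, the hypothesis $\Phi\sim\tau^{-1}(\Psi)$ yields $c_{\sigma\eta,\zeta}=c'_{\tau\sigma\,\tau\eta,\tau\zeta}$ for every $\zeta\in\Om_{\sigma\eta}$. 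Consequently
\[\mathsf{c}_{\sigma\eta,\zeta\xi}=c_{\sigma\eta,\zeta}\,c_{\sigma\eta,\xi}=c'_{\tau\sigma\,\tau\eta,\tau\zeta}\,c'_{\tau\sigma\,\tau\eta,\tau\xi}=\mathsf{c}'_{\tau\sigma\,\tau\eta,\tau\zeta\,\tau\xi}.\]
Since $\phi$ sends distinct-parent products to zero on both sides (because $(\sigma,\eta)\ne(\sigma',\eta')$ iff $(\tau\sigma,\tau\eta)\ne(\tau\sigma',\tau\eta')$), multiplicativity on the basis is immediate, and $\phi$ is an algebra isomorphism.

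For the converse, assume $\mathscr{C}=\mathscr{C}_\odot$ and observe that $\tau(\mathscr{C}_\odot)=\mathscr{C}_\odot$ since $\tau_\ast$ is a bijection on $\mathbb{L}$. If the linear extension of $\phi(e_{\sigma\eta})=e'_{\tau\sigma\,\tau\eta}$ is an isomorphism $\mathcal{E}_M\to\mathcal{E}_M'$, expanding $\phi(e_{\sigma\eta}^2)=\phi(e_{\sigma\eta})^2$ and comparing coefficients on the basis gives $\mathsf{c}_{\sigma\eta,\zeta\xi}=\mathsf{c}'_{\tau\sigma\,\tau\eta,\tau\zeta\,\tau\xi}$ for all $(\zeta,\xi)\in\Om_{\sigma\eta}^2$. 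Specializing to $\xi=\zeta$ yields $c_{\sigma\eta,\zeta}^2=(c'_{\tau\sigma\,\tau\eta,\tau\zeta})^2$; since all structural coefficients lie in $(0,1)$, taking the positive square root gives $c_{\sigma\eta,\zeta}=c'_{\tau\sigma\,\tau\eta,\tau\zeta}$. This is precisely the statement that the underlying genetic algebras $\mathcal{A}(\mathscr{C}_\odot,\mu,\Phi,\Om)$ and $\mathcal{A}(\mathscr{C}_\odot,\mu',\Psi,\Om)$ are $\tau$-isomorphic, so the converse part of \cref{thm:genetic.T.equiv.potentials} (which applies exactly because $\mathscr{C}=\mathscr{C}_\odot$) delivers $\Phi\sim\tau^{-1}(\Psi)$.

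The only real obstacle is bookkeeping: confirming that $\phi$ actually lands in $\mathcal{E}_M'$ rather than merely $\mathcal{E}'$, that $\tau(\mathscr{C}_\odot)$ truly coincides with $\mathscr{C}_\odot$ so the two algebras in the converse share a common cluster structure, and that the genetic-algebra isomorphism extracted from $\phi$ has exactly the form $e_\sigma\mapsto e'_{\tau\sigma}$ required to invoke the converse of \cref{thm:genetic.T.equiv.potentials}. Each of these reduces to the compatibility $\tau_\ast\mathfrak{D}_{\sigma\eta}=\mathfrak{D}_{\tau\sigma\,\tau\eta}$ already exploited in the proof of \cref{thm:genetic.T.equiv.potentials}, so no new analytic difficulty arises.
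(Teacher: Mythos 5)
Your proposal is correct and follows essentially the same route as the paper: the forward direction falls out of the factorization $\mathsf{c}_{\sigma\eta,\zeta\xi}=c_{\sigma\eta,\zeta}c_{\sigma\eta,\xi}$ together with the forward part of \cref{thm:genetic.T.equiv.potentials}, and the converse reduces the squared coefficient identity to the genetic one by taking a positive square root (the paper recovers $k_{\zeta\eta}$ from $k_{\zeta\eta}^2$, while you specialize to the diagonal $\xi=\zeta$ to get $c_{\sigma\eta,\zeta}=c'_{\tau\sigma\tau\eta,\tau\zeta}$ directly --- the same step in slightly different clothing). Your explicit checks that $\tau(\mathscr{C}_\odot)=\mathscr{C}_\odot$ and that $\phi$ lands in $\mathcal{E}_M'$ are sound and, if anything, more careful than the paper's one-line appeal to ``the same arguments.''
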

\begin{proof}
    The proof follows repeating exactly the same arguments of \cref{thm:genetic.T.equiv.potentials}. The first part is straightforward from the definition of the structure coefficients of $\mathcal{E}_M$. For the second half we first obtain that there exists $k_{\zeta\eta}^2>0$ such that, for all $\sigma,\omega \in \Omega_{\zeta\eta}$,
    \begin{equation*} 
        k_{\zeta\eta}^2:=\frac{h_{\mathfrak{D_{\zeta \eta}}}^\Phi(\sigma)h_{\mathfrak{D_{\zeta \eta}}}^\Phi(\omega)}{h_{\mathfrak{D_{\tau\zeta \tau\eta}}}^{\Psi}(\tau\sigma)h_{\mathfrak{D_{\tau\zeta \tau\eta}}}^{\Psi}(\tau\omega)}=\left(\frac{\sum\limits_{\xi \in \Omega_{\zeta\eta}}h_{\mathfrak{D_{\zeta \eta}}}^\Phi(\xi)}{\sum\limits_{\xi' \in \Omega_{\tau\zeta\tau\eta}}h_{\mathfrak{D_{\tau\zeta \tau\eta}}}^{\Psi}(\xi')}\right)^2.
    \end{equation*}
    Thus we recover $k_{\zeta\eta}$ defined in \eqref{eq:k.offspring} by the equation above and the proposition holds.
\end{proof}

\begin{corollary}[Stability under phase transition] \label{cor:evo.stability.phase.transition}
   Let $\mu, \mu' \in \mathscr{G}(\Phi)$ be Gibbs measures on $\Omega$. Then the algebra $\mathcal{E}_M(\mathscr{C},\mu,\Phi,\Omega)$ is isomorphic to $\mathcal{E}_M(\mathscr{C},\mu',\Phi,\Omega)$.
\end{corollary}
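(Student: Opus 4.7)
The plan is to reduce the statement directly to Proposition 4.5, exactly as Corollary 3.6 reduces to Theorem 3.5 in the genetic setting. I would take $\tau = \operatorname{id}_{\Om}$, which lies in $\mathcal{T}$ via $\tau_\ast = \operatorname{id}_{\mathbb{L}}$ and $\tau_x = \operatorname{id}_S$ for every $x \in \mathbb{L}$, together with $\Psi = \Phi$, so that $\tau^{-1}(\Psi) = \Phi$ and the equivalence hypothesis $\Phi \sim \tau^{-1}(\Psi)$ holds trivially (since $H_\Lambda^{\Phi - \Phi} \equiv 0$ is $\mathscr{T}_\Lambda$-measurable for every $\Lambda \in \mathcal{L}$). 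Moreover $\tau(\mathscr{C}) = \mathscr{C}$. Proposition 4.5 then yields a $\tau$-isomorphism between $\mathcal{E}_M(\mathscr{C},\mu,\Phi,\Om)$ and $\mathcal{E}_M(\mathscr{C},\mu',\Phi,\Om)$ given by the linear extension of $e_{\sigma\eta} \mapsto e_{\sigma\eta}'$, which is exactly the desired conclusion.

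Conceptually, the corollary is forced by the shape of the structural coefficients. Unpacking the definition one has
\[
\mathsf{c}_{\sigma\eta,\zeta\xi} \;=\; c_{\sigma\eta,\zeta}\,c_{\sigma\eta,\xi} \;=\; \frac{h_{\mathfrak{D}_{\sigma\eta}}^\Phi(\zeta)\,h_{\mathfrak{D}_{\sigma\eta}}^\Phi(\xi)}{\Bigl(\sum_{\omega \in \Om_{\sigma\eta}} h_{\mathfrak{D}_{\sigma\eta}}^\Phi(\omega)\Bigr)^{2}},
\]
which depends only on the Boltzmann factors of $\Phi$ restricted to the finite discrepancy set $\mathfrak{D}_{\sigma\eta}$; no reference to the choice of $\mu \in \mathscr{G}(\Phi)$ survives the cancellation inherent in the Gibbs formalism. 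Consequently the two algebras carry literally the same multiplication table on their standard bases, and the identity assignment on indices is automatically an algebra isomorphism — this is exactly what Proposition 4.5 delivers when fed $\tau = \operatorname{id}$.

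There is no genuine obstacle here; the proof amounts to verifying the trivial checks $\operatorname{id} \in \mathcal{T}$ and $\Phi \sim \Phi$, and then quoting Proposition 4.5. The interest of the statement is conceptual rather than technical: it records that even when $\Phi$ exhibits a phase transition (so $|\mathscr{G}(\Phi)| > 1$), the Markov evolution Gibbs algebra is an invariant of the interaction potential alone and cannot distinguish between the (possibly many) extremal or mixed equilibrium states compatible with $\Phi$.
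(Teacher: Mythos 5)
Your proof is correct and matches the paper's intended derivation: the corollary is stated as an immediate consequence of the preceding proposition, obtained exactly by specializing to $\tau=\operatorname{id}\in\mathcal{T}$ and $\Psi=\Phi$, so that $\Phi\sim\tau^{-1}(\Psi)$ holds trivially and the structural coefficients $\mathsf{c}_{\sigma\eta,\zeta\xi}=c_{\sigma\eta,\zeta}c_{\sigma\eta,\xi}$, depending only on the Boltzmann factors $h^\Phi_{\mathfrak{D}_{\sigma\eta}}$, coincide for $\mu$ and $\mu'$. Your closing observation about why the choice of $\mu\in\mathscr{G}(\Phi)$ cannot survive in the coefficients is exactly the point the paper makes when discussing the analogous genetic-algebra corollary.
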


\begin{proposition}[Evolution algebras generated by products of Gibbs measures]
    Let $\{\mathbb{L}_i\}_{i=1}^n$ be a sequence of countable sets such that, for each $i\in \{1, \dots, n\}$, $\mathscr{C}_i$ is a partition of $\mathbb{L}_i$ associated with a Gibbs measure $\mu_i \in \mathscr{G}(\Phi^i)$ on $\Omega_i = S^{\mathbb{L}_i}$ with $S$ a fixed finite set of spins.
    
    Consider the evolution Gibbs algebras $\mathcal{E}_{M,i}:= \mathcal{E}_{M}(\mathscr{C}_i,\mu_i,\Phi^i,\Omega_i)$ for  all $i \in \{1, \dots, n\}$,  and \[\mathcal{E}_M=\mathcal{E}_M\left(\bigsqcup_{i=1}^n \mathscr{C}_i,\bigotimes_{i=1}^n\mu_i,\bigoplus_{i=1}^n\Phi^i,\prod_{i=1}^n\Omega_i\right).\] Then $\mathcal{E}$ is isomorphic to the tensor algebra $\bigotimes_{i =1}^n \mathcal{E}_{M,i}$ equipped with the ordinary product.
\end{proposition}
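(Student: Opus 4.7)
The plan is to proceed in direct analogy with \cref{thm:genetic.product.v.tensor.alg} and to leverage its conclusions. First, I would invoke the well-definedness already established in the proof of \cref{thm:genetic.product.v.tensor.alg}: the disjoint union $\mathscr{C}:=\bigsqcup_{i=1}^n\mathscr{C}_i$ is a partition of $\mathbb{L}:=\bigsqcup_{i=1}^n\mathbb{L}_i$; the potential $\Phi:=\bigoplus_{i=1}^n\Phi^i$ is admissible; and $\mu:=\bigotimes_{i=1}^n\mu_i\in\mathscr{G}(\Phi)$. Therefore the algebra $\mathcal{E}_M=\mathcal{E}_M(\mathscr{C},\mu,\Phi,\prod_i\Om_i)$ is well-defined.

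Next, I would define $\varphi:\mathcal{E}_M\to\bigotimes_{i=1}^n\mathcal{E}_{M,i}$ by linear extension of
\[
\varphi(e_{\sigma\eta}) := e_{\sigma_1\eta_1}\otimes\cdots\otimes e_{\sigma_n\eta_n}
\]
for $\sigma=\sigma_1\cdots\sigma_n$ and $\eta=\eta_1\cdots\eta_n$ with $\sigma_i,\eta_i\in\Om_i$. The map is well-defined on $\mathfrak{B}_{\Om^2}\setminus\mathfrak{N}$: since $\mathfrak{D}_{\sigma\eta}=\bigsqcup_{i=1}^n\mathfrak{D}_{\sigma_i\eta_i}$, we have $\mathfrak{D}_{\sigma\eta}\in\mathcal{L}$ if and only if $\mathfrak{D}_{\sigma_i\eta_i}\in\mathcal{L}_i$ for every $i$. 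Consequently $\varphi$ is a bijection between the standard bases and hence a linear isomorphism of $\mathbb{K}$-modules.

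The multiplicativity of $\varphi$ reduces to the factorization of the structural coefficients. From the proof of \cref{thm:genetic.product.v.tensor.alg} we have $\Om_{\sigma\eta}=\prod_{i=1}^n\Om_{\sigma_i\eta_i}$ and the identity
\[
c_{\sigma\eta,\zeta}=\prod_{i=1}^n c_{\sigma_i\eta_i,\zeta_i}\qquad\text{for every }\zeta\in\Om_{\sigma\eta},
\]
which is a consequence of \eqref{eq:equality.product.measure} together with the definition of the genetic coefficients. Therefore
\[
\mathsf{c}_{\sigma\eta,\zeta\xi}=c_{\sigma\eta,\zeta}\,c_{\sigma\eta,\xi}=\prod_{i=1}^n c_{\sigma_i\eta_i,\zeta_i}\,c_{\sigma_i\eta_i,\xi_i}=\prod_{i=1}^n\mathsf{c}_{\sigma_i\eta_i,\zeta_i\xi_i}.
\]
Distributing the tensor product then gives
\[
\varphi(e_{\sigma\eta}\cdot e_{\sigma\eta})=\sum_{(\zeta,\xi)\in\Om_{\sigma\eta}^2}\prod_{i=1}^n\mathsf{c}_{\sigma_i\eta_i,\zeta_i\xi_i}\,e_{\zeta_1\xi_1}\otimes\cdots\otimes e_{\zeta_n\xi_n}=\bigotimes_{i=1}^n e_{\sigma_i\eta_i}^2=\varphi(e_{\sigma\eta})\cdot\varphi(e_{\sigma\eta}).
\]
For $(\sigma,\eta)\neq(\sigma',\eta')$ there exists an index $i$ with $(\sigma_i,\eta_i)\neq(\sigma'_i,\eta'_i)$, so $e_{\sigma_i\eta_i}\cdot e_{\sigma'_i\eta'_i}=0$ in $\mathcal{E}_{M,i}$ and thus $\varphi(e_{\sigma\eta})\cdot\varphi(e_{\sigma'\eta'})=0=\varphi(e_{\sigma\eta}\cdot e_{\sigma'\eta'})$. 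Bilinear extension then yields that $\varphi$ is an algebra isomorphism.

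I do not expect a serious obstacle here; the argument is essentially bookkeeping on top of \cref{thm:genetic.product.v.tensor.alg}. The only mildly delicate point is making sure the Markov restriction is preserved under the product—this amounts to the observation that $\mathfrak{D}_{\sigma\eta}$ is finite exactly when each component $\mathfrak{D}_{\sigma_i\eta_i}$ is, which holds because the index set $\{1,\dots,n\}$ is finite.
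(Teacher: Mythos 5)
Your proposal is correct and follows essentially the same route as the paper: both reuse the well-definedness and the factorization identities from \cref{thm:genetic.product.v.tensor.alg}, define $\varphi$ on the standard basis by $e_{\sigma\eta}\mapsto e_{\sigma_1\eta_1}\otimes\cdots\otimes e_{\sigma_n\eta_n}$, and verify multiplicativity via $\mathsf{c}_{\sigma\eta,\zeta\xi}=\prod_i\mathsf{c}_{\sigma_i\eta_i,\zeta_i\xi_i}$ together with the vanishing of off-diagonal products. Your explicit remarks on the finiteness of $\mathfrak{D}_{\sigma\eta}$ and the off-diagonal case are slightly more detailed than the paper's, but the argument is the same.
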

\begin{proof}
    We verify this results following similar steps as in the proof of \cref{thm:genetic.product.v.tensor.alg}. Let us recall the same notation defined in \cref{thm:genetic.product.v.tensor.alg}. 
    
    By \eqref{eq:equality.product.measure} and by the commutativity and associativity of the product measure, for all $\sigma, \xi \in \Omega$, each $\Lambda \in \mathcal{L}$
    \[\mu^{\otimes 2}(\sigma\mid\xi_{\Lambda^c})= \left(\bigotimes_{i=1}^n \mu_i^{\otimes 2}\right) (\sigma\mid\xi_{\Lambda^c}).\]

    Let $\sigma,\eta,\sigma',\eta' \in \Omega$ and write $\sigma= \sigma_1 \cdots \sigma_n$ and $\eta= \eta_1 \cdots \eta_n$ with $\sigma_i,\eta_i\in \Omega_i$ for all $i \in \{1, \dots, n\}$. Consider $\varphi: \mathcal{E}_M \to \bigotimes_{i=1}^n\mathcal{E}_{M,i}$ to be given by linear extension of \[\varphi(e_{\sigma\eta}) = e_{\sigma_1\eta_1} \otimes \cdots \otimes e_{\sigma_n \eta_n}.\]

    Now, one can easily see that
    \[
        \varphi(e_{\sigma\eta} \cdot e_{\sigma'\eta'})= \left\lbrace \begin{array}{cl}
            \sum\limits_{(\zeta,\xi) \in \Omega_{\sigma\eta}^2} \prod_{i=1}^n\mathsf{c}_{\sigma_i \eta_i, \zeta_i\xi_i} ~\varphi(e_{\zeta\xi}), & \text{ if } \sigma = \sigma' \text{ and } \eta=\eta';\\
        0, & \text{otherwise}. 
        \end{array} \right.
    \]

    Therefore, $\varphi(e_{\sigma\eta} \cdot e_{\sigma'\eta'}) = \varphi(e_{\sigma\eta}) \cdot \varphi(e_{\sigma'\eta'})$ for all $(\sigma,\eta), (\sigma',\eta') \in \Omega^2$ which implies that $\varphi$ is an isomorphism of algebras.
\end{proof}

The next proposition is similar to \cref{thm:finite.subalgebras}. Since \cref{thm:isomorphic.decomposition.E.M} ensures us that all fertile ideals are isomorphic when $\Phi$ has finite range, we state the next theorem only for one fertile ideal $\mathtt{F}_{\sigma\sigma}$.

\begin{proposition}[Finite-dimensional subalgebras of fertile ideals] \label{prop:finite.subalgebra}
    Consider $\mathtt{F}_{\sigma\sigma}$ to be a fertile ideal of $\mathcal{E}_M(\mathscr{C}, \mu,\Phi, \Omega)$ such that the interaction potential $\Phi$ has finite range. If $\mathtt{A}$ is a finite-dimensional subalgebra of $\mathtt{F}_{\sigma\sigma}$, then there exists $\Lambda \in \mathcal{L}$ such that $\mathtt{A}$ is isomorphic to a subalgebra of $\mathlcal{E}_f\big(\mathscr{C}\vert_\Lambda,\mu_\Lambda(\cdot\mid\xi),S^\Lambda\big)$ with $\xi \in \Omega$ arbitrary.

    The same holds for $\mathlcal{E}_f\big(\mathscr{C}\vert_{\Lambda'},\mu_{\Lambda'}(\cdot\mid\xi),S^{\Lambda'}\big)$ with $\Lambda' \in \mathcal{L}$ such that $\Lambda \subseteq \Lambda'$.
\end{proposition}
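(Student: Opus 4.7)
The plan is to lift the argument of \cref{thm:finite.subalgebras} from the genetic to the evolution setting, exploiting the factorisation $\mathsf{c}_{\zeta\xi,\omega\varpi} = c_{\zeta\xi,\omega}\,c_{\zeta\xi,\varpi}$ established right before \cref{lm:finite.evo.alg.equiv}. Since $\mathtt{A}$ is finite-dimensional, the set
\[\mathtt{E}_{\mathtt{A}} := \big\{(\zeta,\xi) \in (\mathtt{E}^\sigma)^2 : e_{\zeta\xi} \in \mathtt{A}\big\}\]
is finite and spans $\mathtt{A}$. Let $I_{\mathtt{A}}$ denote the (finite) set of configurations that appear as a coordinate of some pair in $\mathtt{E}_\mathtt{A}$; by the definition of $\mathtt{F}_{\sigma\sigma}$, every $\zeta' \in I_\mathtt{A}$ satisfies $\mathfrak{D}_{\sigma\zeta'} \in \mathcal{L}$. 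Set
\[\Lambda := \operatorname{cl}_\Phi\bigg(\bigcup_{\zeta' \in I_\mathtt{A}} \mathfrak{D}_{\sigma\zeta'}\bigg),\]
which belongs to $\mathcal{L}$ because $\Phi$ has finite range. For every $(\zeta,\xi) \in \mathtt{E}_\mathtt{A}$ and every $(\omega,\varpi) \in \Om_{\zeta\xi}^2$, \cref{lm:discrep} yields $\mathfrak{D}_{\omega\varpi} \subseteq \mathfrak{D}_{\zeta\xi} \subseteq \mathfrak{D}_{\sigma\zeta}\cup\mathfrak{D}_{\sigma\xi}$, so both $\operatorname{cl}_\Phi(\mathfrak{D}_{\zeta\xi})$ and $\operatorname{cl}_\Phi(\mathfrak{D}_{\omega\varpi})$ are contained in $\Lambda$.

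Second, fix any $\xi^\circ \in \Om$, set $f_\Lambda(\omega) := \omega_\Lambda\xi^\circ_{\Lambda^c}$, and extend to pairs by $\tilde f(\zeta,\xi) := \big(f_\Lambda(\zeta), f_\Lambda(\xi)\big)$. Define $\varphi: \mathtt{A} \to \mathlcal{E}_f\big(\mathscr{C}\vert_\Lambda, \mu_\Lambda(\cdot\mid\xi^\circ), S^\Lambda\big)$ by linear extension of $\varphi(e_{\zeta\xi}) := e_{\tilde f(\zeta,\xi)}$, identifying configurations on $\Lambda$ with their extensions by $\xi^\circ$ on $\Lambda^c$. Injectivity of $\tilde f$ on $\mathtt{E}_\mathtt{A}$ follows because every element of $I_\mathtt{A}$ is determined by its restriction to $\Lambda$ (as it agrees with $\sigma$ on $\Lambda^c$), so $\varphi$ is a linear embedding of $\mathtt{A}$ into $\mathlcal{E}_f$.

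The main technical step — and the expected obstacle — is multiplicativity. It suffices to check that $\varphi(e_{\zeta\xi}^2) = \varphi(e_{\zeta\xi})^2$ for each $(\zeta,\xi) \in \mathtt{E}_\mathtt{A}$, since $\varphi(e_{\zeta\xi}\cdot e_{\zeta'\xi'}) = 0 = \varphi(e_{\zeta\xi})\cdot\varphi(e_{\zeta'\xi'})$ for distinct pairs by construction. Using the $\mathscr{F}_A$-measurability of each $\Phi_A$ together with the fact that any $A \in \mathcal{L}$ meeting $\mathfrak{D}_{\zeta\xi}$ non-trivially with $\Phi_A \not\equiv 0$ must lie inside $\operatorname{cl}_\Phi(\mathfrak{D}_{\zeta\xi}) \subseteq \Lambda$, one obtains
\[H^\Phi_{\mathfrak{D}_{\zeta\xi}}(\omega) \;=\; H^\Phi_{\mathfrak{D}_{\tilde f(\zeta,\xi)}}\!\big(f_\Lambda(\omega)\big) \quad\text{for every } \omega \in \Om_{\zeta\xi},\]
where $\mathfrak{D}_{\tilde f(\zeta,\xi)} = \mathfrak{D}_{\zeta\xi}$ because both extensions share the padding $\xi^\circ$ outside $\Lambda$. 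Combining this with \cref{lm:finite.alg.equiv,lm:finite.evo.alg.equiv} and the product formula $\mathsf{c} = c\cdot c$ forces the structural coefficients of $\mathcal{E}_M$ on pairs in $\mathtt{E}_\mathtt{A}$ to coincide with the corresponding coefficients of $\mathlcal{E}_f$ under the identification $e_{\zeta\xi} \leftrightarrow e_{\tilde f(\zeta,\xi)}$. Hence $\varphi$ is an isomorphism of $\mathtt{A}$ onto a subalgebra of $\mathlcal{E}_f$. The extension to any $\Lambda' \in \mathcal{L}$ with $\Lambda \subseteq \Lambda'$ is then immediate, since the inclusions $\operatorname{cl}_\Phi(\mathfrak{D}_{\zeta\xi}) \subseteq \Lambda \subseteq \Lambda'$ continue to hold and every step above goes through verbatim with $\Lambda'$ in place of $\Lambda$.
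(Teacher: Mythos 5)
Your proposal is correct and follows essentially the same route as the paper: choose a finite window $\Lambda\in\mathcal{L}$ containing the $\Phi$-closures of all relevant discrepancy sets (your $\operatorname{cl}_\Phi\big(\bigcup_{\zeta'}\mathfrak{D}_{\sigma\zeta'}\big)$ versus the paper's $\bigcup_{(\zeta,\eta)}\operatorname{cl}_\Phi(\mathfrak{D}_{\zeta\eta})$, which it contains), pad by a fixed boundary configuration outside, and use the $\mathscr{F}_A$-measurability of $\Phi_A$ together with $\mathsf{c}_{\zeta\xi,\omega\varpi}=c_{\zeta\xi,\omega}c_{\zeta\xi,\varpi}$ to match structural coefficients. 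The added detail on injectivity and on the vanishing of cross-products is consistent with, and slightly more explicit than, the paper's argument.
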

\begin{proof}
    Let $\mathtt{E}_f$ be the finite subset of $(\mathtt{E}^\sigma)^2$ such that $\mathfrak{B}_{\mathtt{E}_f}$ is a basis of $\mathtt{A}$. Fix
    \[\Lambda:= \bigcup_{(\zeta,\eta) \in \mathtt{E}_f} \operatorname{cl}_\Phi(\mathfrak{D}_{\zeta\eta})\]and let $\Lambda' \in \mathcal{L}$ be arbitrary such that $\Lambda \subseteq \Lambda'$. It suffices to show the result for $\mathlcal{E}_f := \mathlcal{E}_f\big(\mathscr{C}\vert_{\Lambda'},\mu_{\Lambda'}(\cdot\mid\xi),S^{\Lambda'}\big)$. Fix $f: \mathtt{E}_f \to \big(S^{\Lambda'}\big)^2$ to be given by $f(\zeta,\eta)= (\zeta,\eta)_{\Lambda}(\xi,\xi)_{\Lambda'\setminus\Lambda}$. Set $\varphi: \mathtt{A}\to \mathlcal{E}_f$ to be given by linear extension of
    \[
        \varphi(e_{\zeta\eta})= e_{f(\zeta,\eta)}
    \]

    We now proceed applying the same steps as in the proof of \cref{thm:finite.subalgebras} to verify that $\varphi$ is multiplicative. Hence, by the algebraic properties inherited from $\mathtt{A}$, $\operatorname{Im}(\varphi)$ is a subalgebra of $\mathlcal{E}_f$.
\end{proof}

\begin{remark*}
\cref{prop:evo.T.equiv,prop:finite.subalgebra} may seem more general than they are when we compare them with Theorem 3.3 of Rozikov and Tian \cite{rozikov2011}. However, they showed in \cite{rozikov2011} that the finite algebras generated by distinct measures $\mu$ and $\mu'$ are isomorphic as vector spaces, but not as algebras. The linear map used in their proof is not multiplicative. In fact, if we apply \cref{lm:finite.evo.alg.equiv,lm:evolution.basis.automorphism}, we verify that it is not always possible to have an automorphic change of basis of  $\mathlcal{E}_f(\mathscr{C},\mu,\Omega)$ that equates its structure coefficients to those of $\mathlcal{E}_f(\mathscr{C},\mu',\Omega)$ for an arbitrary Gibbs measure $\mu'$ on $\Omega$.  
\end{remark*}


\section{Final remarks and open questions}

In this article we managed to explore algebraic structures of genetic and evolution algebras whose structure constants are given in terms of configurations associated with a given Gibbs measure. We propose a definition that makes it possible to relate the infinite-dimensional algebras with the finite-dimensional one present in the literature. A valuable contribution of this work is that we successfully address a question made in \cite{rozikov2011} extending their construction to the infinite-dimensional case, dealing with the concerns highlighted in the introduction. However, some open problems remain.

We would like to pose some questions that may guide further studies on this topic:
\begin{enumerate}[(a)]
    \item Is it possible to modify the algebras preserving part of their properties to identify the phase transition phenomenon? We obtained in \cref{cor:stability.phase.transition,cor:evo.stability.phase.transition} that the algebras generated by measures in $\mathscr{G}(\Phi)$ are isomorphic. It is a consequence of the Hamel basis and infertility between distinct fertile classes.
    
    \item Techniques from functional analysis could be interesting to study more properties of the algebras. How do the algebras change when consider a Schauder basis for the fertile ideals? (recall that, by \cref{thm:genetic.decomposition,cor:evo.alg.decomposition}, the fertile ideals have countable basis). Indeed, in \cite {vidal2022hilbert} Vidal et al. considered such basis in order to provide an evolution algebra structure to a given Hilbert space.
    
    This question could be extended to other types of basis. \cref{thm:hom.A} may be useful for studies in that direction, but its result depends on the Hamel basis.

    \item Is it suitable to  apply consolidated techniques used for Gibbs measures (e.g., thermodynamic limit, perfect simulation, etc) to study genetic and evolution algebras?

    \item \cref{thm:genetic.T.equiv.potentials,prop:evo.T.equiv} establish a relation between equivalent potentials and isomorphic algebras. Is it possible to improve this result and exhibit characterization of the isomorphism?
    
    \item How to define similar algebras when $S$ infinite? Several properties of Gibbs measures applied in this article rely on the finiteness of $S$. One of the key aspects is that some consequences of Hammersley-Clifford Theorem are no longer valid. One difficulty is that we can not guarantee $c_{\zeta\eta,\sigma}>0$ when $\vert S\vert=\infty$. We believe that some progress may be achieved by studying infinite volume Gibbs measures that can be obtained as invariant measures of Gibbs sampler using a perfect simulation approach.
\end{enumerate}

\section{Declaration}
All authors declare that they have no conflicts of interest.

\section{Acknowledgements}
We sincerely thank the reviewers for their valuable comments and suggestions, which have helped improve the clarity and quality of this manuscript. This study was financed in part by the Coordenação de Aperfeiçoamento de Pessoal de Nível Superior - Brasil (CAPES) - Finance Code 001. It was also supported by grants \#2017/10555-0 and \#2019/19056-2 S\~ao Paulo Research Foundation (FAPESP).

\small
\bibliographystyle{abbrvnat}
\bibliography{references}

\end{document}